\newtheorem{lemma}{Lemma}
\newtheorem{corollary}{Corollary}
\newtheorem{theorem}{Theorem}
\newcommand{\myparagraph}[1]{\medskip\noindent{\textbf{#1}}}
\newcommand{\ccheck}[1]{\textcolor{red}{#1}}
\newcommand{\candset}[1]{\ensuremath{\mathsf{T}_{#1}}\xspace}
\newcommand{\one}{\ensuremath{\mathsf{1}}\xspace}
\newcommand{\onec}{\ensuremath{\mathsf{1C}}\xspace}
\newcommand{\oner}{\ensuremath{\mathsf{1R}}\xspace}
\newcommand{\twoc}{\ensuremath{\mathsf{2C}}\xspace}
\newcommand{\twor}{\ensuremath{\mathsf{2R}}\xspace}
\newcommand{\twocc}{\ensuremath{\mathsf{2Cc}}\xspace}
\newcommand{\twoch}{\ensuremath{\mathsf{2Ch}}\xspace}
\newcommand{\twoco}{\ensuremath{\mathsf{2Co}}\xspace}
\newcommand{\tworv}{\ensuremath{\mathsf{2Rv}}\xspace}
\newcommand{\twori}{\ensuremath{\mathsf{2Ri}}\xspace}
\newcommand{\card}[1]{\ensuremath{\mathsf{card}(#1)}}
\newcommand{\len}[1]{\ensuremath{\mathsf{len}(#1)}}
\newcommand{\cut}{\ensuremath{\ell}}
\newcommand{\ol}{\ensuremath{\overline}}
\newcommand{\mybd}[1]{\ensuremath{\partial #1}}
\newcommand{\myink}[1]{\ensuremath{\mathsf{Ink}(#1)}}
\newcommand{\mywidth}[1]{\ensuremath{\mathsf{Width}(#1)}}
\newcommand{\myint}[1]{\ensuremath{\mathsf{int}(#1)}}
\newcommand{\pt}{\ensuremath{\mathsf{P}}}
\newcommand{\qt}{\ensuremath{\mathsf{Q}}}
\newcommand{\rt}{\ensuremath{\mathsf{R}}}
\newcommand{\lt}{\ensuremath{\mathsf{L}}}
\newcommand{\gt}{\ensuremath{\mathsf{G}}}
\newcommand{\gtpr}{\ensuremath{\mathsf{G}'}}
\newcommand{\ut}{\ensuremath{\mathsf{U}}}
\newcommand{\rom}[1]{\uppercase\expandafter{\romannumeral #1\relax}}
\newcommand{\tpv}{\textsf{vt}-algo\xspace}
\newcommand{\mip}{\textsf{i}-partition\xspace}
\newcommand{\tp}{\textsf{t}-partition\xspace}
\newcommand{\vtp}{\textsf{vt}-partition\xspace}
\newcommand{\atp}{\textsf{at}-partition\xspace}
\newcommand{\kc}{\kappa}
\newcommand{\vcut}{\textsf{v}-cut\xspace}
\newcommand{\acut}{\textsf{a}-cut\xspace}
\newcommand{\vcuts}{\textsf{v}-cuts\xspace}
\newcommand{\acuts}{\textsf{a}-cuts\xspace}
\newcommand{\vgadget}{\textsf{v}-gadget\xspace}
\newcommand{\vgadgets}{\textsf{v}-gadgets\xspace}
\newcommand{\cgadget}{\textsf{c}-gadget\xspace}
\newcommand{\myth}{\textsf{TH}\xspace}
\newcommand{\plsat}{\textsf{P3SAT}\xspace}
\newcommand{\true}{\texttt{true}\xspace}
\newcommand{\false}{\texttt{false}\xspace}
\newcommand{\vwind}{\textsf{v}-windmill\xspace}
\newcommand{\hwind}{\textsf{h}-windmill\xspace}
\newcommand{\kcq}[1]{\ensuremath{\kappa_{\mathsf{\rom{#1}}}}}
\newcommand{\oq}[1]{\ensuremath{o_{\mathsf{\rom{#1}}}}}
\newcommand{\minrest}{\ensuremath{\mathsf{W}(p)}}
\newcommand{\qrest}{\ensuremath{\mathsf{Q}(p)}}
\newbox\ProofSym
\title{Rectangular Partitions of a Rectilinear Polygon\thanks{
This research were supported by the Institute of Information \& communications
Technology Planning \& Evaluation(IITP) grant funded by the Korea government(MSIT)
(No. 2017-0-00905, Software Star Lab (Optimal Data Structure and Algorithmic Applications in Dynamic Geometric Environment))
and (No. 2019-0-01906, Artificial Intelligence Graduate School Program(POSTECH)).}}
\author{Hwi Kim\thanks{Department of Computer Science and Engineering, Pohang University of Science and Technology, 
Pohang, Korea. \texttt{hwikim@postech.ac.kr}}
\and Jaegun Lee\thanks{Department of Convergence IT Engineering, Pohang University of Science and Technology, 
Pohang, Korea. \texttt{jagunlee@postech.ac.kr}}
\and Hee-Kap Ahn\thanks{Department of Computer Science and Engineering, Graduate School of Artificial Intelligence, 
Pohang University of Science and Technology, Pohang, Korea. \texttt{heekap@postech.ac.kr}}}
\begin{document}
\date{}
\maketitle

\begin{abstract}
    We investigate the problem of partitioning a 
    rectilinear polygon $P$ with $n$ vertices and no holes 
    into rectangles using disjoint line segments drawn inside $P$ 
    under two optimality criteria. 
    In the minimum ink partition, 
    the total length of the line segments drawn inside $P$ is minimized.
    We present an $O(n^3)$-time algorithm 
    using $O(n^2)$ space that returns a minimum ink partition of $P$. 
    In the thick partition, the minimum side length over all resulting rectangles is maximized.
    We present an $O(n^3 \log^2{n})$-time algorithm using $O(n^3)$ space 
    that returns a thick partition using line segments incident to vertices of $P$, 
    and an $O(n^6 \log^2{n})$-time algorithm using $O(n^6)$ space 
    that returns a thick partition using line segments incident to the boundary of $P$.
    We also show that if the input rectilinear polygon has holes, 
    the corresponding decision problem for the thick partition problem using line segments
    incident to vertices of the polygon is NP-complete.
    We also present an $O(m^3)$-time $3$-approximation algorithm for the minimum ink partition
    for a rectangle containing $m$ point holes.
\end{abstract}

\section{Introduction}
\label{sec:introduction}
Partitioning geometric objects into disjoint parts of certain simple shapes 
(such as triangles, quadrilaterals, convex polygons, or their higher-dimensional analogues)
is a fundamental problem, frequently arising in the analyses of geometric and combinatorial 
properties of geometric objects. 
A classic and typical example is 
the triangulation of a simple polygon in the plane~\cite{avis,chazelleT}. 
Geometric structures such as 
the Voronoi diagram~\cite{voronoi2,voronoi1} and the Delaunay triangulation~\cite{Del34}
partition the underlying space into regions based on proximity. 
There are various applications in chip manufacturing~\cite{liu}, geoinformatics~\cite{niu}, and pattern recognition~\cite{avis,suk}.

The problem of partitioning a \emph{rectilinear} polygon into \emph{rectangles} 
has attained attention in
computational geometry in the last decades, due to its real-world applications, 
including VLSI layout design~\cite{lopez,rivest,sato}
and image processing~\cite{ferrari,gourley,mohamed}.

In this paper, we study the problem of partitioning an axis-aligned 
rectilinear polygon $P$ 
into axis-aligned rectangles using disjoint open line segments 
under two optimality criteria.
In the \emph{minimum ink partition} (\mip, in short), we obtain a partition of $P$ into 
rectangles such that the total length of the line segments used in the partition is 
the minimum among all partitions of $P$ into rectangles.
Since the total length of the line segments is exactly half the sum of the perimeters of
the resulting rectangles minus half the perimeter of $P$, 
this partition minimizes the total perimeter of the resulting rectangles.
In the \emph{thick partition} (\tp, in short),
we obtain a partition of $P$ into rectangles  
such that the minimum side length of the rectangles in the partition 
is the maximum among all partitions of $P$ into rectangles.
If there are two or more such partitions, we break ties among them 
by favoring one with the fewest number of rectangles.

Both problems have applications in VLSI layout design.
A typical problem is to group VLSI circuits into 
several rectangle-shaped channels such that the channel-to-channel interaction 
is minimized. 
The amount of channel-to-channel interaction is known to be
proportional to the total length of the sides incident to two different channels~\cite{lingas}.
Another problem arises in etching VLSI masks by electron beams with 
a fixed minimum width. Then the mask is required to be partitioned 
into rectangles with side lengths at least the minimum width to avoid 
unnecessary overexposure~\cite{oRourke}.

\subsection{Previous works}
Lingas et al.~\cite{lingas} are perhaps the first who proposed 
the \mip problem. They gave a sketch of an $O(n^4)$-time 
algorithm using dynamic programming for rectilinear polygons with $n$ vertices and
no holes in the plane. 
Their algorithm has some flaws and does not work correctly for certain polygons,
which can be fixed by handling missing cases without increasing the time complexity.
For a rectilinear polygon containing holes, they showed that 
the corresponding decision problem 
for the \mip problem is strongly NP-complete.
From then on, approximation algorithms have been presented~\cite{gonzalezO,levcopoulos,lingasH}.

For the special case of partitioning a \emph{rectangle} containing $m$ point holes 
into rectangles that contain no holes in their interiors,
there are approximation algorithms using divide-and-conquer~\cite{gonzalezB,levcopoulos}, 
transformation~\cite{gonzalezA}, and dynamic programming~\cite{duM}. 
There is a polynomial-time approximation scheme (PTAS) for this problem~\cite{mitchell99}.

The \tp problem was studied by O'Rourke and Tewari~\cite{oRourke}. 
They conjectured that the problem is NP-hard if holes are allowed, 
and claimed
an $O(n^{42})$-time algorithm for rectilinear polygons with $n$ vertices 
without holes. 
When the line segments of a partition are restricted to be incident 
to polygon vertices (\emph{vertex incidence}),
they gave an $O(n^{5})$-time and $O(n^4)$-space algorithm, 
by using observations similar to the ones by Lingas et al.~\cite{lingas}.
When each line segment of a partition is restricted to be incident to 
the boundary of the input polygon (\emph{boundary incidence}),
their algorithm under the vertex incidence
can be used, with some modifications, for the problem with
$O(n^{10})$ running time. 

\subsection{Our results}
For a rectilinear polygon $P$ with $n$ vertices and no holes in the plane,
we present dynamic programming algorithms improving upon
the $O(n^4)$-time algorithm by Lingas et al.
for the \mip problem and the $O(n^{5})$-time algorithm by O'Rourke and Tewari 
for the \tp problem. 
Our \mip algorithm takes $O(n^3)$ time and uses $O(n^2)$ space. 
This algorithm can be extended to a $3$-approximation algorithm with $O(m^3)$ time
for partitioning a rectangle containing $m$ point holes into rectangles containing no holes in their interiors~\cite{gonzalezB}.
Our \tp algorithm takes $O(n^3 \log^2{n})$ time and $O(n^3)$ space
under the vertex incidence,
and $O(n^6 \log^2{n})$ time and $O(n^6)$ space 
under the boundary incidence. 
Finally, we show that if the input rectilinear polygon has holes, 
the corresponding decision problem of 
the \tp problem under the vertex incidence is NP-complete.

\myparagraph{Sketches of our algorithms.}
Our algorithms are based on the work by Lingas et al.~\cite{lingas}.
Their algorithm uses \emph{cutsets} (to be defined later), each consisting of disjoint open 
axis-aligned line segments, called \emph{cuts}, that induce a partition of $P$ into rectilinear subpolygons.
For a rectangle $R$ contained in a rectilinear polygon $Q$,
their algorithm defines a \emph{uni-rectangle partition} of $Q$ for $R$
to be a partition of $Q$ by a cutset $\lt$ into $R$ and other rectilinear subpolygons 
such that each cut $\cut\in\lt$ overlaps a side of $R$.
The algorithm separates rectangles one by one recursively 
from a rectilinear subpolygon $Q$ of $P$ by a uni-rectangle partition of $Q$. 
It maintains an invariant, \emph{the 2-cut property,} that
each subpolygon, except the rectangle, obtained from a uni-rectangle partition of every subpolygon $Q$ of $P$
has exactly one boundary chain consisting of 
at most two line segments contained in the interior of $P$. 

Our \mip algorithm also maintains the 2-cut property, but handles those subpolygons
efficiently by classifying them into four types 
and enumerating uni-rectangle partitions without duplicates while guaranteeing an \mip.
By the 2-cut property, there are $O(n^2)$ subpolygons to consider 
for a rectilinear polygon with $n$ vertices in our algorithm.
To obtain an \mip, our algorithm considers a number of uni-rectangle partitions and
chooses the one with the minimum total length of the cuts among the partitions.
From the type classification and tight analyses, 
we show that there are $O(n^3)$ uni-rectangle partitions to consider in total.
The length of the cuts in each uni-rectangle partition can be computed in $O(1)$ time, by maintaining some relevant information.
Our algorithm uses $O(1)$ space for partitioning a subpolygon, which corresponds to a subproblem in our algorithm. It also maintains $O(n)$ arrays of length $O(n)$
to partition subpolygons of a certain type efficiently.
Thus, our algorithm returns an \mip of $P$ in $O(n^3)$ time using $O(n^2)$ space.

For a rectangle containing $m$ point holes, 
Gonzalez and Zheng~\cite{gonzalezA} gave an $O(m^4)$-time 3-approximation algorithm
that transforms the rectangle into a \emph{weakly-simple polygon}\footnote{A polygon is weakly-simple if for every $\epsilon > 0$, its vertices can be perturbed by at most $\epsilon$ to obtain a simple polygon.} with $O(m)$ vertices and 
no point holes in $O(m^2)$ time. Then, it computes an \mip of the polygon in $O(m^4)$ time.
By replacing the \mip algorithm with
our $O(m^3)$-time algorithm, we can 
get a 3-approximation algorithm with $O(m^3)$ time for the problem.

We give a \tp algorithm for $P$ under the vertex incidence,
by modifying the \tp algorithm by O'Rourke and Tewari~\cite{oRourke}.
Our algorithm handles subpolygons efficiently by classifying them and enumerating 
uni-rectangle partitions without duplicates, while guaranteeing 
a \tp under the vertex incidence.
It uses certain coherence among uni-rectangle partitions
and compares them in $O(n^3 \log^2 n)$ time and $O(n^3)$ space in total.
This guarantees the desired time and space complexities of our \tp algorithm under 
the vertex incidence.

Our \tp algorithm under the vertex incidence 
can be used to compute a \tp under the boundary incidence with some modifications.
Under the boundary incidence,
there is an optimal cutset consisting of
cuts, each lying on one of $O(n^2)$ horizontal and vertical lines,
defined by pairs of vertices of $P$. 
Thus, our algorithm takes $O(n^6 \log^2 n)$ time and $O(n^6)$ space to compute a \tp under 
the boundary incidence.

Finally, for a rectilinear polygon $P$ with holes, we consider the decision problem $\myth(P, \delta, k)$
for a positive real value $\delta$ and a positive integer $k$,
determining whether there exists a rectangular partition $\pt$ 
consisting of at most $k$ rectangles with side lengths at least $\delta$ under the vertex incidence.
We show that $\myth$ is NP-hard, using a polynomial-time reduction from the planar 3-satisfiability (P3SAT) problem~\cite{lichtenstein1982planar}, which is known to be NP-complete.


This paper is organized as follows.
Section~\ref{preliminaries} provides terms, definitions, notations and a base lemma 
that are used throughout the paper.
Section~\ref{sec:DP_lingas} gives a review on the previous algorithms
for the \mip problem and the \tp problem.
We present our \mip algorithm in Section~\ref{sec:minink_mine}, 
and our \tp algorithm in Section~\ref{sec:thick_partition}. 
We conclude this paper with a few open problems in Section~\ref{sec:conclusion}.

\section{Preliminaries}
\label{preliminaries}
We denote by $P$ the input rectilinear polygon with $n$ vertices and no holes in the plane. 
We assume that $P$ is axis-aligned and it is given as a sequence of vertices in
counterclockwise order along its boundary. 
For a compact set $X$, we use $\myint{X}$ and $\mybd{X}$ to denote the interior and 
the boundary of $X$, respectively.
For any two points $p$ and $q$ in the plane, we denote by $pq$ the line segment connecting them,
and by $R_{pq}$ the axis-aligned rectangle with opposite corners $p$ and $q$.
We use $x(p)$ and $y(p)$ to denote the $x$-coordinate and the 
$y$-coordinate of a point $p$, respectively.
The grid induced by the lines, each extended from an edge of $P$,
is called the \emph{canonical grid} of $P$ and it is denoted by $\gt$.

\begin{figure}[ht]
    \begin{center}
      \includegraphics[width=.8\textwidth]{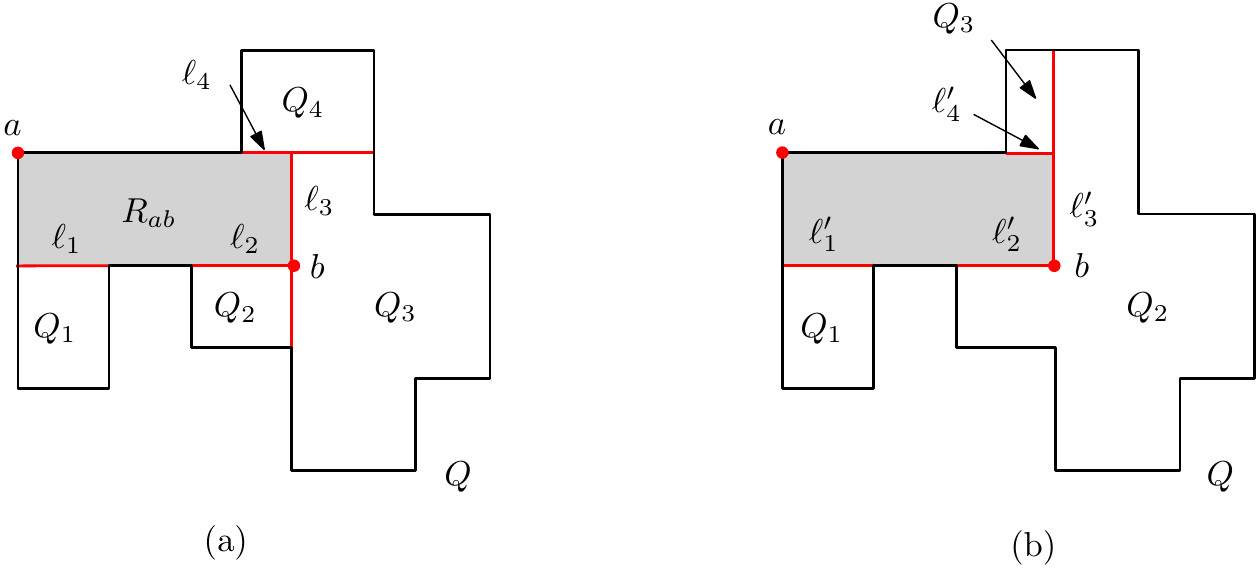}
    \end{center}
    \caption{(a) Partition $\qt = \{R_{ab}, Q_1, Q_2, Q_3, Q_4\}$ of $Q$ and 
    its cutset $\lt = \{\cut_1, \cut_2, \cut_3, \cut_4\}$. 
    $\lt$ determines $R_{ab}$ from $Q$.
    (b) Uni-rectangle partition $\ut = (Q, R_{ab}, \lt')$ with $\lt'=\{\cut'_1, \cut'_2, \cut'_3, \cut'_4\}$.
    $\lt'$ (and thus $\ut$) determines $R_{ab}$ from $Q$.
    }
    \label{fig:definitions}
\end{figure}

\myparagraph{Cutsets.}
A \emph{partition} of a rectilinear polygon is a set of interior-disjoint rectilinear subpolygons 
induced by a \emph{cutset} consisting of open axis-aligned line segments, called cuts, 
contained in the polygon such that the union of the subpolygons 
equals the polygon.
For any partition $\pt$ of a rectilinear polygon,
we denote by $\card{\pt}$ the number of subpolygons in $\pt$. 
We require the cuts in a cutset 
to be \emph{disjoint} and \emph{maximal} with respect to the corresponding partition.
For each partition of a rectilinear polygon $Q$ in Figure~\ref{fig:definitions},
the cuts are disjoint and maximal with respect to the partition.
 
For a subpolygon $Q'$ in a partition $\qt$ of 
a rectilinear polygon $Q$, we say the cutset of $\qt$ 
\emph{determines} $Q'$ from $Q$ if for every cut $\cut$ in the cutset of $\qt$,
$\cut\cap\mybd{Q'}$ is a (nondegenerate) line segment. 
In other words, every cut of the cutset contributes to the boundary of $Q'$.
We also say $\qt$ determines $Q'$ from $Q$ if the cutset of $\qt$ 
determines $Q'$ from $Q$.
Figure~\ref{fig:definitions} shows two cutsets 
that determine rectangle $R_{ab}$ from $Q$. 

\myparagraph{Uni-rectangle partitions.}
A \emph{rectangular partition} of a rectilinear polygon is a partition of the polygon 
into rectangles.
A \emph{uni-rectangle partition} of a rectilinear polygon $Q$ is a partition of $Q$
whose cutset determines a rectangle from $Q$.
Thus, for two points $a,b\in Q$ such that $R_{ab}$ is contained in $Q$,
and a uni-rectangle partition with cutset $\lt$ determining $R_{ab}$ from $Q$, 
we use $(Q, R_{ab}, \lt)$ to denote 
the uni-rectangle partition of $Q$ for $R_{ab}$ induced by $\lt$.
Observe that there can be more than one uni-rectangle partition 
whose cutset determines $R_{ab}$ from $Q$
as there can be more than one such cutset. 
See Figure~\ref{fig:definitions} for two different uni-rectangle partitions determining 
$R_{ab}$ from $Q$.

We use the following notations for the amounts of ink used in a rectangular partition,
in an \mip, and in a uni-rectangle partition of a polygon.
We denote by $\len{\cut}$ the length of a line segment $\cut$.
\begin{itemize}
\item $\myink{\rt} := \sum_{\cut \in \lt} \len{\cut}$, for a rectangular partition $\rt$ and its cutset $\lt$.
\item $\myink{Q}$ is the minimum of $\myink{\rt}$ over all rectangular partitions $\rt$ 
of a rectilinear polygon $Q$.
\item $\myink{Q, R_{ab}, \lt} := \sum_{\cut \in \lt} \len{\cut} + \sum_{Q' \in \ut} \myink{Q'}$, 
for a uni-rectangle partition $\ut = (Q, R_{ab}, \lt)$ of a rectilinear polygon $Q$.
\end{itemize}

For a rectilinear polygon $Q$, a rectangular partition $\rt$ realizing $\myink{Q} = \myink{\rt}$ is called an \mip of $Q$.
A uni-rectangle partition $\ut=(Q, R_{ab}, \lt)$ of $Q$ is \emph{optimal}
if there is an \mip $\rt$ of $Q$ consisting of $R_{ab}$
and the rectangles from some rectangular partitions of the subpolygons of $\ut$.
In this case, $\rt$ is an \emph{optimal refinement} of $\ut$.
Lemma~\ref{lem:uni} 
comes from the fact that each vertex of a rectilinear polygon is used as a corner of a rectangle 
in every rectangular partition of the polygon.
\begin{lemma}
    \label{lem:uni}
For a rectilinear polygon $Q$,
     $\myink{Q} =\min_{b} \min_{\lt} \myink{Q, R_{ab}, \lt}$
     for any fixed vertex $a$ of $Q$ and any point $b \in Q$.
\end{lemma}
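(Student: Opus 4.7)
The plan is to prove the equality by establishing both inequalities separately; each follows directly from the way uni-rectangle partitions and \mips interact with arbitrary rectangular partitions of $Q$.

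For the direction $\myink{Q}\le\myink{Q,R_{ab},\lt}$ for every admissible pair $(b,\lt)$, I would fix a uni-rectangle partition $(Q,R_{ab},\lt)$ with subpolygons $Q_1,\ldots,Q_k$ and pick an \mip $\rt_i$ of each $Q_i$. Taking the union of $R_{ab}$, the rectangles of $\rt_1,\ldots,\rt_k$, the cuts of $\lt$, and the cuts of the $\rt_i$'s yields a rectangular partition of $Q$ whose total ink is $\sum_{\cut\in\lt}\len{\cut}+\sum_i\myink{Q_i}=\myink{Q,R_{ab},\lt}$. Hence $\myink{Q}\le\myink{Q,R_{ab},\lt}$, and taking the minimum over $b$ and $\lt$ gives one inequality.

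For the reverse direction, I would start with an \mip $\rt^*$ of $Q$ with cutset $\lt^*$, so $\myink{Q}=\myink{\rt^*}$. Since $a$ is a vertex of $Q$, it is a corner of a unique rectangle $R$ of $\rt^*$, which has the form $R=R_{ab}$ for some point $b\in Q$. I would then coarsen $\rt^*$ by merging its rectangles outside $R_{ab}$ into the maximal rectilinear components $Q_1,\ldots,Q_k$ of $Q\setminus R_{ab}$, and let $\lt$ be the set of maximal open line segments forming the common boundary between distinct pieces of this coarsened partition. By construction every cut of $\lt$ meets $\mybd{R_{ab}}$ in a nondegenerate segment, so $\lt$ determines $R_{ab}$ from $Q$ and $(Q,R_{ab},\lt)$ is a valid uni-rectangle partition.

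It then suffices to do a length accounting. Every point of a cut of $\lt^*$ lies either on a cut of $\lt$ or in the interior of exactly one $Q_i$, so $\sum_{\cut\in\lt^*}\len{\cut}=\sum_{\cut\in\lt}\len{\cut}+\sum_i\myink{\rt^*_i}$, where $\rt^*_i$ is the rectangular partition of $Q_i$ induced by $\rt^*$. Since $\myink{Q_i}\le\myink{\rt^*_i}$, we obtain $\myink{Q,R_{ab},\lt}\le\myink{\rt^*}=\myink{Q}$, which is the remaining inequality. The main obstacle I anticipate is a careful verification that the coarsened $\lt$ really satisfies the paper's cutset definition; disjointness is inherited from $\lt^*$, and maximality follows from explicitly taking maximal segments, merging collinear cuts of $\lt^*$ along $\mybd{R_{ab}}$ and along shared boundaries between adjacent $Q_i$'s.
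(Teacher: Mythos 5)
Your overall strategy matches the paper's: the easy inequality via optimal refinements, and the reverse inequality by extracting a uni-rectangle partition for $R_{ab}$ from an \mip of $Q$. Your length accounting, and your verification that the coarsened cutset $\lt$ is a legitimate cutset determining $R_{ab}$ from $Q$, are sound; in fact you spell out this extraction step more explicitly than the paper does, which leaves it implicit.

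The gap is the sentence ``Since $a$ is a vertex of $Q$, it is a corner of a unique rectangle $R$ of $\rt^*$.'' This is precisely the nontrivial content of the lemma (the paper's entire proof is devoted to it), and you assert it without justification. When $a$ is a convex vertex the claim is immediate, and the rectangle is indeed unique. When $a$ is a reflex vertex it is neither immediate nor is the rectangle unique: the $270^\circ$ interior angle at $a$ is covered by rectangles contributing either $90^\circ$ (a corner at $a$) or $180^\circ$ ($a$ in the relative interior of a side), so the possible configurations are $90+90+90$ or $90+180$, and one must rule out a configuration with no $90^\circ$ contribution. The paper argues this by noting that if $a$ lay only in the interior of a side of some rectangle $R$ of the \mip, then $a$ would be a convex vertex of a component of $Q\setminus R$, forcing some rectangle of the partition to have a corner at $a$ after all; an angle count gives the same conclusion. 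Add that one observation (and drop the word ``unique,'' which is false at reflex vertices and unnecessary), and your proof is complete and essentially coincides with the paper's.
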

\begin{proof}
    Since any optimal refinement of a uni-rectangle partition is 
    a rectangular partition of $P$,
    we have $\myink{Q} \le \min_{b} \min_{\lt} \myink{Q, R_{ab}, \lt}$.
    
    Assume to the contrary that $\myink{Q} < \min_{b} \min_{\lt} \myink{Q, R_{ab}, \lt}$.
    Then there is no \mip $\rt$ of $Q$ such that
    $a$ is a corner of a rectangle in $\rt$. 
    If $a$ is a convex vertex of $Q$, there is a rectangle with a corner on $a$ 
    in every \mip, 
    a contradiction. So $a$ must be a reflex vertex of $Q$.
    By the assumption, $a$ lies in the interior of a side of a rectangle $R$ 
    in some \mip $\rt'$ of $Q$. Then there is a subpolygon in $Q\setminus R$
    such that $a$ is a convex vertex of the subpolygon. 
    This implies that there is a rectangle with a corner at $a$ in every rectangular 
    partition of the subpolygon, and there is a rectangle in $\rt'$ with a corner at $a$, a contradiction.
\end{proof}

\myparagraph{Vertex cuts and anchored cuts.} 
Cuts drawn inside a rectilinear polygon $Q$ can be classified into two types.
A cut is a vertex cut (\vcut, in short) if it is incident to a reflex vertex of $Q$.
A cut is an anchored cut (\acut, in short) if it is incident to the boundary of $Q$.
Note that every \vcut is an \acut.
A partition is said to be under the \emph{vertex incidence} 
if every cut of the partition is a \vcut,
and under the \emph{boundary incidence} if every cut of the partition is an \acut.
Figure~\ref{fig:definitions}(a) shows a cutset consisting of \vcuts, and thus
the partition is under the vertex incidence.
In Figure~\ref{fig:definitions}(b), $\cut'_3$ is an \acut, but not a \vcut. Thus, the partition
is under the boundary incidence, but not under the vertex incidence.

A subpolygon $Q$ of $P$ is a \emph{$k$-cut subpolygon} of $P$ 
if the boundary of $Q$ consists of a contiguous boundary portion of $P$ 
and a chain consisting of $k$ line segments contained in $\myint{P}$, 
alternating horizontal and vertical.
The $k$ line segments constitute the cutset that partitions $P$ into 
$Q$ and $P\setminus Q$. We call each such segment a \emph{boundary cut} of $Q$.
Then $P$ itself is the 0-cut subpolygon of $P$. 
A subpolygon is a $\le$$k$-cut subpolygon if it is a $k'$-cut subpolygon for some $k' \le k$.

\section{Previous algorithms for rectangular partitions}
\label{sec:DP_lingas}
We give a sketch of the algorithm by Lingas et al.~\cite{lingas} for the \mip problem 
and a sketch of the algorithm by O'Rourke and Tewari~\cite{oRourke} for the \tp problem.
Both algorithms employ a dynamic programming approach, and the latter one is 
based on the former one.

\subsection{\mip algorithm by Lingas et al.}
\label{sec:minink_lingas}
Lingas et al. observed that there is an \mip of $P$ 
by a cutset $\lt$ consisting of \acuts lying on grid lines of the canonical grid $\gt$, and 
gave an algorithm returning $\lt$. The rectangles of 
the partition induced by $\lt$ have corners all at grid points of $\gt$.

Their algorithm works recursively as follows.
Let $Q$ denote the input subpolygon for the recursive algorithm. 
Initially, $Q = P$.
The algorithm finds a uni-rectangle partition for a rectangle $R\subset Q$
such that $R$ has its corners at grid points and it is incident to 
a boundary cut of $Q$. In doing so, it fixes a vertex of $Q$ 
as the \emph{origin point} $o$, and searches for every grid point $p$ (called a \emph{partner point} of $o$) of $\gt$ 
such that $R_{op}\subset Q$ and
each subpolygon in a uni-rectangle partition $(Q, R_{op}, \lt)$, except $R_{op}$, 
is either a 1-cut or 2-cut subpolygon of $P$.
This is called the \emph{2-cut property}.
It finds a uni-rectangle partition 
$\ut^{*} = (Q, R_{op^*}, \lt^*)$ satisfying 
$\myink{Q, R_{op^*}, \lt^*} = \min_{p} \min_{\lt} \myink{Q, R_{op}, \lt}$.
Thus, $\myink{Q}=\myink{Q, R_{op^*}, \lt^*}$.
Finally, it computes an optimal refinement $\rt_{\ut^{*}}$ by computing
an \mip 
of each subpolygon in $\ut^{*} = (Q, R_{op^*}, \lt^*)$ 
recursively.
By Lemma~\ref{lem:uni}, $\rt_{\ut^{*}}$ is an \mip of $Q$.
Their dynamic programming algorithm uses $\myink{Q'}$ value stored in a table
for each subpolygon $Q' \in (Q, R_{op}, \lt)$.
Figure~\ref{fig:minink_runs} illustrates how the algorithm works.

The origin point $o$ of $Q$ is any convex vertex of $P$ for $Q=P$ (Figure~\ref{fig:minink_runs}(a)),
the endpoint shared by the two boundary cuts for a 2-cut subpolygon $Q$ (Figure~\ref{fig:minink_runs}(b)),
and an endpoint of the edge containing the boundary cut for a 1-cut subpolygon $Q$ (Figure~\ref{fig:minink_runs}(c)).

\begin{figure}[ht]
    \begin{center}
      \includegraphics[width=.9\textwidth]{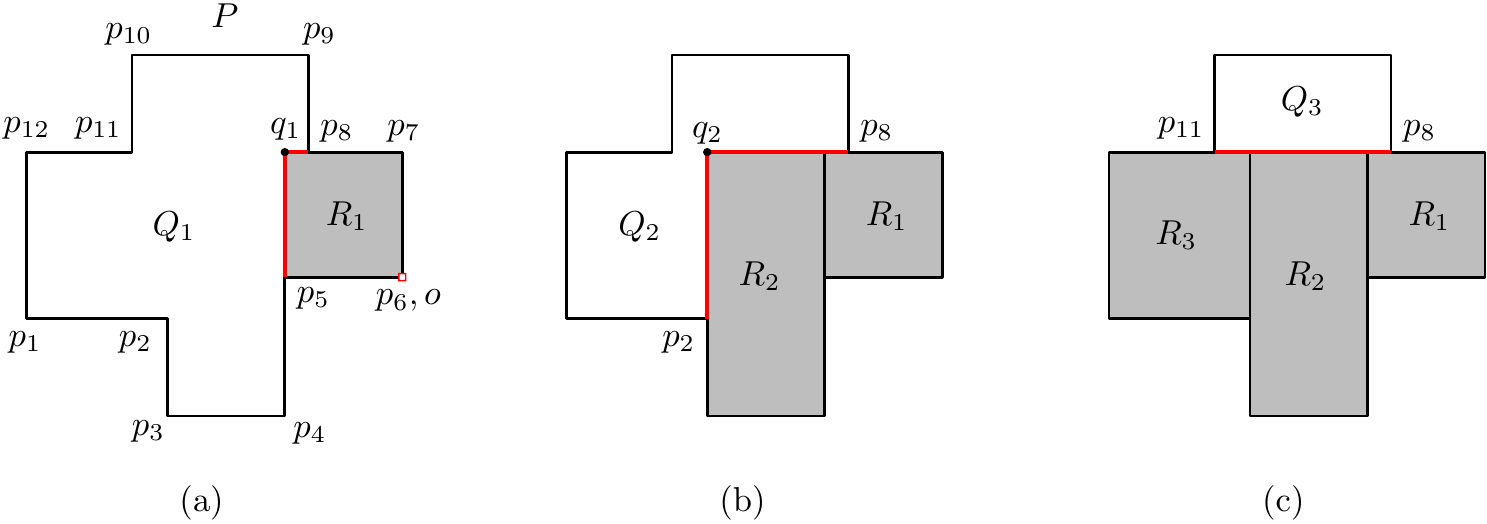}
    \end{center}
    \caption{The \mip algorithm by Lingas et al. in a series of recursive steps.
    During the algorithm, we encounter 2-cut subpolygons $Q_1$ (by cuts $q_1p_8, q_1p_5$) in (a) 
    and $Q_2$ (by cuts $q_2p_8, q_2p_2$) in (b), and 
    a 1-cut subpolygon (rectangle) $Q_3$ (by cut $p_{11}p_8$) in (c).
    The resulting \mip is $\{R_1, R_2, R_3, Q_3\}$ with cutset $\{p_{11}p_8, q_1p_5, q_2p_2\}$.
    }
    \label{fig:minink_runs}
\end{figure}

\begin{figure}[ht]
    \begin{center}
      \includegraphics[width=.9\textwidth]{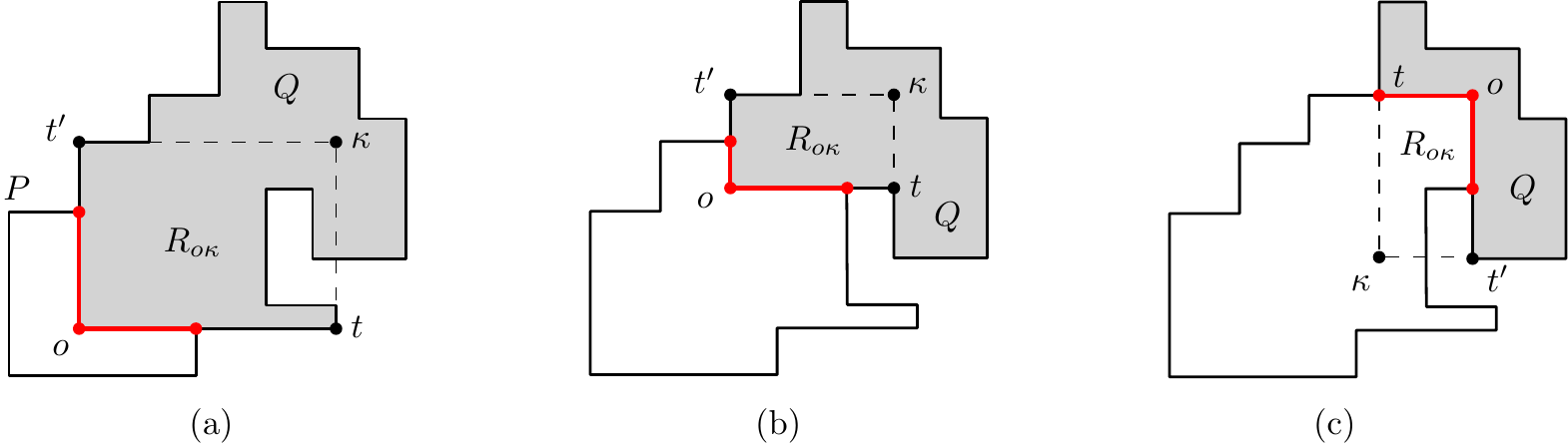}
    \end{center}
    \caption{Polygon $P$ and 2-cut subpolygons $Q$ (gray).
    For a kitty corner $\kc$ of $o$, (a) $\kc\in Q$, but $R_{o\kc}\not\subseteq Q$,
    (b) $\kc\in Q$, and $R_{o\kc}\subseteq Q$.
    (c) $\kc\not\in Q$, and thus $R_{o\kc}\not\subseteq Q$.
    }
    \label{fig:kittycorner}
\end{figure}

Lingas et al. claimed that for a fixed origin point $o$, 
it suffices to check only certain partner points with respect to $R_{o\kc}$ 
to compute an \mip while maintaining the 2-cut property, where
$\kc$ is the \emph{kitty corner} of $o$.
The kitty corner $\kc$ of the origin point $o$ for a 2-cut subpolygon $Q$ is 
a grid point of $\gt$ such that 
$ot$ and $ot'$ are edges of $Q$ for corners $o,t,\kc,t'$ of $R_{o\kc}$. 
Note that $\kc$ is not necessarily a partner point of $o$, and it may lie outside of $Q$. 
See Figure~\ref{fig:kittycorner}.

\begin{figure}[hb]
    \begin{center}
      \includegraphics[width=.9\textwidth]{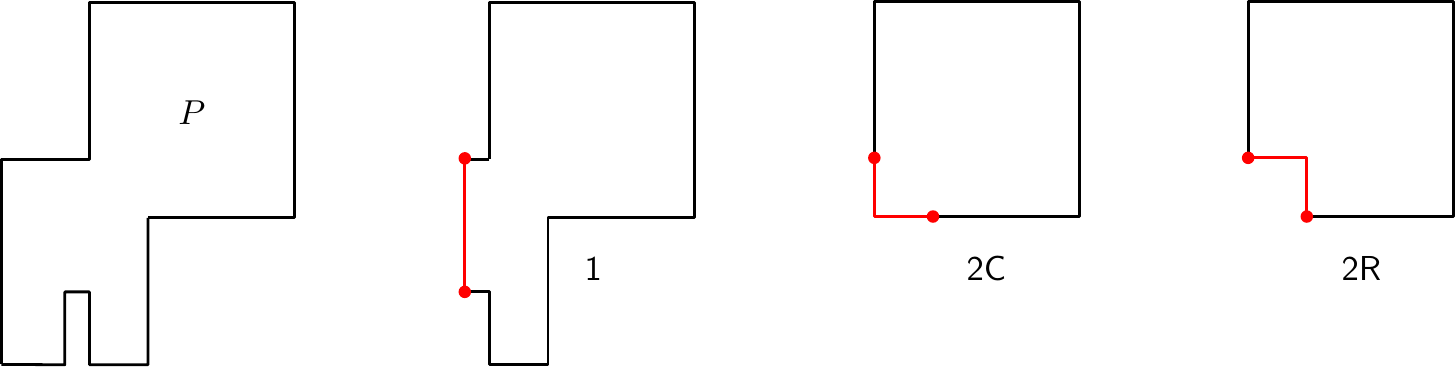}
    \end{center}
    \caption{Three different types of $\le$2-cut subpolygons of $P$, 
    and their boundary cuts (red). 
    }
    \label{fig:three_types}
\end{figure}

Every 1-cut subpolygon is of type \one.
A 2-cut subpolygon $Q$ is classified into two subtypes: the endpoint shared 
by the two cuts is either convex (type \twoc) or reflex (type \twor) with respect to $Q$. See Figure~\ref{fig:three_types}.
The following two rules summarize their claim for a 2-cut subpolygon $Q$ 
with origin point $o$, kitty corner $\kc$, and a partner point $p$ of $o$.

\begin{itemize}
    \item[-] For $Q$ of type \twoc,
    $R_{op}$ appears in an \mip of $Q$ only if 
    $p\not\in\myint{R_{o\kc}}$.
    Every subpolygon in uni-rectangle partition $(Q, R_{op}, \lt)$ is a $\le$2-cut subpolygon, 
    where $\lt$ is any cutset consisting of \acuts and determining $R_{op}$.

    \item[-] For $Q$ of type \twor, $R_{op}$ appears in an \mip of $Q$ only if
    either the vertical line or the horizontal line through $p$ intersects $R_{o\kc}$.
    If so, every subpolygon in uni-rectangle partition $(Q, R_{op}, \lt)$ is a $\le$2-cut subpolygon,
    where $\lt$ is any cutset consisting of \acuts and determining $R_{op}$.
\end{itemize}

Any partner point satisfying one of the two rules above is a \emph{candidate partner point} 
in their algorithm. However, there are some flaws in the rules, which can be fixed easily. 
See Section~\ref{sec:minink_mine} for details.
They claimed that any 1-cut subpolygon can be handled in the same way for 
a 2-cut subpolygon by considering a subpolygon edge, incident and perpendicular 
to the boundary cut, as an additional boundary cut. 
However, as O'Rourke and Tewari~\cite{oRourke} pointed out, 
this is not always the case. Again, this can be corrected by choosing
the origin and partner points in a certain way. 
See Section~\ref{sec:minink_mine} for details.

Their algorithm compares $\myink{\ut}$ values of uni-rectangle partitions $\ut$ with 
$R_{op}$ satisfying the rules above, and takes the uni-rectangle partition 
with the minimum $\myink{\ut}$ value.
It returns an \mip of $P$ in $O(n^4)$ time using $O(n^2)$ space, 
because there are $O(n^2)$ $\le$2-cut subpolygons,
and $O(n^2)$ candidate partner points for each subpolygon.

\subsection{\tp algorithm by O'Rourke and Tewari}
\label{sec:thick_oRourke}
We use \emph{\vtp} to refer to a \tp whose cutset consists of \vcuts.
We give a sketch of the algorithm by O'Rourke and Tewari~\cite{oRourke}, denoted by \tpv,
that computes a \vtp. 

\tpv makes use of observations similar to the ones in the \mip algorithm
by Lingas et al.~\cite{lingas} under the vertex incidence.
In the following, we use terms in Section~\ref{sec:minink_lingas}
such as the 2-cut property, origin points, and partner points.
The algorithm maintains the 2-cut property 
and handles subpolygons belonging to each of the three types (\one, \twoc, \twor) 
separately, as the \mip algorithm by Lingas et al. does.
For subpolygons of types \twoc and \twor, it uses a few rules for determining origin points 
and their partner points in order to compute a \vtp.
A subpolygon of type \one is considered as a subpolygon of type \twoc or \twor.

However, there are degenerate 1-cut subpolygons $Q$ such that
the two edges of $P$ connected by the boundary cut $\cut$ of $Q$ 
are collinear to $\cut$.
To handle such a degenerate 1-cut subpolygon,
\tpv takes each of the $O(n)$ grid points of $\gt$ lying on the grid line through 
its boundary cut $\cut$ and contained in $Q$ as an origin point. 
For each origin point $o$, it finds partner points $p$ such that
$\cut$ intersects the boundary of $R_{op}$ in a (nondegenerate) line segment.
Observe that every subpolygon in the uni-rectangle partition 
is again a $\le$2-cut subpolygon.
See Figure~\ref{fig:degenerate}. 
Hence, \tpv obtains a \vtp of each 1-cut subpolygon 
by checking all such origin and partner point pairs.

Based on the type classification and analyses, together with dynamic programming,
\tpv computes a \vtp of $P$
in $O(n^5)$ time and $O(n^4)$ space.

\begin{figure}[ht]
    \begin{center}
      \includegraphics[width=.4\textwidth]{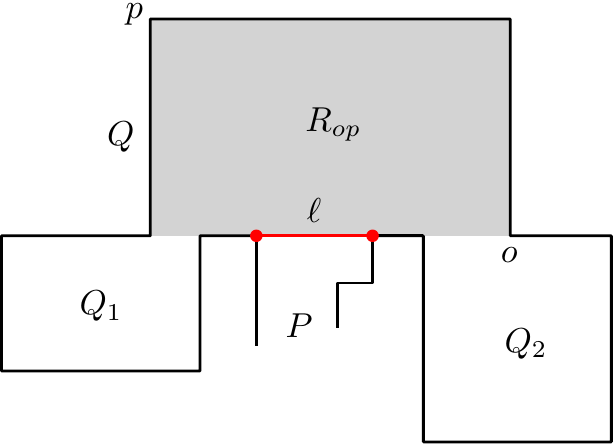}
    \end{center}
    \caption{A degenerate 1-cut subpolygon $Q$ whose boundary cut $\cut$ 
    connects two edges of $P$ collinear to $\cut$.
    Since $\cut$ intersects the boundary of $R_{op}$ in a line segment, 
    $p$ is considered as a partner point of $o$ by \tpv. 
    The resulting partition $\{R_{op}, Q_1, Q_2\}$ is the \vtp of $Q$.
    }
    \label{fig:degenerate}
\end{figure}

\section{\mip algorithm}
\label{sec:minink_mine}
We present an efficient algorithm that computes 
an \mip of a rectilinear polygon $P$ with $n$ vertices and no holes
in the plane.
Our algorithm is based on the work by Lingas et al.~\cite{lingas},
but with careful analyses on the subproblems and by exploiting certain
geometric and combinatorial coherence among them, we could improve upon their algorithm.

We classify subpolygons into four types (\onec, \oner, \twoc, \twor) 
and enumerate uni-rectangle partitions of each type
without duplicates, while guaranteeing an \mip.
We show that there are $O(n^3)$ uni-rectangle partitions of 
types \onec, \oner, and \twoc, and $O(n^4)$ uni-rectangle partitions of type \twor, 
to consider for an \mip.
We use certain coherence among uni-rectangle partitions of type \twor 
so that it suffices to check only $O(n^3)$ of them.
We can compute $\myink{\cdot}$ value for each uni-rectangle partition in $O(1)$ time,
after $O(n^2)$ time preprocessing (details in Section~\ref{sec:made_clear}),
so our algorithm runs in $O(n^3)$ time.

We show that our algorithm uses $O(n^2)$ space. 
The preprocessing step for $\myink{\cdot}$ queries and
the query data structure maintained in the algorithm
requires $O(n^2)$ space. 
The algorithm also maintains $O(n)$ arrays of length $O(n)$ to efficiently compute 
$\myink{\cdot}$ values of uni-rectangle partitions of type \twor.
Thus, our algorithm uses $O(n^2)$ space in total.

Our algorithm is based on the following lemma that guarantees an \mip 
by a cutset consisting of \vcuts. 
This is stronger than the observation by Lingas et al. using \acuts in Section~\ref{sec:minink_lingas}.

\begin{lemma}[Lemma 5.1 of \cite{duD}]
    \label{lem:vertex_cut_minink}
    There is an \mip of $P$ by a cutset consisting of \vcuts.
\end{lemma}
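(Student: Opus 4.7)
The plan is an exchange argument that starts from the \mip guaranteed by Lingas et al.~\cite{lingas}, whose cutset consists of \acuts lying on grid lines of the canonical grid $\gt$, and locally converts each non-\vcut cut into a \vcut (or absorbs it into another cut) without ever increasing the total ink. Let $\rt$ be such an \mip with cutset $\lt$, and suppose some $\cut \in \lt$ is not a \vcut; without loss of generality $\cut$ is horizontal. Neither endpoint of $\cut$ is a reflex vertex of $P$, so each endpoint lies either in the relative interior of a vertical edge of $P$ or in the interior of a vertical cut of $\lt$ at a T-junction.

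The key step is a vertical slide of $\cut$, keeping every other element of $\rt$ fixed. The two vertical objects at which the endpoints of $\cut$ are anchored are unaffected by such a slide, so the $x$-coordinates of both endpoints of $\cut$ are preserved and $\len{\cut}$ does not change; the rectangles directly above and below $\cut$ merely grow or shrink in height while remaining rectangles. Hence the total ink is invariant along the slide, and the result remains a valid rectangular partition as long as $\cut$'s interior stays inside $P$ and avoids every other cut of $\lt$. I would slide $\cut$ in one chosen direction, say upward, until the first combinatorial event occurs; since every edge of $P$, every reflex vertex of $P$, and every cut of $\lt$ lies on a grid line of $\gt$, these events are confined to finitely many discrete $y$-coordinates.

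A case analysis on the first event shows that it must be one of: (i) an endpoint of $\cut$ reaches a reflex vertex of $P$ at the far end of the vertical object to which it is anchored, so $\cut$ has become a \vcut; (ii) $\cut$ becomes collinear with a horizontal edge of $P$, so $\cut$ can be deleted, strictly reducing the ink --- impossible by the optimality of $\rt$; or (iii) $\cut$ meets another horizontal cut of $\lt$, in which case the two merge into a single maximal horizontal cut of the same total length, strictly decreasing the cardinality of $\lt$ without changing the ink. The symmetric analysis applies to the downward slide, and at least one direction must lead to case (i) or (iii) --- otherwise both directions would give case (ii) and $\rt$ would admit a strictly shorter cutset. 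Thus at each step the number of non-\vcuts strictly decreases, and after finitely many iterations we obtain an \mip whose cutset consists entirely of \vcuts.

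The main obstacle I anticipate is justifying that the slide remains within a valid rectangular partition at every intermediate position and that the first combinatorial event falls cleanly into one of the three listed cases. The clean way to handle this is to exploit the canonical grid $\gt$: the combinatorial type of the partition changes only when $\cut$ crosses a grid line carrying a reflex vertex, a horizontal boundary edge, or another horizontal cut, so it suffices to slide $\cut$ to the next such grid line in the chosen direction and perform a local case analysis there. Subtle situations, such as a reflex vertex of $P$ appearing strictly between $x(u)$ and $x(v)$ on $\cut$'s horizontal line, are automatically ruled out because such a vertex would force the boundary of $P$ to block the slide at an earlier $y$-coordinate, contradicting the validity of the partition before the event.
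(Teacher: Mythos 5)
First, a point of reference: the paper does not prove this statement at all --- it is imported verbatim as Lemma~5.1 of the cited work [duD], so there is no in-paper proof to compare against. Your sliding/exchange strategy is the natural way one would prove such a claim from scratch, and it starts from the right place (the grid-aligned \mip of Lingas et al.), but as written it has a genuine gap at its central quantitative step.

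The gap is the claim that the total ink is invariant along the vertical slide of $\cut$ ``keeping every other element of $\rt$ fixed.'' A maximal horizontal cut $\cut$ in a rectangular partition generally has several vertical cuts abutting its relative interior at T-junctions, some from above and some from below. You cannot keep these fixed while moving $\cut$: if a vertical cut's upper endpoint lies on $\cut$ and $\cut$ slides upward by $\varepsilon$, that vertical cut must lengthen by $\varepsilon$ (otherwise the two regions flanking it merge across the gap and the result is no longer a partition into rectangles), and symmetrically each vertical cut abutting from above must shorten by $\varepsilon$. Hence the ink changes at rate $k_b - k_a$ per unit of upward motion, where $k_b$ and $k_a$ count the vertical cuts abutting $\cut$ from below and from above; it is invariant only in the special case $k_a = k_b$ (e.g.\ $k_a = k_b = 0$, exactly one rectangle above and one below, which is what your phrasing implicitly assumes). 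Because this invariance is what powers your optimality contradictions in cases (ii) and (iii), those steps do not go through as stated. The standard repair is to make the accounting explicit: in an \mip one must have $k_a = k_b$ for every maximal non-\vcut (otherwise an arbitrarily small slide in the favorable direction strictly decreases the ink, and such a slide is available because all cuts and edges lie on grid lines of $\gt$, so the first event occurs at positive distance), and only then is the slide ink-preserving and your event analysis applicable. Two smaller issues in the same spirit: your list of ``first events'' omits the event in which $\cut$ reaches the endpoint of a vertical cut it did not previously touch (which changes $k_a$ or $k_b$ and hence the rate), and case (ii) needs $\cut$ to actually overlap a horizontal edge of $P$ in a nondegenerate segment before it can be deleted --- becoming collinear with an edge that merely shares an endpoint with $\cut$ at a convex vertex does not permit deletion.
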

By Lemma~\ref{lem:vertex_cut_minink}, 
our algorithm considers cutsets consisting of \vcuts only.
It maintains the 2-cut property and uses the type classification of 
2-cut subpolygons into \twoc and \twor in 
Section~\ref{sec:minink_lingas}.
We classify 1-cut subpolygons $Q$ further into \oner 
(degenerate 1-cut subpolygons mentioned in Section~\ref{sec:thick_oRourke})
and \onec (1-cut subpolygons not belonging to \oner).

Throughout this section, we denote by $Q$ a $\le$2-cut subpolygon.
For any 2-cut subpolygon $Q$, we use $\cut_h$ (horizontal) and $\cut_v$ (vertical) 
to denote its two boundary cuts.

\myparagraph{Choosing origin points.}
If $Q$ is of type \twoc or \twor, the endpoint shared by two cuts is the origin point $o$.
We use $\kc$ to denote the kitty corner of $o$.
If $Q$ is of type \onec, an endpoint of its boundary cut lying on a convex vertex of $Q$
is the origin point $o$.
If $Q$ is of type \oner, the grid points contained in $Q$
and lying on the grid line through the boundary cut of $Q$ are the origin points. 
Thus, there are $O(n)$ origin points to consider for $Q$
of type \oner.

\myparagraph{Optimal uni-rectangle partitions.}
In the following lemma, we characterize the uni-rectangle partitions that 
appear in an \mip with respect to the subpolygon type (\twoc, \twor, and \oner),
and give rules for choosing partner points based on the characterization.
Each subpolygon $Q$ of type \onec can be considered 
as a subpolygon of type \twoc,
by using an edge of $Q$ incident and perpendicular to its boundary cut, 
as an additional boundary cut.
Then, rule 1 in the following lemma can be applied to the induced subpolygon of type \twoc.

\begin{lemma}
\label{lem:minink_rules}
    Given a $\le$2-cut subpolygon $Q$ of $P$ and its origin point $o$, 
    we can compute an optimal uni-rectangle partition of $Q$ while maintaining the 2-cut property,
    by checking partner points $p$ satisfying one of the following three rules.
    \begin{itemize}
    \item[1.] For $Q$ of type \twoc or \onec, 
    (1) a side of $R_{op}$ incident to $p$ intersects $\mybd{Q}$ in a (nondegenerate) line segment
    or (2) $p \notin \myint{R_{o\kc}}$.
    \item[2.] For $Q$ of type \twor,
    (1) a side of $R_{op}$ incident to $p$ intersects $\mybd{Q}$ in a (nondegenerate) line segment
    or (2) $p$ lies in a quadrant (defined by the horizontal line and the vertical line through $o$) not containing $\kc$ and not opposite to the quadrant containing $\kc$. 
    \item[3.] For $Q$ of type \oner, the boundary cut of $Q$ intersects 
    the boundary of $R_{op}$ in a (nondegenerate) line segment.
    \end{itemize}
\end{lemma}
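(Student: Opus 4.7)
The plan is a case analysis on the type of $Q$, using an exchange argument in each case. Given any optimal uni-rectangle partition $\ut=(Q,R_{op},\lt)$ whose partner point $p$ satisfies none of the clauses of the rule for the type of $Q$, I would construct another optimal uni-rectangle partition $\ut'=(Q,R_{op'},\lt')$ such that $p'$ satisfies a clause of the rule, $\myink{\ut'}\le\myink{\ut}$, and every subpolygon of $\ut'$ is a $\le$2-cut subpolygon of $P$. I first reduce type \onec to type \twoc by adjoining the edge of $Q$ incident and perpendicular to the boundary cut, on the side of the convex endpoint chosen as $o$, as a phantom second boundary cut; the uni-rectangle partitions of the induced type-\twoc subpolygon then correspond bijectively to those of the original \onec subpolygon and share the same kitty corner, so only the rules for \twoc need to be argued.

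For type \twoc, a bad $p$ lies in $\myint{R_{o\kc}}$ with both sides of $R_{op}$ incident to $p$ being interior cuts of $\lt$ rather than lying on $\mybd{Q}$. I would slide these two cuts outward, away from $o$, until either a new side of the enlarged rectangle meets $\mybd{Q}$ (giving clause (1)) or $p'$ reaches a side of $R_{o\kc}$ opposite to $o$ (giving clause (2), with $p'=\kc$ in the extreme case). The accounting argument showing $\myink{\ut'}\le\myink{\ut}$ follows from the fact that the interior portions of the extended cuts can be inherited by the optimal refinements of the new subpolygons, so the enlargement creates no additional ink; Lemma~\ref{lem:vertex_cut_minink} then keeps the resulting cutset within \vcuts.

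For type \twor, the kitty corner $\kc$ lies in the unique quadrant at $o$ bounded by the two boundary cuts $\cut_h$ and $\cut_v$, and the opposite quadrant lies outside $Q$, so a bad $p$ is forced into the kitty-corner quadrant and has no side of $R_{op}$ incident to $p$ lying on $\mybd{Q}$. I would apply the same outward-sliding argument, stopping once either the enlarged rectangle crosses the horizontal or vertical line through $o$ (so that $p'$ lies in an adjacent quadrant, yielding clause (2)) or a new side meets $\mybd{Q}$ (yielding clause (1)). For type \oner, every origin point $o$ lies on the grid line supporting the boundary cut $\cut$. If $\mybd{R_{op}}\cap\cut$ were degenerate, then the portion of $\mybd{Q}$ lying on $\cut$ would be entirely inherited by a single subpolygon $Q'$ of $\ut$, and $Q'$ would carry both cuts of $\lt$ together with $\cut$ as a chain of three segments with respect to $P$, violating the 2-cut property; thus rule 3 is forced by the 2-cut property itself, and each resulting subpolygon's boundary cuts reduce to a chain of at most two segments.

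The main obstacle, I expect, will be the outward-sliding step for type \twor. The enlarged rectangle may straddle a reflex vertex of $Q$ that lies in the kitty-corner quadrant, and in that event I must track how the recursive optimal refinements of the new subpolygons absorb the removed portions of the old cuts, and verify that no candidate uni-rectangle partition whose optimality depends on clause (1) is lost when $p'$ ends up in an adjacent quadrant. Once this accounting is verified, the 2-cut property for each resulting subpolygon follows from the fact that the cuts in $\lt'$ are \vcuts emanating from reflex vertices of the corresponding subpolygons, consistent with the classification into types \onec, \oner, \twoc, and \twor.
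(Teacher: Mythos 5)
Your overall strategy---an exchange argument that replaces a partner point violating the rules by one satisfying them without increasing ink, after reducing \onec to \twoc via a phantom second boundary cut---is a legitimate way to attack the statement, and the \onec reduction matches the paper. But your case for type \twor contains a concrete error: for up-right cut direction the kitty corner $\kc$ lies in the quadrant $\oq{1}$ bounded by the two boundary cuts, and since $o$ is \emph{reflex} with respect to $Q$ it is $\oq{1}$---not the opposite quadrant $\oq{3}$---that lies locally outside $Q$. The partner points that rule~2 actually has to exclude are those in $\oq{3}$, which \emph{is} inside $Q$; no partner point can lie in the interior of $\oq{1}$ in the first place. Moreover, the proposed repair cannot work as described: $o$ is a fixed corner of $R_{op'}$, so no amount of sliding the sides incident to the opposite corner can make $p'$ ``cross the horizontal or vertical line through $o$'' into an adjacent quadrant---$R_{op'}$ always remains in the closed quadrant containing $p$. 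The \twor case therefore needs a different argument altogether.

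Second, the step you yourself flag as the obstacle---that the outward slide satisfies $\myink{\ut'}\le\myink{\ut}$ because the extended cuts are ``inherited'' by the refinements of the new subpolygons---is precisely the content of Lemmas~3 and~4 of Lingas et al., i.e., the entire substance of the statement, and asserting it is not proving it. The paper does not reprove those lemmas: its proof cites them and observes that they hold only when no side of $R_{op}$ incident to $p$ meets $\mybd{Q}$ in a nondegenerate segment, which is exactly why condition~(1) is added as a disjunct in rules~1 and~2. Two smaller points on rule~3: when $\mybd{R_{op}}\cap\cut$ is degenerate, the portion of $\cut$ need not be inherited by a single subpolygon (a maximal cut of $\lt$ may extend beyond $R_{op}$ and reach $\cut$), so your three-segment-chain argument does not cover all cases; and you argue only necessity, whereas one also needs completeness---that some optimal uni-rectangle partition of a type-\oner subpolygon uses a rectangle meeting the boundary cut in a nondegenerate segment for one of the $O(n)$ origin points---which the paper obtains by an argument like that of Lemma~\ref{lem:uni}.
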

\begin{proof} 
Lingas et al.~\cite{lingas} claim (in \emph{Lemma 3} of the paper) that 
for a subpolygon $Q$ of type \twoc and its origin point $o$, 
$p \notin \myint{R_{o\kc}}$ for every partner point $p$.
The claim holds only when no side of $R_{op}$ incident to $p$ 
intersects $\mybd{Q}$ in a nondegenerate line segment. Thus, we have rule 1.
They also claim (in \emph{Lemma 4} of the paper) that
for each subpolygon $Q$ of type \twor and its origin point $o$, 
either the vertical line or the horizontal line through $p$ intersects $R_{o\kc}$,
for the kitty corner $\kc$ of $o$.
Thus we can rule out the partner points $p$ in the quadrant 
(defined by the horizontal line and the vertical line through $o$) 
opposite to the quadrant containing $\kc$
such that both sides of $R_{op}$ incident to $p$ do not intersect $\mybd{Q}$ in a nondegenerate line segment.
    Finally, we can show that the iteration over the partner points satisfying rule 3
    can be done by checking all uni-rectangle partitions of each subpolygon of type \oner
    without duplicates, using an argument similar to the proof of Lemma~\ref{lem:uni}.
\end{proof}

\subsection{Counting distinct \texorpdfstring{$(Q, o, p)$}{(Q,o,p)} triplets}
\label{sec:subpolygons}
Our algorithm computes $\myink{\cdot}$ values for the uni-rectangle partitions 
satisfying the 2-cut property and Lemma~\ref{lem:minink_rules}.
It takes time linear to the number of such uni-rectangle partitions
because we can compute $\myink{\cdot}$ value for each uni-rectangle partition
in $O(1)$ time, after preprocessing. 

To bound the number of such uni-rectangle partitions, it suffices to count 
the distinct $(Q, o, p)$ triplets (called \emph{candidate triplets})
in the uni-rectangle partitions, because for a triplet $(Q, o, p)$, 
there are $O(1)$ distinct cutsets $\lt$ consisting of \vcuts 
for uni-rectangle partitions $(Q, R_{op}, \lt)$.
We will show this in Section~\ref{sec:made_clear}.

We classify candidate triplets $(Q, o, p)$ into types by the types of $Q$,
and count the distinct candidate triplets in the following order of types, 
\twoc, \onec, \oner, \twor, for ease of description.
For any type $\textsf{A} \in \{\twoc, \onec, \oner, \twor\}$, 
$\candset{\textsf{A}}$ denotes the number of distinct candidate triplets of type $\textsf{A}$. 

For a 2-cut subpolygon $Q$, the \emph{cut direction} of $Q$ is 
\emph{up-right} if one boundary cut lies above $o$ and the other boundary cut lies to the right of $o$.  
The \emph{vertex type} of $R_{o\kc}$ for $Q$ with cut direction up-right
is \textit{convex-reflex} if the bottom-right corner of $R_{o\kc}$ is a convex vertex of $Q$, 
and the top-left corner of  $R_{o\kc}$ is a reflex vertex of $Q$.
Because $o$ and $\kc$ are fixed for $Q$, we say that the vertex type of $Q$ is 
convex-reflex if the vertex type of $R_{o\kc}$ is convex-reflex.




\subsubsection{Candidate triplets of type \twoc.}
\label{subsec:2C}
We count the candidate triplets for $Q$ of type \twoc.
We first show that there are $O(n^3)$ candidate triplets for $Q$ 
satisfying condition (1) of 
rule 1 in Lemma~\ref{lem:minink_rules}.
Then, we consider the candidate triplets for $Q$ satisfying (2) but not satisfying (1) of 
rule 1 in Lemma~\ref{lem:minink_rules},
classify them into three subtypes, 
and then bound the number of candidate triplets of each subtype to $O(n^3)$.

We first count the candidate triplets for $Q$ of type \twoc that satisfy condition (1).
The grid points $p$ such that $(Q, o, p)$ satisfies condition (1) of rule 1 in Lemma~\ref{lem:minink_rules}
are on a staircase chain of line segments on grid lines of $\gt$ which 
alternates between horizontal and vertical.
See Figure~\ref{fig:2C_subpolygons}(a).
Since there are $O(n^2)$ subpolygons of type \twoc and  
$O(n)$ grid points of $\gt$ lie on such a chain, 
there are $O(n^3)$ candidate triplets of type \twoc satisfying condition (1).

We then count the candidate triplets for $Q$ of type \twoc that 
do not satisfy condition (1) but do satisfy condition (2) of 
rule 1 in Lemma~\ref{lem:minink_rules}.
To do this efficiently, 
we classify candidate triplets further into three subtypes by 
the vertex types of the corners $t, t'$ of $R_{o\kc}$ other than $o$ and $\kc$.
If both $t$ and $t'$ are convex vertices of $Q$,
$Q$ is \emph{closed} (subtype \twocc).
If both $t$ and $t'$ are reflex vertices of $Q$,
$Q$ is \emph{open} (subtype \twoco).
If one is a convex vertex and the other is a reflex vertex of $Q$, 
$Q$ is \emph{half-open} (subtype \twoch).
We assume that the cut direction of $Q$ is up-right 
with $\cut_h$ and $\cut_v$. 

\begin{figure}[ht]
    \begin{center}
      \includegraphics[width=\textwidth]{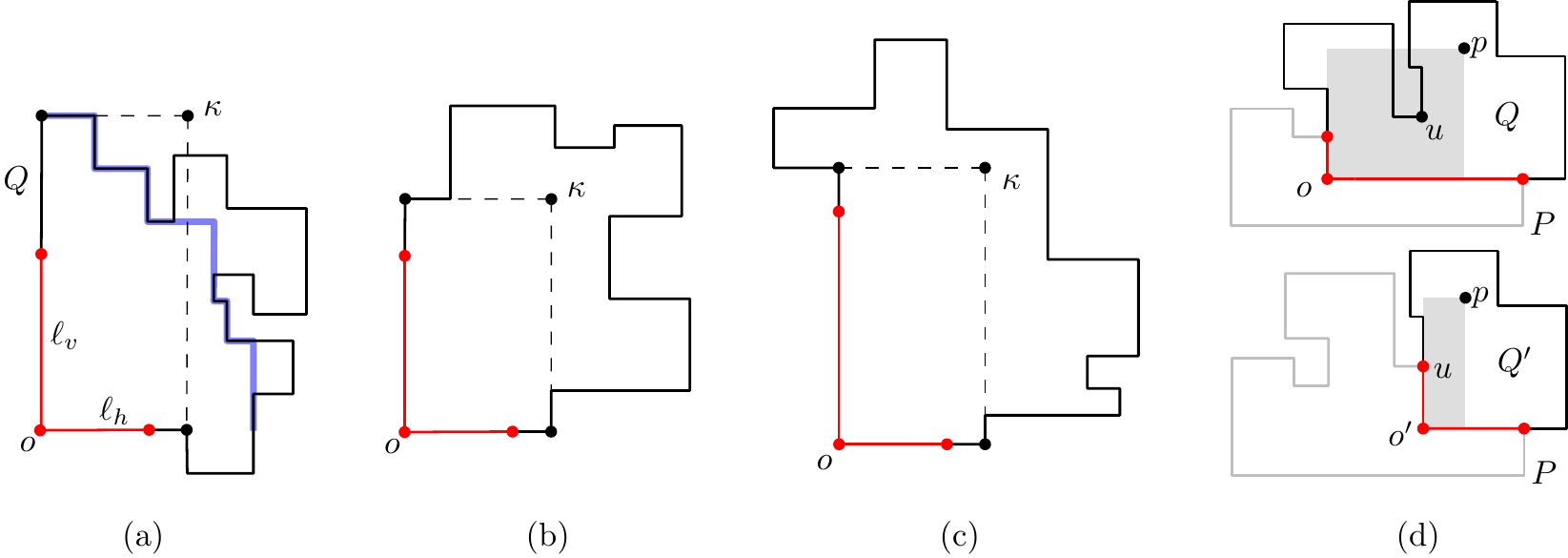}
    \end{center}
    \caption{(a) For $Q$ of type \twoc, 
    the grid points $p$ such that $(Q, o, p)$ satisfies condition (1) of 
    rule 1 in Lemma~\ref{lem:minink_rules}
    are on a staircase chain of line segments on grid lines, alternating between horizontal and vertical.
    (b) A subpolygon of type \twocc.
    (c) A subpolygon of type \twoch. 
    (d) Subpolygons $Q$ and $Q'$ of type \twoch with
    origin points lying on the same horizontal grid line.
    }
    \label{fig:2C_subpolygons}
\end{figure}

\myparagraph{When $Q$ is closed (\twocc).} 
Observe that $R_{op}\subseteq Q$ 
only if $p \in R_{o\kc}$. See Figure~\ref{fig:2C_subpolygons}(b). 
On the other hand, 
$p \notin \myint{R_{o\kc}}$ by condition (2) of 
rule 1 in Lemma~\ref{lem:minink_rules}.
Hence, if $(Q, o, p)$ is a candidate triplet, $p$ lies on the boundary of $R_{o\kc}$.
More precisely, $p$ lies on one of the two sides of $R_{o\kc}$ incident to $\kc$, since otherwise $R_{op}$ becomes a line segment. 
Because each such $p$ satisfies condition (1) of rule 1 in Lemma~\ref{lem:minink_rules},
there is no candidate triplet of type \twocc.

\myparagraph{When $Q$ is half-open (\twoch).} 
Observe that $R_{op}\subseteq Q$ only if $p$ lies in the intersection of $Q$ and 
the vertical slab bounded by the two lines, one through the left side of $R_{o\kc}$
and one through the right side of $R_{o\kc}$.
See Figure~\ref{fig:2C_subpolygons}(c).
On the other hand, 
$p \notin \myint{R_{o\kc}}$ by condition (2) of 
rule 1 in Lemma~\ref{lem:minink_rules}.
Hence, if $(Q, o, p)$ is a candidate triplet, $p$ lies in the intersection of $Q\setminus \myint{R_{o\kc}}$
and the vertical slab.

We need the following lemma to bound the number of 
candidate triplets of type \twoch.

\begin{lemma}
    \label{lem:2Ch_bound}
    Let $Q$ and $Q'$ be subpolygons of type \twoch 
    with cut direction up-right and vertex type convex-reflex.
    If the origin points $o$ of $Q$ and $o'$ of $Q'$ lie 
    on the same horizontal grid line,
    at most one of $(Q, o, p)$ and $(Q', o', p)$ is a candidate triplet 
    for any fixed grid point $p$.
\end{lemma}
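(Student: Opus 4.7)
The plan is to argue by contradiction. Suppose both $(Q, o, p)$ and $(Q', o', p)$ are candidate triplets of the stated type; since $y(o) = y(o')$, I may assume $x(o) < x(o')$ without loss of generality. First, I would unfold the candidacy conditions into explicit geometric constraints on $p$. The point $p$ must lie in the intersection of the two vertical slabs, giving $x(o') \le x(p) \le \min(x(\kc), x(\kc'))$, on or above the horizontal line $y = y(o)$, and in both $Q$ and $Q'$ while avoiding $\myint{R_{o\kc}}$ and $\myint{R_{o'\kc'}}$. In particular $x(o') \le x(\kc)$; combined with $o' \in \myint{P}$ and $x(o) < x(o')$, this forces $o'$ to lie on the open horizontal cut $\cut_h$ of $Q$, so the vertical cut $\cut'_v$ of $Q'$ rises from $o'$ into the portion of $\myint{P}$ delimited by the cuts of $Q$.

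Next I would unpack the consequences of the convex-reflex vertex type with up-right cut direction. For $Q$, the reflex corner $t' = (x(o), y(\kc))$ pins down a specific local configuration of $\mybd{P}$ at $t'$: a horizontal edge of $\mybd{P}$ arrives eastward into $t'$ and a perpendicular edge of $\mybd{P}$ emanates from $t'$, so that $P$ sits locally on a prescribed side of both edges. The same analysis applied at $t'_{Q'} = (x(o'), y(\kc'))$ pins down the analogous configuration for $Q'$. From these two local pictures I can read off precisely in which directions $P$ (and hence $Q$ and $Q'$) can and cannot extend above the rectangles $R_{o\kc}$ and $R_{o'\kc'}$ within their respective slabs.

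The contradiction would then come from combining the two pictures: tracing the counterclockwise boundary of $Q$ from $t = (x(\kc), y(o))$ around to $t'$, I would argue that the region enclosed by this boundary portion contains $t'_{Q'}$, while the forced local $\mybd{P}$-structure at $t'_{Q'}$ is incompatible with this enclosure. Depending on the position of $p$, the resulting conflict forces either $\cut'_v$ to leave $\myint{P}$ (contradicting its definition as an internal cut) or $p$ to exit $Q$ or $Q'$ (contradicting candidacy). I expect the main obstacle to be a careful case analysis based on the relative positions of $\kc$ and $\kc'$ in both coordinates, and on whether $p$ lies strictly above both rectangles or on specific portions of their boundaries, since the local geometric configurations differ qualitatively across these subcases.
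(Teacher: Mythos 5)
Your setup matches the paper's: argue by contradiction, assume $x(o)<x(o')$, extract from candidacy that $p$ lies in both vertical slabs and outside both $\myint{R_{o\kc}}$ and $\myint{R_{o'\kc'}}$, and focus attention on the reflex top-left corner of $R_{o'\kc'}$ (your $t'_{Q'}$, which is exactly the upper endpoint $u$ of the vertical boundary cut of $Q'$). But the proof never actually arrives at a contradiction: the decisive step is deferred to an ``enclosure'' argument you say you \emph{would} make, plus a case analysis on the positions of $\kc$, $\kc'$, and $p$ that you acknowledge but do not carry out. As written, the incompatibility between the local structure at $t'_{Q'}$ and its enclosure by a boundary portion of $Q$ is asserted, not derived, so the argument is incomplete.

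What closes the argument — and what the paper does — is much shorter and requires no case analysis. The candidacy of $(Q',o',p)$ under condition (2) of rule 1 for the half-open type forces $p$ to lie in the vertical slab of $R_{o'\kc'}$ and strictly above it, so $x(o)<x(o')=x(u)<x(p)$ and $y(o)=y(o')<y(u)<y(p)$. Hence $u\in\myint{R_{op}}$. Since $u$ is a reflex vertex of $Q$ (it lies on $\mybd{P}$, being the endpoint of a boundary cut, and the convex-reflex vertex type makes it reflex), a candidate triplet $(Q,o,p)$ would require $R_{op}\subseteq Q$, which is impossible when a boundary point of $Q$ sits in $\myint{R_{op}}$. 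You already have every ingredient needed for these two inequalities (both slab memberships and $p\notin\myint{R_{o'\kc'}}$); replacing the boundary-tracing plan with this direct containment of $u$ in $\myint{R_{op}}$ would turn your outline into a complete proof and eliminate the case analysis you were anticipating.
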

\begin{proof}
Let $Q$ and $Q'$ be such subpolygons 
of type \twoch with origin points $o$ and $o'$, respectively, 
lying on the same horizontal grid line with $x(o) < x(o')$.
Since $x(o) < x(o')$, either the horizontal cut of $Q'$ is contained in the horizontal cut of $Q$ (see Figure~\ref{fig:2C_subpolygons}(d)) or 
there is an edge of $P$ contained in the horizontal line segment with endpoints $o$ and $o'$.

Let $u$ be the upper endpoint of the vertical cut of $Q'$.
If both $(Q, o, p)$ and $(Q', o', p)$ are candidate triplets, then 
$x(o) < x(o') = x(u) < x(p)$ and $y(o) = y(o') < y(u) < y(p)$, 
because $p$ satisfies condition (2) of
rule 1 in Lemma~\ref{lem:minink_rules}.
Thus, $u\in\myint{R_{op}}$ and $R_{op} \not\subset Q$ 
because $u$ is a reflex vertex of $Q$.
This contradicts that $(Q, o, p)$ is a candidate triplet.
\end{proof}

There are $O(n^2)$ grid points of $\gt$. 
By Lemma~\ref{lem:2Ch_bound},
each grid point $p$ induces at most one candidate triplet $(Q, o, p)$
among subpolygons $Q$ of type \twoch whose cut direction is up-right, 
whose vertex type is convex-reflex,
and whose origin points lie on the same horizontal grid line of $\gt$.
So there are $O(n^2)$ candidate triplets for such subpolygons of type \twoch 
with the same cut direction,
the same grid line of $\gt$ on which origin points lie, and the same vertex type 
which is determined by the cut direction and the grid line.
There are $O(1)$ cut directions and vertex types, 
and $O(n)$ grid lines of $\gt$. 
Hence, $\candset{\twoch}=O(n^3)$.

\myparagraph{When \texorpdfstring{$Q$}{Q} is open (\twoco).}
\label{sec:2Co_details}
The four regions of the plane subdivided by the vertical line and the horizontal
line through $\kc$ are called the \emph{$\kc$-quadrants} labeled from 
$\kcq{1}$ (top-right region) to $\kcq{4}$
(bottom-right region), in counterclockwise direction around $\kc$.
Since the cut direction of $Q$ is up-right, $o$ always lies in $\kcq{3}$.
We consider the partner points of $o$ 
lying in $\kcq{1}, \kcq{2},$ and $\kcq{4}$,
but no one lying in the interior of $\kcq{3}$, 
by condition (2) of rule 1 in Lemma~\ref{lem:minink_rules}.

We need the following lemma to bound the number of 
candidate triplets for $Q$ of type \twoco.

\begin{lemma}
    \label{lem:2Co_bound}
    Let $Q$ and $Q'$ be subpolygons of type \twoco 
    with up-right cut direction and origin points $o$ and $o'$,
    respectively.
    If $o$ and $o'$ lie on
    the same horizontal grid line of $\gt$, 
    at most one of $(Q, o, p)$ and $(Q', o', p)$ is a 
    candidate triplet for any fixed grid point $p$ in 
    $(\kcq{1}\cup\kcq{2}) \cap (\kcq{1}' \cup \kcq{2}')$,
    for kitty corners $\kc$ of $o$ and $\kc'$ of $o'$.
\end{lemma}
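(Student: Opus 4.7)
The plan is to mirror the proof of Lemma~\ref{lem:2Ch_bound}, with the explicit hypothesis $p \in (\kcq{1}\cup\kcq{2}) \cap (\kcq{1}'\cup\kcq{2}')$ playing the role that the \twoch convex-reflex geometry played there. I would argue by contradiction: assume both $(Q,o,p)$ and $(Q',o',p)$ are candidate triplets, locate an endpoint $u$ of a boundary cut of $Q'$ that is forced to lie in $\myint{R_{op}}$, and show that $u$ is in fact a reflex vertex of $P$, contradicting $R_{op}\subseteq P$.

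By symmetry I first reduce to the case $x(o) < x(o')$: the case $x(o) > x(o')$ is handled by swapping the roles of the two triplets (using $p\in\kcq{1}\cup\kcq{2}$ in place of $p\in\kcq{1}'\cup\kcq{2}'$), and $x(o) = x(o')$ combined with $y(o) = y(o')$ forces $o = o'$, which together with the up-right cut direction uniquely determines the two cuts, so $Q = Q'$ and the claim is trivial. Let $u$ be the upper endpoint of the vertical boundary cut of $Q'$, so that $x(u) = x(o')$ and $y(u) = y(\kc')$. Since $o$ and $o'$ are convex vertices of \twoco subpolygons with up-right cut direction, the interior of $Q$ (resp.\ $Q'$) lies locally in the upper-right quadrant of $o$ (resp.\ $o'$), so $R_{op}\subseteq Q$ and $R_{o'p}\subseteq Q'$ force $x(p) > x(o')$ and $y(p) > y(o') = y(o)$. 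Combined with $p\in\kcq{1}'\cup\kcq{2}'$, which yields $y(p) > y(\kc') = y(u)$, this gives
\[
  x(o) < x(u) < x(p) \quad\text{and}\quad y(o) < y(u) < y(p),
\]
so $u\in\myint{R_{op}}$.

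The main obstacle is converting $u\in\myint{R_{op}}$ into the desired contradiction, which I would do by showing that $u$ is a reflex vertex of $P$ itself (not merely of $Q'$). A short local analysis at $u$ suffices: since $Q'$ is of type \twoco, $u$ is a reflex vertex of $Q'$, so three of the four axis-aligned quadrants at $u$ lie inside $Q'\subseteq P$; but if $u$ were a non-vertex point of $\mybd{P}$ or a convex vertex of $P$, then at most two quadrants at $u$ would lie in $P$. Hence $u$ is a reflex vertex of $P$, and the one open quadrant at $u$ lying outside $P$ meets $\myint{R_{op}}$, contradicting $R_{op}\subseteq Q\subseteq P$ and completing the argument.
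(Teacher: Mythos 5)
Your proof is correct and follows essentially the same route as the paper's: both arguments take $u$ to be the upper endpoint of the vertical boundary cut of $Q'$, derive $x(o)<x(u)<x(p)$ and $y(o)<y(u)<y(p)$ from the quadrant condition on $p$ to place $u$ in $\myint{R_{op}}$, and conclude $R_{op}\not\subseteq Q$ because $u$ is a reflex vertex. Your explicit local-quadrant check that $u$ is a reflex vertex of $P$ itself (not just of $Q'$) is a detail the paper leaves implicit, but it is the same underlying contradiction.
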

\begin{proof}
    Let $u$ be the upper endpoint of the vertical boundary cut of $Q'$. 
    If $(Q', o', p)$ is a candidate triplet for a grid point 
    $p\in \kcq{1}'\cup\kcq{2}'$
    and $(Q, o, p)$ is a candidate triplet for the same grid point 
    $p\in\kcq{1}\cup\kcq{2}$,
    then $x(o) < x(o') = x(u) < x(p)$ and $y(o) = y(o') < y(u) < y(p)$,
    as in the proof of Lemma~\ref{lem:2Ch_bound}.
    Thus, $u\in\myint{R_{op}}$, and $R_{op} \not\subset Q$ 
    because $u$ is a reflex vertex of $Q$.
    This contradicts that $(Q, o, p)$ is a candidate triplet.
\end{proof}

Similarly, we can prove the following corollary.

\begin{corollary}
    \label{corollary:2Co_bound}
    Let $Q$ and $Q'$ be subpolygons  of type \twoco 
    with up-right cut direction.
    If their origin points $o$ of $Q$ and $o'$ of $Q'$ lie on 
    the same vertical grid line of $\gt$, 
    then at most one of $(Q, o, p)$ and $(Q', o', p)$ is a 
    candidate triplet for any fixed grid point $p$ in 
    $(\kcq{1}\cup\kcq{4}) \cap (\kcq{1}'\cup\kcq{4}')$,
    for kitty corners $\kc$ of $o$ and $\kc'$ of $o'$.
\end{corollary}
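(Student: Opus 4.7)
The plan is to adapt the proof of Lemma~\ref{lem:2Co_bound} by exchanging the roles of the horizontal and vertical directions. Reflecting the plane across the line $y=x$ preserves the up-right cut direction, swaps horizontal grid lines with vertical grid lines, and sends the top half $\kcq{1}\cup\kcq{2}$ to the right half $\kcq{1}\cup\kcq{4}$, so the corollary's hypotheses are exactly the mirror image of the lemma's.

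First, I would WLOG assume $y(o) < y(o')$ (analogous to $x(o) < x(o')$ in the lemma) and let $u$ denote the right endpoint of the horizontal boundary cut of $Q'$ (mirroring the upper endpoint of $Q'$'s vertical cut used in the lemma). Because $Q'$ has up-right cut direction and this cut exits $o'$ rightward along $y=y(o')$ up to its end on $\mybd{P}$, we have $y(u) = y(o')$ and $x(u) = x(\kc')$.

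Next, I would assume for contradiction that both $(Q, o, p)$ and $(Q', o', p)$ are candidate triplets for a fixed grid point $p \in (\kcq{1}\cup\kcq{4}) \cap (\kcq{1}'\cup\kcq{4}')$. From $p \in \kcq{1}'\cup\kcq{4}'$ I would derive $x(p) > x(\kc') = x(u)$, and from $R_{o'p} \subseteq Q'$ together with $o'$ being a convex corner of $Q'$ with up-right cuts I would derive $y(p) > y(o') = y(u)$. Combined with $x(o) = x(o') < x(u)$ and $y(o) < y(o') = y(u)$, these inequalities place $u$ strictly inside $R_{op}$.

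Finally, I would invoke the same concluding step as in Lemma~\ref{lem:2Co_bound}: because $u$ is the non-shared endpoint of a \vcut of $P$, it is a reflex vertex of $P$, and hence a reflex vertex of $Q$ whenever $u$ lies on $\mybd{Q}$. Thus $u \in \myint{R_{op}}$ forces $R_{op} \not\subseteq Q$, contradicting that $(Q, o, p)$ is a candidate triplet. The only step I expect to require real care is the orientation convention behind the WLOG and the choice of which subpolygon plays the role of $Q'$: matching the lemma under the diagonal reflection is what ensures the reflex-vertex obstruction falls inside $Q$ rather than inside $Q'$. Once this convention is fixed, the rest of the argument is purely mechanical.
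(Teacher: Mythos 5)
Your proposal is correct and matches the paper's intent exactly: the paper proves the corollary by the remark ``Similarly, we can prove the following corollary,'' i.e., by mirroring the proof of Lemma~\ref{lem:2Co_bound} across the diagonal, which is precisely what you carry out (right endpoint $u$ of the horizontal cut of $Q'$, the inequalities $x(o)=x(o')<x(u)<x(p)$ and $y(o)<y(o')=y(u)<y(p)$, and the reflex-vertex obstruction $u\in\myint{R_{op}}$). No gaps beyond those already present in the paper's own argument.
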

By Lemma~\ref{lem:2Co_bound}, 
each grid point $p$ 
induces at most one candidate triplet with $p\in \kcq{1}\cup\kcq{2}$ 
among the subpolygons $Q$ of type \twoco 
with origin points lying on the same
horizontal grid line.
By Corollary~\ref{corollary:2Co_bound},
each grid point $p$
induces at most one candidate triplet with $p\in \kcq{1}\cup\kcq{4}$ 
among the subpolygons $Q$ of type \twoco 
with origin points lying on the same vertical 
grid line.
Since there are $O(n^2)$ grid points and $O(n)$ 
grid lines of $\gt$,
there are $O(n^3)$ candidate triplets $(Q, o, p)$ of type \twoco 
such that the cut direction of $Q$ is up-right.
There are $O(1)$ cut directions, so $\candset{\twoco}=O(n^3)$.

Thus, we bound the number of candidate triplets of type \twoc.
\begin{lemma}
    \label{lem:2C_result}
    There are $O(n^3)$ candidate triplets of type \twoc.
\end{lemma}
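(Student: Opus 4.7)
The plan is to partition the candidate triplets of type \twoc into two groups according to which clause of rule 1 in Lemma~\ref{lem:minink_rules} is in force, and then further split the second group by the three subtypes (\twocc, \twoch, \twoco) introduced above. In each piece I will invoke the counting and the two structural lemmas (Lemma~\ref{lem:2Ch_bound}, Lemma~\ref{lem:2Co_bound}, Corollary~\ref{corollary:2Co_bound}) already proved in this subsection.

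For triplets satisfying condition (1), the earlier observation shows that for a fixed $Q$ the admissible partner points $p$ lie on a staircase chain of grid-line segments, which contains only $O(n)$ grid points of $\gt$. Since there are $O(n^2)$ subpolygons of type \twoc, this piece contributes $O(n^3)$ triplets. For the remaining triplets I handle each subtype separately: in subtype \twocc, containment $R_{op}\subseteq Q$ forces $p\in R_{o\kc}$, while condition (2) rules out $p\in\myint{R_{o\kc}}$, so $p\in\mybd{R_{o\kc}}$; every such $p$ already lies on a side shared with $\mybd{Q}$ and hence satisfies (1), so this subcase contributes nothing new. In subtype \twoch, Lemma~\ref{lem:2Ch_bound} assigns at most one triplet to each tuple (grid point $p$, horizontal grid line carrying $o$, cut direction, vertex type); there are $O(n^2)$ grid points, $O(n)$ horizontal lines of $\gt$, and $O(1)$ directions and types, giving $O(n^3)$. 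In subtype \twoco, condition (2) restricts $p$ to $\kcq{1}\cup\kcq{2}\cup\kcq{4}$; Lemma~\ref{lem:2Co_bound} charges the $p\in\kcq{1}\cup\kcq{2}$ triplets to (grid point, horizontal origin-line) pairs, and Corollary~\ref{corollary:2Co_bound} charges the $p\in\kcq{1}\cup\kcq{4}$ triplets to (grid point, vertical origin-line) pairs; together these charges cover $\kcq{1},\kcq{2},\kcq{4}$ completely, and each gives $O(n^2)\cdot O(n)=O(n^3)$ per cut direction, summing to $O(n^3)$.

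Adding the four bounds yields $\candset{\twoc}=O(n^3)$, as claimed. The main subtlety, and the only place where something could go wrong, is the \twoco bound: I must check that the two charging schemes of Lemma~\ref{lem:2Co_bound} and Corollary~\ref{corollary:2Co_bound} between them cover every $\kc$-quadrant allowed by condition (2); since any $p$ with $p\in\kcq{2}$ is handled by the horizontal charge, any $p\in\kcq{4}$ by the vertical charge, and any $p\in\kcq{1}$ by either (double-counting is harmless for an upper bound), no admissible partner point is missed.
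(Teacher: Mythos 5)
Your proposal is correct and follows essentially the same route as the paper: the paper's bound for type \twoc is likewise obtained by first counting the $O(n^3)$ triplets whose partner points lie on the staircase chain from condition (1), then splitting the remaining triplets into the subtypes \twocc (empty, since boundary points of $R_{o\kc}$ already satisfy condition (1)), \twoch (via Lemma~\ref{lem:2Ch_bound}), and \twoco (via Lemma~\ref{lem:2Co_bound} and Corollary~\ref{corollary:2Co_bound} covering $\kcq{1}\cup\kcq{2}$ and $\kcq{1}\cup\kcq{4}$). Your explicit check that the two charging schemes jointly cover all of $\kcq{1},\kcq{2},\kcq{4}$ is a welcome clarification but does not change the argument.
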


\subsubsection{Candidate triplets of type \onec.}
\label{subsec:1C}
Recall that the boundary cut $\cut$ of $Q$ of type \onec, 
together with an additional cut (an edge of $Q$ incident to the edge containing $\cut$
and sharing an endpoint with $\cut$)
induces a \twoc subpolygon.
Since we consider an \mip induced by a cutset consisting of \vcuts,
we have the following lemma.

\begin{lemma}
    There are $O(n)$ subpolygons of type \onec.
    \label{lem:1C_number}
\end{lemma}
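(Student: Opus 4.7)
The plan is to bound the number of 1-cut subpolygons directly and then note that the \onec subpolygons form a subset. Lemma~\ref{lem:vertex_cut_minink} tells us that an \mip can always be realized by a cutset of \vcuts, so every 1-cut subpolygon handled by our algorithm has a boundary cut $\cut$ that is a \vcut; in particular, one endpoint of $\cut$ lies at a reflex vertex of $P$.

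Next, I would count \vcuts by bounding the contribution of each reflex vertex $v$ of $P$. The interior angle at $v$ equals $270^\circ$ and the two edges of $P$ incident to $v$ are axis-aligned and mutually perpendicular. Of the four axis-aligned rays emanating from $v$, exactly two coincide with these incident edges (and so lie on $\partial P$), while the remaining two immediately enter $\myint{P}$. Since $\cut$ is axis-aligned and must connect two points of $\partial P$ in order to split $P$, $\cut$ extends from $v$ along one of these two interior-bound directions until it first meets $\partial P$; that far endpoint is uniquely determined by the choice of direction. Hence $v$ contributes at most two \vcuts.

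Because $P$ has $O(n)$ reflex vertices, there are $O(n)$ \vcuts in total. Each \vcut partitions $P$ into two 1-cut subpolygons, one on either side of the cut, so there are $O(n)$ 1-cut subpolygons overall, and in particular $O(n)$ of type \onec. The only subtle point is the geometric claim about exactly two interior-bound rays at each reflex vertex, which is immediate from the $270^\circ$ interior angle; the remainder is routine counting.
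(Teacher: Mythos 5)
Your argument is correct and follows essentially the same route as the paper's proof: both charge each boundary cut of a 1-cut subpolygon to the reflex vertex of $P$ at one of its endpoints (justified by the restriction to \vcuts) and observe that each reflex vertex accounts for only $O(1)$ such cuts, hence $O(1)$ subpolygons. You merely spell out the details the paper leaves implicit, namely that a reflex vertex admits exactly two interior-bound axis directions and that the far endpoint of a maximal cut is determined by the direction.
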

\begin{proof}
    One endpoint of the boundary cut of $Q$ of type \onec lies on a reflex vertex of $P$.
    Each reflex vertex of $P$ can be used as an endpoint of the boundary cut 
    for at most two different subpolygons of type \onec.
    There are $O(n)$ reflex vertices of $P$, and thus there are $O(n)$ subpolygons of type \onec.
\end{proof}

For each subpolygon $Q$ of type \onec and its origin point $o$, 
there are $O(n^2)$ partner points $p$ such that $(Q, o, p)$ satisfies 
rule 1 in Lemma~\ref{lem:minink_rules}, 
because there are $O(n^2)$ grid points of $\gt$.
This, together with Lemma~\ref{lem:1C_number}, bounds the number of candidate triplets of type \onec.

\begin{lemma}
    There are $O(n^3)$ candidate triplets of type \onec.
    \label{lem:1C_result}
\end{lemma}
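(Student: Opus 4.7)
The plan is to combine two elementary bounds: a count of type-\onec subpolygons and a per-subpolygon count of valid partner points. For the first bound, I would directly invoke Lemma~\ref{lem:1C_number}, which already tells us there are $O(n)$ subpolygons of type \onec. For the second bound, I would observe that for a fixed $Q$ of type \onec the origin point $o$ is uniquely determined (it is the endpoint of the boundary cut of $Q$ lying on a convex vertex of $Q$), so the only remaining freedom in the triplet $(Q,o,p)$ is the choice of $p$.

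Next I would bound, for a fixed pair $(Q,o)$, the number of partner points $p$ admissible under rule~1 of Lemma~\ref{lem:minink_rules} (the rule that applies after we augment $Q$ by the auxiliary cut so it is treated as a \twoc subpolygon). By Lemma~\ref{lem:vertex_cut_minink} the algorithm restricts to cutsets of \vcuts, and the characterization of \vcuts together with the 2-cut property forces every such $p$ to be a grid point of the canonical grid $\gt$ contained in $Q$. Since $\gt$ has $O(n^2)$ grid points in total, the number of admissible $p$ is $O(n^2)$.

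Multiplying the two bounds gives $O(n)\cdot O(n^2)=O(n^3)$ candidate triplets of type \onec, which is the claim. The argument is short because all the real work has already been absorbed into Lemma~\ref{lem:1C_number} (via the reflex-vertex counting that shows \onec subpolygons are scarce) and Lemma~\ref{lem:minink_rules} (which restricts the partner points to grid points); the only potential subtlety, and what I would check carefully, is that reducing a \onec subpolygon to the \twoc case does not double-count triplets or miss the uniqueness of the origin $o$, so that no further multiplicative factor larger than $O(1)$ is introduced in passing between the two rules. Once that is verified, the bound follows directly.
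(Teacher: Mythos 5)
Your proposal matches the paper's proof: both bound the count as $O(n)$ subpolygons of type \onec (via Lemma~\ref{lem:1C_number}) times $O(n^2)$ partner points per subpolygon, the latter following simply because every partner point is one of the $O(n^2)$ grid points of $\gt$. The argument is correct and takes essentially the same route as the paper.
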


\subsubsection{Candidate triplets of type \oner.}
\label{subsec:1R}
We characterize each subpolygon of type \oner by its cut direction and 
the local placement of the subpolygon around the boundary cut.
Let $H^+$ be the set of the subpolygons $Q$ of type \oner 
such that the boundary cut $\cut$ of $Q$ 
is horizontal and $Q$ lies above $\cut$ locally.
\begin{lemma}
    For $Q$ in $H^+$, 
    $R_{op}$ appears in no \mip of $Q$ 
    if both top corners of $R_{op}$ have their vertical projections onto 
    $\myint{\cut}$.
    \label{lem:1R_vtp}
\end{lemma}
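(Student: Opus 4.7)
The plan is to proceed by contradiction: I would assume that $R_{op}$ lies in some \mip $\rt$ of $Q$ and then exhibit a rectangular partition of $Q$ with strictly smaller total ink. Invoking Lemma~\ref{lem:vertex_cut_minink}, I may assume that the cutset of $\rt$ consists entirely of \vcuts. The first observation to exploit is that, because $Q$ is of type \oner, the boundary cut $\cut$ is a proper subsegment of a longer horizontal edge of $Q$ that extends strictly beyond $(c,y_0)$ and $(d,y_0)$ into the edges of $P$ collinear with $\cut$. In particular, the two bottom corners of $R_{op}$---which by hypothesis project to $\myint{\cut}$---lie in the \emph{interior} of this long horizontal edge, so they are not vertices of $Q$.

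Next, let $\tilde{s}_L$ and $\tilde{s}_R$ be the cuts of $\rt$ that contain the cut-portions of the left and right sides of $R_{op}$, respectively. Because their bottom endpoints are non-vertices of $Q$, the \vcut property forces their top endpoints to be reflex vertices of $Q$, say $v_L$ and $v_R$, lying on the vertical lines through the corresponding corners of $R_{op}$. Together they contribute length $(y(v_L)-y_0)+(y(v_R)-y_0)$ to the ink of $\rt$, and each of these cuts is anchored on a feature of $\mybd{Q}$ that lies strictly inside the $x$-range $(c,d)$.

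The third step is to construct a modified partition $\rt'$ by sliding $\tilde{s}_L$ leftward so that its bottom reaches $(c,y_0)$, and $\tilde{s}_R$ rightward so that its bottom reaches $(d,y_0)$, and then updating the incident rectangles. Both new bottom endpoints still lie on the long horizontal edge of $Q$, and because of the \oner degeneracy the polygon $Q$ opens up past $(c,y_0)$ and $(d,y_0)$: the shifted cuts can therefore terminate at points of $\mybd{Q}$ at heights no larger than $y(v_L)$ and $y(v_R)$, and at least one of the two can in fact be made strictly shorter (since the reflex vertices at $x(o)$ and $x(p)$ that forced the original cut heights need not be present further to the left/right). A careful length comparison should then show that the total cut length of $\rt'$ is strictly smaller than that of $\rt$, yielding the contradiction.

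The hard part will be step three: verifying that the shifted cutset is a valid maximal cutset and that $\rt'$ decomposes $Q$ into rectangles in \emph{every} sub-configuration. I anticipate a case analysis along three axes: (i)~whether the top side of $R_{op}$ itself lies on $\mybd{Q}$ or is a cut; (ii)~whether a single rectangle of height $y(p)-y_0$ sits immediately to the left (resp.\ right) of $R_{op}$ or several stacked rectangles do, in which case the horizontal cuts between them must be extended across $R_{op}$ after the shift; and (iii)~whether $v_L$ and $v_R$ lie exactly at $y(p)$ or strictly above. In each sub-case I expect to use the fact that the long horizontal edge of $Q$ supplies unused slack beyond the endpoints of $\cut$ to absorb the shifted cuts at no additional cost, thereby guaranteeing the strict decrease in ink.
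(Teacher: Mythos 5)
Your proposal misidentifies the mechanism behind the lemma, and the step that carries all the weight is left undone. The paper's argument is not an exchange argument about ink at all: it is a one-line \emph{feasibility} observation. Since the algorithm restricts attention to cutsets consisting of \vcuts (Lemma~\ref{lem:vertex_cut_minink}), and since both bottom corners of $R_{op}$ lie in $\myint{\cut}$ and hence are not vertices of $Q$, one of the vertical sides of $R_{op}$ cannot lie on a \vcut, so no uni-rectangle partition is induced by the triplet $(Q,o,p)$ in the first place. Your own step two already contains the right seed --- the cuts along the two vertical sides have non-vertex bottom endpoints, so each would have to reach a reflex vertex at its top --- but the correct continuation is to observe that this cannot be arranged consistently around the whole boundary of $R_{op}$ (for instance, a horizontal cut closing off the top side $tp$ would then have both endpoints on the two vertical cuts, hence at no reflex vertex), rather than to launch an optimality comparison.

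The exchange argument itself has concrete problems beyond being unfinished. First, the endpoints of $\cut$ are \emph{also} non-vertices of $Q$: by the definition of type \oner, the two edges of $P$ joined by $\cut$ are collinear with it, so sliding $\tilde{s}_L$ and $\tilde{s}_R$ out to the endpoints of $\cut$ lands their bottom endpoints on the same flat horizontal boundary and does nothing to make them \vcuts; your construction does not escape the very obstruction the lemma is about. Second, there is no reason the shifted cuts are shorter --- $\mybd{Q}$ above the new $x$-coordinates may be higher, not lower, and the claimed strict decrease in ink is asserted, not proved. Third, after the shift the separated rectangle is no longer $R_{op}$, so you would also have to repartition the regions between the old and new vertical cuts and account for that additional ink; none of this is verified. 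You correctly flag step three as ``the hard part,'' but as set up it aims to prove suboptimality of a configuration that the paper excludes because it is not realizable by a \vcut cutset at all.
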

\begin{proof}
    If both top corners of $R_{op}$ have their vertical projections onto $\myint{\cut}$, 
    one of the vertical sides of $R_{op}$ is not on a \vcut, 
    so there is no uni-rectangle partition induced by the triplet $(Q, o, p)$.
    See Figure~\ref{fig:1R_subpolygons}(a).
\end{proof}
\begin{figure}[ht]
    \begin{center}
      \includegraphics[width=\textwidth]{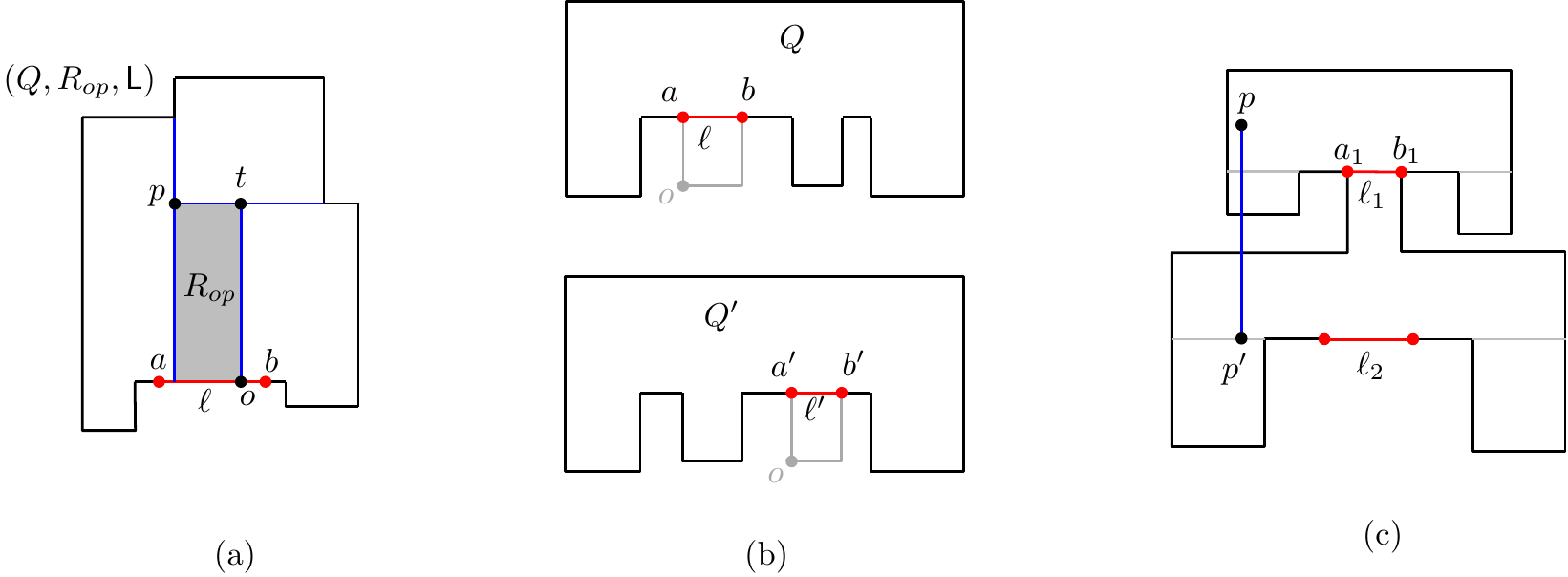}
    \end{center}
    \caption{(a) 
    If both top corners $p$, $t$ of $R_{op}$ have their vertical projections onto $\myint{\cut}$, one vertical side ($ot$ in the figure) of $R_{op}$ is not on a \vcut in $(Q, R_{op}, \lt)$. 
    (b) 
If both $Q$ and $Q'$ are considered, $o \in P(a, b)$ and $o \in P(a', b')$.
This is not possible because $P(a, b) \cap P(a', b') = \emptyset$.
    (c) Proof of Lemma~\ref{lem:1R.reachable}.
    }
    \label{fig:1R_subpolygons}
\end{figure}
\begin{lemma}
    For each grid point $p \in \gt$, the algorithm considers 
    at most one subpolygon $Q$ in $H^+$
    such that $pp^o\subset Q$ and $p^o\not\in\cut$ but $p^oc\subset Q$ for an endpoint $c$
    of $\cut$ and the vertical projection $p^o$ of $p$ onto the line through $\cut$. 
    \label{lem:1R.reachable}
\end{lemma}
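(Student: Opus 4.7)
The plan is to argue by contradiction: suppose two distinct subpolygons $Q_1, Q_2 \in H^+$ both satisfy the stated condition for the same grid point $p$. Write $\cut_i$ for the horizontal boundary cut of $Q_i$, $\ell_i$ for the line containing $\cut_i$, $p^o_i$ for the vertical projection of $p$ onto $\ell_i$, and $c_i$ for an endpoint of $\cut_i$ with $p^o_i c_i \subset Q_i$. I would then split into cases according to whether $\ell_1 = \ell_2$.

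The primary case, which I expect Figure~\ref{fig:1R_subpolygons}(c) illustrates, is $\ell_1 \neq \ell_2$; assume $y(\ell_1) < y(\ell_2)$ without loss of generality. The hypothesis $p^o_2 \notin \cut_2$ together with $p^o_2 c_2 \subset Q_2$ forces $p^o_2$ to lie on the horizontal edge of $P$ that is collinear with $\cut_2$ and adjacent to $c_2$. Since $Q_2 \in H^+$ lies locally above $\cut_2$ and is a degenerate 1-cut subpolygon, the interior of $P$ sits above this collinear edge while the exterior of $P$ lies directly below at $p^o_2$. Consequently every point just below $p^o_2$ on the vertical line through $p$ is outside $\overline{P}$, but $pp^o_1 \subset Q_1 \subset \overline{P}$ must descend from $p$ through $p^o_2$ all the way to $p^o_1 \in \ell_1$ with $y(\ell_1) < y(\ell_2)$, a contradiction.

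In the remaining case $\ell_1 = \ell_2 = \ell$, we have $p^o_1 = p^o_2 = p^o$, and $\cut_1 \neq \cut_2$ are distinct cuts on $\ell$. Here I would argue that $p^o c_i \subset Q_i$ together with $p^o \notin \cut_i$ forces $p^o$ onto the horizontal collinear extension of $\cut_i$ adjacent to $c_i$, and then exploit the fact that $p \in \gt$ together with the degeneracy of type \oner to pin down the cut uniquely. The main obstacle is precisely this same-line case: one must rule out a grid point $p^o$ lying strictly in the interior of a horizontal edge of $P$ on $\ell$ that serves simultaneously as the right extension of one degenerate cut and the left extension of another. Dealing with this will require careful use of the canonical grid structure and the placement of the reflex vertices of $P$ that anchor the endpoints of the two candidate cuts.
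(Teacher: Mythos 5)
Your proposal has a genuine gap, and it sits exactly where you flagged it: the same-line case. That case cannot be closed by geometry alone, because the purely geometric uniqueness claim is false. Take two degenerate cuts $\cut_1=a_1b_1$ and $\cut_2=a_2b_2$ on the same horizontal grid line, separated by an edge $e$ of $P$ with $\myint{P}$ locally above $e$ (e.g.\ two rectangular tabs hanging below the bottom edge of a large rectangle): the components of $P$ lying above $\cut_1$ and above $\cut_2$ both contain the region above $e$, so a grid point $p$ there satisfies $pp^o\subset Q_i$, $p^o\notin\cut_i$, and $p^o c_i\subset e\subset Q_i$ for both $i$. The paper escapes this by using the phrase ``the algorithm considers'' in an essential way: if both subpolygons, with cuts $ab$ and $a'b'$ on the same grid line, arose in the recursion, then the origin point $o$ of the first recursive step would have to lie on both clockwise boundary chains $P(a,b)$ and $P(a',b')$, which are disjoint. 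Your proposal never invokes the recursive structure of the algorithm, so no amount of care with the canonical grid or the anchoring reflex vertices will rescue the same-line case.

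Your different-lines argument is close in spirit to the paper's (which observes that the vertical segment from $p$ down to the lower line must cross $\mybd{Q_2}$ along the chain between $a_1$ and $b_1$), but it too contains an unjustified step: the hypotheses $p^o_2\notin\cut_2$ and $p^o_2c_2\subset Q_2$ do not force $p^o_2$ onto the edge of $P$ collinear with $\cut_2$. The segment $p^o_2c_2$ may run along that collinear edge, pass its far endpoint --- a reflex vertex of $P$ beyond which the polygon opens up below the line --- and terminate at a point $p^o_2$ with $\myint{P}$ directly beneath it; in that sub-case your contradiction (that the descent from $p$ exits $P$ just below $p^o_2$) evaporates. So both halves need repair, and the first half specifically needs the algorithmic invariant that the paper supplies.
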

\begin{proof}
    For any subpolygon of $P$ (including $P$) and two points $a$ and $b$ on $\mybd{P}$, we denote by $P(a, b)$ 
    the subchain on $\mybd{P}$ in the clockwise direction from $a$ to $b$.

    We first show that among all subpolygons in $H^+$ and 
    boundary cuts lying on the same horizontal grid line of $\gt$, 
    at most one of them is considered by the algorithm. 
    Suppose there are two such subpolygons $Q$ and $Q'$ with (horizontal) boundary cuts $\cut=ab$ and
    $\cut'=a'b'$ considered by the algorithm, respectively.
    Then no vertical line intersects both $\cut$ and $\cut'$.
Moreover, $o \in P(a, b)$ and $o \in P(a', b')$
for the origin point $o$ used in the first recursive step of the algorithm.
This is not possible because $P(a, b) \cap P(a', b') = \emptyset$.
    Thus, there is at most one subpolygon considered by the algorithm
    among subpolygons in $H^+$, 
    with boundary cuts lying on the same horizontal grid line. See Figure~\ref{fig:1R_subpolygons}(b).

    Suppose there exist two subpolygons $Q_1$ and $Q_2$ satisfying the lemma statement, 
with horizontal boundary cuts $\ell_1 = a_1b_1$ and $\ell_2$ of $Q_1$ and $Q_2$, respectively, such that $\ell_1$ lies above $\ell_2$. 
    Then $pp'$ intersects the boundary of $Q_2$ at a point in $Q_2(a_1, b_1)$, contradicting $pp' \subset Q_2$ for  
    the vertical projection $p'$ of $p$ onto the line through $\cut_2$.    
    See Figure~\ref{fig:1R_subpolygons}(c).
\end{proof}

With an argument similar to the one in the proof of Lemma~\ref{lem:1C_number}, 
we can bound the number of subpolygons of type \oner. 

\begin{lemma}
    There are $O(n)$ subpolygons of type \oner.
    \label{lem:1R_number}
\end{lemma}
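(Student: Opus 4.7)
The plan is to adapt the charging argument of Lemma~\ref{lem:1C_number}, now charging each \oner subpolygon to a reflex vertex of $P$ lying on its boundary cut.

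First I would establish the key geometric claim: if $Q$ is a \oner subpolygon with boundary cut $\cut$ lying on a grid line $L$ of $\gt$, then both endpoints of $\cut$ are reflex vertices of $P$. By definition of \oner, each endpoint of $\cut$ lies on an edge of $P$ collinear with $L$. To rule out an endpoint in the interior of such an edge, note that the cut emanating from such a point along $L$ would locally overlap that edge, contradicting $\cut \subset \myint{P}$. To rule out a convex vertex of $P$ as an endpoint, observe that at a convex vertex the interior of $P$ occupies only one quadrant, and the incident edge of $P$ on $L$ already occupies the unique direction along $L$ entering that quadrant; the opposite direction along $L$ exits $P$. Hence $\cut$ cannot emanate from a convex vertex along $L$ into $\myint{P}$, and so every endpoint of $\cut$ must be a reflex vertex of $P$.

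Next I would count. For each reflex vertex $v$ of $P$, $v$ has exactly two incident edges, one horizontal and one vertical. A \oner cut having $v$ as an endpoint must extend from $v$ along the line through one of these incident edges, in the direction opposite to that edge; its other endpoint is then uniquely determined as the first point at which this ray meets $\mybd{P}$. So for each of the two directions the cut is uniquely determined, giving at most two \oner boundary cuts incident to $v$. Each such cut gives rise to at most two \oner subpolygons, one on each side. Summing over the $O(n)$ reflex vertices of $P$ yields $O(n)$ subpolygons of type \oner.

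The main obstacle is the case analysis needed to show that neither interior-edge points nor convex vertices can serve as endpoints of a \oner boundary cut; once this is settled, the counting is a direct analogue of Lemma~\ref{lem:1C_number}.
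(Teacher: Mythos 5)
Your proof is correct and follows essentially the same route as the paper, which simply invokes the charging argument of Lemma~\ref{lem:1C_number}: each subpolygon of type \oner is charged to a reflex vertex of $P$ at an endpoint of its boundary cut, of which there are $O(n)$, with $O(1)$ cuts per vertex and $O(1)$ subpolygons per cut. You additionally spell out the justification (omitted in the paper) that the endpoints of a \oner boundary cut must be reflex vertices rather than edge-interior points or convex vertices, which is a correct and worthwhile elaboration.
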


With Lemmas~\ref{lem:1R_vtp},~\ref{lem:1R.reachable}, and~\ref{lem:1R_number},
we can bound the number of candidate triplets of type \oner.

\begin{lemma}
    There are $O(n^3)$ candidate triplets of type \oner.
    \label{lem:1R_result}
\end{lemma}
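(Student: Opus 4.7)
The plan is to bound the number of candidate triplets $(Q,o,p)$ of type \oner by $O(n^3)$ by combining the three preceding lemmas. By symmetry I focus on subpolygons in $H^+$ (horizontal boundary cut $\cut$, $Q$ lying above $\cut$); the three families $H^-$, $V^+$, $V^-$ are analogous.

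Fix a subpolygon $Q\in H^+$ with boundary cut $\cut$ having $x$-extent $[x_1,x_2]$. For a candidate triplet $(Q,o,p)$, I split on the position of the vertical projection $p^o$ of $p$ onto the line through $\cut$. In the first case, $p^o\notin\cut$ (equivalently $x(p)\notin[x_1,x_2]$). Rule 3 of Lemma~\ref{lem:minink_rules} combined with $R_{op}\subseteq Q$ implies that the configuration falls within the hypotheses of Lemma~\ref{lem:1R.reachable}, so for each grid point $p\in\gt$ at most one subpolygon $Q\in H^+$ is considered. Since there are $O(n^2)$ grid points $p$ and for each pair $(Q,p)$ the origin $o$ ranges over only $O(n)$ grid points on the line through $\cut$ inside $Q$, this case contributes $O(n^2)\cdot O(n)=O(n^3)$ triplets.

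In the complementary case, $p^o\in\myint{\cut}$, i.e.\ $x(p)\in(x_1,x_2)$. By Lemma~\ref{lem:1R_vtp}, the origin must satisfy $x(o)\notin\myint{\cut}$, so $o$ lies on one of the two collinear edges of $P$ extending from an endpoint of $\cut$ (or at an endpoint itself). By Lemma~\ref{lem:1R_number} there are only $O(n)$ subpolygons of type \oner in total. For each such $Q\in H^+$, the origins $o$ with $x(o)\notin\myint{\cut}$ along the line through $\cut$ number $O(n)$, and for each fixed origin the admissible partners $p$ lie on $O(n)$ grid columns with $x(p)\in(x_1,x_2)$; once $(Q,o,x(p))$ is fixed, the requirement that the vertical sides of $R_{op}$ lie on \vcuts together with $R_{op}\subseteq Q$ restricts $y(p)$ to $O(1)$ admissible grid rows (determined by the lowest reflex vertex of $Q$ encountered above $\cut$ at those columns). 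Multiplying, Case 2 contributes at most $O(n)\cdot O(n)\cdot O(n)\cdot O(1)=O(n^3)$ triplets.

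The main obstacle I anticipate is the $y(p)$-bound in Case 2: showing that once the column $x(p)$ and origin $o$ are fixed, only a constant number of $y(p)$-levels yield a candidate triplet. I plan to resolve this by a local analysis of the obstruction above $\cut$ in $Q$ that forces the top side of $R_{op}$ to be incident to a specific reflex vertex, together with the constraint from Lemma~\ref{lem:1R_vtp} that prevents both top corners from projecting into $\myint{\cut}$. Adding the contributions of the two cases, and extending by symmetry to $H^-$, $V^+$, $V^-$, yields the desired $O(n^3)$ bound.
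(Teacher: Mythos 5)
Your Case 1 is sound and matches the paper's first subcase: for a partner point $p$ whose projection $p^o$ lies off the closed segment $\cut$, Lemma~\ref{lem:1R.reachable} applied to $p$ pins down $Q$, and multiplying by the $O(n)$ origins gives $O(n^3)$. But your Case 2 has a genuine gap, and it is exactly the step you flag as the main obstacle: the claim that for fixed $(Q,o,x(p))$ only $O(1)$ values of $y(p)$ yield a candidate triplet. The vertex-incidence constraint only forces the horizontal cut covering the top side of $R_{op}$ to have an endpoint at \emph{some} reflex vertex, and the set of heights at which a leftward-visible reflex vertex can anchor such a cut can have size $\Theta(n)$ (e.g.\ when the boundary of $Q$ to the left of the column of $o$ is a long staircase); nothing in rule 3 or in Lemma~\ref{lem:1R_vtp} pins $y(p)$ to a constant number of levels, so the "local analysis of the obstruction above $\cut$" you propose does not obviously exist. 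There is also a smaller coverage gap: your two cases are $x(p)\notin[x_1,x_2]$ and $x(p)\in(x_1,x_2)$, which omits $x(p)\in\{x_1,x_2\}$.

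The paper avoids the problematic per-cell bound by anchoring the case split not on $p$ but on \emph{whichever} top corner $s$ of $R_{op}$ projects outside $\myint{\cut}$ (Lemma~\ref{lem:1R_vtp} guarantees one exists). In your Case 2 that corner is $s=(x(o),y(p))$, whose projection is $o$ itself. When $s^o\notin\cut$, Lemma~\ref{lem:1R.reachable} applied to the grid point $s$ (not to $p$) determines $Q$, and $s$ already encodes both $o=s^o$ and $y(p)=y(s)$; only $x(p)$ remains free, giving $O(n)$ triplets per grid point $s$ and $O(n^3)$ overall, with no claim about how many $y(p)$ share a fixed $(Q,o,x(p))$. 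When $s^o$ is an endpoint of $\cut$ (your missing boundary case), the paper instead uses Lemma~\ref{lem:1R_number}: $O(n)$ subpolygons, $O(n)$ choices of $y(s)$ on the vertical line through that endpoint, and $O(n)$ origins, again $O(n^3)$. I suggest you restructure Case 2 along these lines rather than trying to prove the $O(1)$ bound on $y(p)$.
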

\begin{proof}
    We consider only the subpolygons in $H^+$.
    The other cases can be handled similarly.
    By Lemma~\ref{lem:1R_vtp}, 
    we count the candidate triplets $(Q, o, p)$ of type \oner such that
    at least one of the top corners of $R_{op}$, denoted $s$, satisfies 
    $ss^o \subset Q$, $s^o \notin \myint{\cut}$, and $s^oa \subset Q$ for an endpoint $a$ of $\cut$,
    where $s^o$ is the vertical projection of $s$ onto the line through $\cut$.

    First, we count the candidate triplets such that $s^o \notin \cut$.
    By Lemma~\ref{lem:1R.reachable}, there is at most one subpolygon of type \oner  
    for $s$. 
    By checking all $O(n)$ origin points for the subpolygon, 
    there are $O(n)$ candidate triplets for each grid point of $\gt$. 
    Since there are $O(n^2)$ grid points of $\gt$, there are $O(n^3)$ such candidate triplets.

    Then, we count the candidate triplets $(Q, o, p)$ 
    such that $s^o$ lies on an endpoint of $\ell$.
    For each $Q$ of type \oner, there are $O(n^2)$ such candidate triplets 
    because there are $O(n)$ horizontal grid lines of $\gt$ for the choice of $s$ and $O(n)$ vertical grid lines of $\gt$ for the choice of $o$. 
    By Lemma~\ref{lem:1R_number}, 
    there are $O(n^3)$ such candidate triplets in total.

    So, the total number of candidate triplets of type \oner is $O(n^3)$.
\end{proof}

\subsubsection{Candidate triplets of type \twor.}
\label{subsec:2R}
We bound the number of candidate triplets satisfying condition (1) of 
rule 2 in Lemma~\ref{lem:minink_rules} to $O(n^3)$
in a way similar to the one in Section~\ref{subsec:2C}.
For the candidate triplets satisfying condition (2), 
we classify them into two subtypes depending on the position of $p$,
and then bound the number of candidate triplets to be checked by the algorithm
to $O(n^3)$ for each type.

We assume that the cut direction of $Q$ is up-right. 
The four regions of the plane subdivided by the vertical line and the horizontal
line through $o$ are called the \emph{$o$-quadrants}, labeled from 
$\oq{1}$ (top-right region) to $\oq{4}$
(bottom-right region), in counterclockwise direction around $o$.
By condition (2) of 
rule 2 in Lemma~\ref{lem:minink_rules},
the partner point $p$ from 
a candidate triplet $(Q, o, p)$ of type \twor lies either in 
$\oq{2}$ or $\oq{4}$.

Given a candidate triplet $(Q, o, p)$ of type \twor, 
$p$ is \emph{cut-visible} in $Q$ ($(Q, o, p)$ of type \tworv)
if $p$ has an orthogonal projection on $\cut_v$ or on $\cut_h$.
It is \emph{cut-invisible} ($(Q, o, p)$ of type \twori) otherwise.

\myparagraph{Candidate Triplets $(Q, o, p)$ with $p$ cut-invisible (\twori).} 
For each partner point $p$,
there is at most one subpolygon of type \twor per grid line, among those with the same cut direction and cut-invisible $p$.
\begin{lemma}
    \label{lem:origin_second}
    For each grid point $p$, there is at most one subpolygon $Q$ of type \twor
    with origin point $o$
    such that $p \in \oq{4}$ for the candidate triplet $(Q,o,p)$
    among the subpolygons with up-right cut direction,
    origin points lying on the same vertical grid line, and $p$ cut-invisible.
\end{lemma}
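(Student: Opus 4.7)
I would argue by contradiction, following the template of the proofs of Lemmas~\ref{lem:2Ch_bound} and~\ref{lem:2Co_bound} and Corollary~\ref{corollary:2Co_bound}. Suppose that there exist two distinct subpolygons $Q$ and $Q'$ of type \twor with up-right cut direction whose origin points $o$ and $o'$ lie on the same vertical grid line of $\gt$, such that both $(Q,o,p)$ and $(Q',o',p)$ are candidate triplets with $p\in\oq{4}$ for both $Q$ and $Q'$ and with $p$ cut-invisible in each. Since $x(o)=x(o')$, I may assume without loss of generality that $y(o)>y(o')$.

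Next, I would unpack the coordinate constraints placed on $p$. From $p\in\oq{4}$ in $Q'$, $x(p)>x(o')=x(o)$ and $y(p)<y(o')$, which combined with $y(o')<y(o)$ yields the strict chain $y(p)<y(o')<y(o)$. Cut-invisibility of $p$ in $Q'$ means that the vertical line through $p$ does not intersect the horizontal boundary cut $\cut_h'$ of $Q'$; since $\cut_h'$ extends rightward from $o'$, this forces $x(p)$ to lie strictly to the right of the right endpoint $h'$ of $\cut_h'$, giving $x(o)<x(h')<x(p)$ and $y(h')=y(o')$.

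I would then establish that $h'\in\myint{R_{op}}\cap\mybd P$. For the first membership, $R_{op}$ has $x$-range $[x(o),x(p)]$ and $y$-range $[y(p),y(o)]$, and the coordinates of $h'$ lie strictly inside both ranges. For the second, $\mybd Q'$ decomposes into part of $\mybd P$ together with $\cut_h'$ and $\cut_v'$; the endpoint $h'$ of $\cut_h'$ is different from $o'$ and cannot lie on $\cut_v'$ (their $x$-coordinates differ), so $h'$ must lie on $\mybd P$.

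Finally, I would derive a contradiction from $(Q,o,p)$ being a candidate triplet: the uni-rectangle partition determined by $R_{op}$ forces $\myint{R_{op}}\subseteq\myint Q$, and in particular $\mybd Q\cap\myint{R_{op}}=\emptyset$. But $h'\in\myint{R_{op}}\subseteq Q\subseteq P$ together with $h'\in\mybd P$ yields $h'\in\mybd P\cap Q\subseteq\mybd Q$, contradicting the previous identity. The main point to be careful about is justifying that $h'\in\mybd P$ is truly incompatible with $h'\in\myint Q$; since $P$ is rectilinear and $\cut_h'$ is horizontal, $\mybd P$ is locally a vertical segment at $h'$ and hence genuinely crosses $\myint{R_{op}}$, obstructing the required containment.
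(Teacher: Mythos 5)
Your proof is correct and follows essentially the same route as the paper's: assume two such triplets with $y(o')<y(o)$, take the right endpoint of the horizontal boundary cut of $Q'$ as the witness, use cut-invisibility and $p\in\oq{4}$ to place it strictly inside $R_{op}$, and conclude $R_{op}\not\subset Q$ because that point lies on $\mybd{P}$. Your justification of the final step (via $\mybd{P}\cap Q\subseteq\mybd{Q}$) is in fact spelled out a bit more carefully than the paper's one-line remark.
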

\begin{proof}
    The proof is similar to the one for Lemma~\ref{lem:2Co_bound}. 
    Suppose that there are two candidate triplets $(Q, o, p)$ and $(Q', o', p)$ 
    satisfying the conditions in the lemma statement.
    Assume that $y(o') < y(o)$.  
    Let $u$ be the right endpoint of the horizontal boundary cut $\cut'_{h}$ of $Q'$.
    If $p \in \oq{4}'$ then $x(p) > x(u) > x(o') = x(o)$ and 
    $y(p) < y(u) = y(o') < y(o)$. 
    Thus, $u\in\myint{R_{op}}$ and $R_{op} \not\subset Q$ 
    because $u$ is a reflex vertex of $Q$.
    This contradicts that $(Q, o, p)$ is a candidate triplet.
\end{proof}

Lemma~\ref{lem:origin_second} also holds for grid point $p$ in $\oq{2}$
and origin points lying on the same horizontal grid line. 
The number of candidate triplets of type \twori per grid line is thus $O(n^2)$ 
by the choice of $p$, which proves $\candset{\twori}= O(n^3)$.

\begin{lemma}
    \label{lem:2Ri_result}
    There are $O(n^3)$ candidate triplets of type \twori.
\end{lemma}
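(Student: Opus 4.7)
The plan is to bound the count by fixing, for each candidate triplet $(Q, o, p)$ of type \twori, the grid point $p$ together with a grid line of $\gt$ containing the origin point $o$, and then invoking Lemma~\ref{lem:origin_second} (and its horizontal analog) to show that at most one subpolygon $Q$ is compatible with this choice. Since there are $O(n^2)$ grid points and $O(n)$ grid lines, this will give $O(n^3)$ triplets per direction/side configuration, and $O(1)$ such configurations.

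By the rotational symmetry of the problem it suffices to handle subpolygons $Q$ whose cut direction is up-right; multiplying the final count by the $O(1)$ possible cut directions will absorb the other cases. By condition (2) of rule~2 in Lemma~\ref{lem:minink_rules}, any partner point $p$ of a candidate triplet of type \twor lies in $\oq{2} \cup \oq{4}$ (the quadrants neither containing $\kc$ nor opposite to it), so I will split the argument into the two subcases $p \in \oq{4}$ and $p \in \oq{2}$.

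For $p \in \oq{4}$, Lemma~\ref{lem:origin_second} says that once we fix a vertical grid line $L$ of $\gt$ and a grid point $p$, there is at most one subpolygon $Q$ of type \twor with up-right cut direction, origin point on $L$, cut-invisible $p$, and $p \in \oq{4}$. Summing over the $O(n)$ vertical grid lines and the $O(n^2)$ grid points gives $O(n^3)$ candidate triplets in this subcase. The subcase $p \in \oq{2}$ is symmetric: as noted right after Lemma~\ref{lem:origin_second}, the same statement holds with vertical grid lines replaced by horizontal grid lines, so fixing a horizontal grid line through $o$ and a grid point $p$ pins down $Q$ uniquely, again yielding $O(n^3)$ triplets. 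Adding the two subcases and multiplying by the $O(1)$ cut directions completes the bound.

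The main obstacle is essentially bookkeeping rather than a new geometric idea: I need to ensure that the two applications of Lemma~\ref{lem:origin_second} (for the two quadrants $\oq{4}$ and $\oq{2}$) genuinely partition the set of \twori candidate triplets, and that the cut-invisibility hypothesis is consistently used so that the projection argument inside Lemma~\ref{lem:origin_second} actually applies. Once this partition is laid out cleanly, the counting is immediate and the $O(n^3)$ bound follows.
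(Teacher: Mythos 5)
Your argument is correct and follows essentially the same route as the paper: fix a grid point $p$ and the grid line of $\gt$ through the origin point, apply Lemma~\ref{lem:origin_second} (and its stated analog) to get at most one subpolygon per choice, and multiply the $O(n^2)$ grid points by the $O(n)$ grid lines and the $O(1)$ cut directions. Your pairing of $\oq{4}$ with vertical grid lines and $\oq{2}$ with horizontal grid lines is the one that matches the statement of Lemma~\ref{lem:origin_second}; the paper's own proof text lists the pairing the other way around, which appears to be a transcription slip rather than a substantive difference.
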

\begin{proof}
    We can group the candidate triplets $(Q, o, p)$ with 
    $p \in \oq{2}$ and $p$ cut-invisible in the subpolygon
    by the vertical grid lines of $\gt$ where their origin points lie.
    Similarly, we can group the candidate triplets $(Q, o, p)$ with 
    $p \in \oq{4}$ and $p$ cut-invisible in the subpolygon
    by the horizontal grid lines of $\gt$ where their origin points lie.
    By Lemma~\ref{lem:origin_second}, 
    there are $O(n^3)$ candidate triplets grouped by grid lines.
    This number is linear to the number of candidate triplets of type \twori
    because there are $O(n^2)$ grid points and $O(n)$ grid lines of $\gt$.
\end{proof}

\myparagraph{Candidate Triplets $(Q, o, p)$ with $p$ cut-visible (\tworv).}
Assume that the vertical grid lines are indexed from left to right, and 
the horizontal grid lines are indexed from bottom to top. 
Let $\gt (i, j)$ (or simply $(i, j)$ if understood in context) 
denote the grid point which is the intersection point of the $i$-th 
vertical grid line and the $j$-th horizontal grid line.
Observe that there is a unique cutset (consisting of \vcuts) that determines
$R_{op}$ for each candidate triplet $(Q, o, p)$ of type \tworv,
unless a corner of $R_{op}$ coincides with the right endpoint of $\cut_h$ 
or the upper endpoint of $\cut_v$ of $Q$.
If a corner coincides with such an endpoint of the boundary cuts,
there can be at most two cutsets, but this does not affect the asymptotic time 
complexity of the algorithm. 
So we assume that there is no such case.

Observe that $p$ has an orthogonal projection in the interior of a boundary cut,
and there is a one-to-one correspondence between candidate triplets $(Q, o, p)$ and 
uni-rectangle partitions $\ut = (Q, R_{op}, \lt)$ of type \tworv.
See Figure~\ref{fig:2R_subpolygons}(a).
Abusing the notation, let $\myink{Q, o, p} := \myink{Q, R_{op}, \lt}$.
We let $\myink{Q, o, p} = \infty$ if $p$ is not a partner point of $o$.
For a subpolygon $Q$ of type \tworv, 
we call an element $p^{*}$ of a set $S$ of grid points of $\gt$ 
an \emph{optimal partner point} of $o$ in $S$
if $p^{*} \in \arg\min_{p \in S}\myink{Q, o, p}$.

\begin{lemma}
    \label{lem:2R_observation}
    Let $Q$ and $Q'$ be subpolygons of type \twor with up-right cut direction
    and origin points $o= (i, j)$ and $o'=(i, j+1)$, respectively.
    Let $S = \{(i', j')\mid j' < j\}$ for any fixed integer $i'>i$.
    If $p$ is an optimal partner point of $o$ in $S$, 
    then $p$ is also an optimal partner point of $o'$ in $S$.
\end{lemma}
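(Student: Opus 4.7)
The plan is to prove that $\myink{Q', o', p} - \myink{Q, o, p}$ is the same constant $C(Q, Q')$ for every $p \in S$; this immediately yields $\argmin_{p \in S} \myink{Q, o, p} \subseteq \argmin_{p \in S} \myink{Q', o', p}$, which is exactly the claim. Because $o = (i, j)$ and $o' = (i, j+1)$ lie on the same vertical grid line and any $p = (i', j') \in S$ satisfies $i' > i$ and $j' < j < j+1$, the rectangle $R_{o'p}$ is exactly $R_{op}$ with a horizontal strip of fixed height $\Delta := y(o') - y(o)$ appended along its top, giving a clean geometric correspondence between the two uni-rectangle partitions.

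First, I would compare the cutsets $\lt$ and $\lt'$ that determine $R_{op}$ from $Q$ and $R_{o'p}$ from $Q'$, respectively. The bottom horizontal cut runs along row $j'$ in both partitions; being maximal, it extends leftward and rightward until it meets $\mybd{Q}$ and $\mybd{Q'}$. Because $Q$ and $Q'$ are fixed independently of $p$, these extensions contribute lengths that depend on $Q, Q'$ and on $j'$ but in exactly the same way in both partitions. The left and right vertical cuts each grow by exactly $\Delta$ when passing from $\lt$ to $\lt'$, since they span from row $j'$ up to the horizontal boundary cut of $Q$ (at row $j$) or of $Q'$ (at row $j+1$), with extensions beyond the rectangle depending only on $Q$ and $Q'$. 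Consequently $\len{\lt'} - \len{\lt}$ is a constant in $p$.

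Next, I would set up a correspondence between the non-rectangle subpolygons of the two partitions and argue that their ink amounts also differ by a $p$-independent constant. The pieces strictly below the bottom cut are determined entirely by the portions of $Q$ and $Q'$ below row $j'$ together with the bottom cut; since the bottom cut is identical in both partitions, their ink contributions match. The pieces above and to the left of the rectangle depend only on $Q$ or $Q'$, not on $p$. The pieces to the right of the rectangle in the two partitions differ only by the horizontal strip between rows $j$ and $j+1$ (truncated at column $i'$ on the left), which is a fixed region of $Q'$ independent of $j'$. Summing these constant differences together with $\len{\lt'} - \len{\lt}$ establishes the identity $\myink{Q', o', p} - \myink{Q, o, p} = C(Q, Q')$, completing the argument.

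The hardest part will be making the correspondence of the right-side subpolygons fully precise. The horizontal boundary cuts of $Q$ and $Q'$ may terminate at different columns on the right, and the strip between rows $j$ and $j+1$ may interact with $\mybd{P}$ in nontrivial ways. A careful case analysis on the relative positions of the right endpoints of these boundary cuts with respect to $i'$, together with a verification that any reflex vertex of $Q'$ introduced within the strip is handled by a \vcut whose length depends on $Q'$ alone, will be needed to confirm that the strip's ink contribution is genuinely a constant in $p$.
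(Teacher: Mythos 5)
Your proposal matches the paper's proof in substance: both establish that $\myink{Q', o', p} - \myink{Q, o, p}$ is independent of $p \in S$ by observing that the two uni-rectangle partitions differ only in the rectangle itself, the two vertical cuts flanking it (each longer by a fixed amount), and the single subpolygon adjacent to its right side, all of whose contributions are constant in $j'$ (the paper writes this as $\ut' = (\ut\setminus\{Q_{1}, R_{op}\}) \cup \{Q'_{1}, R_{o'p}\}$ and $\lt' = (\lt \setminus \{\cut_1, \cut_2\}) \cup \{\cut'_1, \cut'_2\}$). The only substantive difference is that the paper explicitly resolves the degenerate cases you flag at the end, namely a degenerate right-hand subpolygon (handled by setting $\len{\cut_2} = 0$ and $\myink{Q_1} = 0$) and the case where $R_{o'p} \not\subset Q'$ for every $p \in S$ (where all values are $\infty$ and the claim is vacuous).
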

\begin{figure}[ht]
    \begin{center}
      \includegraphics[width=.9\textwidth]{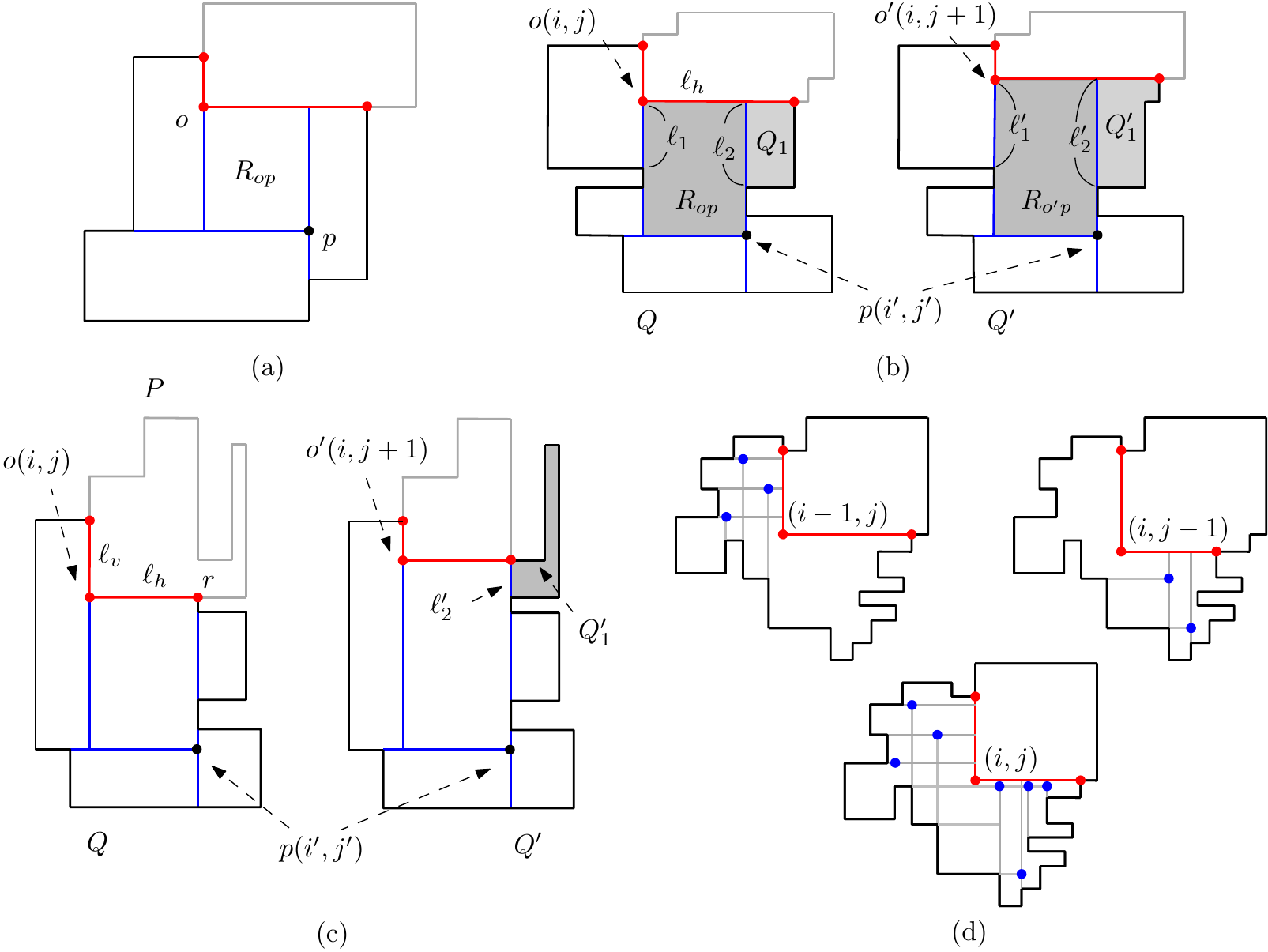}
    \end{center}
    \caption{
    (a) Unique cutset $\lt$ 
    that determines $R_{op}$ for any candidate triplet $(Q, o, p)$ of type \tworv 
    satisfying condition (2) of 
    rule 2 in Lemma~\ref{lem:minink_rules}.
    (b) $\ut' = (\ut\setminus\{Q_{1}, R_{op}\}) \cup \{Q'_{1}, R_{o'p}\}$ and 
    $\lt' = (\lt \setminus \{\cut_1, \cut_2\}) \cup \{\cut'_1, \cut'_2\}$. 
    (c) A reflex vertex $r$ of $P$ is the right endpoint of $\cut_h$ and 
the upper endpoint of a vertical edge of $Q$.
	We set $\len{\cut_2} = 0$ and $\myink{Q_1} = 0$. 
    (d) We update $A(i,j)$ using $A(i-1, j)$ and the grid points 
on the $(i-1)$-th vertical grid line, 
and then update $B(i,j)$ using $B(i, j-1)$ and the grid points
on the $(j-1)$-th horizontal grid line in $O(n)$ time. 
    }
    \label{fig:2R_subpolygons}
\end{figure}

\begin{proof}
    We show that 
\begin{equation}
\begin{gathered}
        \myink{Q, o, p} < \myink{Q, o, q} \Rightarrow \myink{Q', o', p} <= \myink{Q', o', q}, \\
        \myink{Q, o, p} = \myink{Q, o, q} \Rightarrow \myink{Q', o', p} = \myink{Q', o', q}
\end{gathered}
\end{equation}    
    for any grid points $p, q \in S$.

    Let $\ut = (Q, R_{op}, \lt)$ and $\ut' = (Q', R_{o'p}, \lt')$ 
be the uni-rectangle partitions induced by the triplet $(Q, o, p)$ and $(Q', o', p)$, respectively.
Among the cuts in $\lt$, we denote by $\cut_1$ 
the vertical cut incident to $o$, and by $\cut_2$ the vertical cut other than $\cut_1$ and incident to 
the horizontal boundary cut $\cut_h$ of $Q$.
Let $Q_{1}$ be the subpolygon in $\ut\setminus \{R_{op}\}$ 
whose boundary intersects both $\cut_2$ and $\cut_{h}$ in (nondegenerate) line segments.
Similarly, we define $\cut'_1$ and $\cut'_2$ among cuts in $\lt'$, and 
$Q'_{1}$ in $\ut'\setminus \{R_{o'p}\}$.
    See Figure~\ref{fig:2R_subpolygons}(b).

    Because $Q$, $o$ and $i'$ are fixed, 
    we can consider each of $\myink{Q, o, (i', j')}$, 
    $\len{\cut_1}$, $\len{\cut_2}$, and $\myink{Q_{1}}$
    as a (discrete) function of $j'$ with domain $\{1, ..., j-1\}$.
Similarly, we can consider each of $\myink{Q', o', (i', j')}$, 
    $\len{\cut'_1}$, $\len{\cut'_2}$, and $\myink{Q'_{1}}$
    as a (discrete) function of $j'$ with domain $\{1, ..., j-1\}$.
    
    Observe that $\ut' = (\ut\setminus\{Q_{1}, R_{op}\}) \cup \{Q'_{1}, R_{o'p}\}$ and 
    $\lt' = (\lt \setminus \{\cut_1, \cut_2\}) \cup \{\cut'_1, \cut'_2\}$. 
    So we have
    \begin{eqnarray*}
    \myink{Q', o', p} &:=& \sum_{\cut \in \lt'} \len{\cut} + \sum_{Q' \in \ut'} \myink{Q'}\\
    &=& \myink{Q, o, p} - \len{\cut_1} - \len{\cut_2} - \myink{Q_{1}} + \len{\cut'_1} + \len{\cut'_2} + \myink{Q'_{1}}\\
    &=& \myink{Q, o, p} + c.
    \end{eqnarray*} 
    as a function of $j'$ for some constant $c$.
    This is because $\len{\cut'_1} - \len{\cut_1}$,  $\len{\cut'_2} - \len{\cut_2}$, 
    $\myink{Q'_{1}}$, and $\myink{Q_{1}}$ are all constants with respect to $j'$. 
    Thus, both equations in (1) are satisfied.

    A degenerate case may occur when the right endpoint of the horizontal boundary cut $\cut_h$ of $Q$ is a reflex vertex of $P$ 
    and it coincides with the upper endpoint of a vertical edge of $Q$.
In this case,
    we set $\len{\cut_2} = 0$ and $\myink{Q_{1}} = 0$. 
    Then, both equations in (1) are satisfied. See Figure~\ref{fig:2R_subpolygons}(c).

Another degenerate case may occur when $R_{o'p} \notin Q'$ for every $p \in S$. In this case, $\myink{Q', o', p} = \infty$ for every $p \in S$, and both equations in (1) are trivially satisfied.
\end{proof}

Indeed, there may exist origin points with which no subpolygon of type \twor with 
cut direction up-right is defined. 
We can handle such cases without increasing the running time of the algorithm.

By Lemma~\ref{lem:2R_observation}, each grid point $p$ of $\gt$ 
is compared at most once as the partner point in the candidate triplets $(Q, o, p)$ 
among the subpolygons $Q$ of type \twor with up-right cut direction and
origin points lying on the same grid line.
To efficiently compute a partner point 
that minimizes $\myink{Q, o, p}$ 
for each subpolygon $Q$ of type \twor with up-right cut direction, 
we define another grid $\gtpr$ coarser than $\gt$ such that 
for every subpolygon of type \twor with up-right cut direction, 
its origin point is a grid point of $\gtpr$.
$\gtpr$ is induced by half-lines, each of which is 
a ray going either leftward horizontally or downward vertically  
from a reflex vertex of $P$.
Thus, every subpolygon of type \twor with up-right cut direction
has its origin point on a grid point of $\gtpr$.
Moreover, every grid point contained in $\myint{P}$ 
is an origin point of a subpolygon of type \twor with up-right cut direction.
The vertical grid lines of $\gtpr$ are indexed from left to right, and 
the horizontal grid lines of $\gtpr$ are indexed from bottom to top.

We maintain two arrays, $A(i,j)$ and $B(i,j)$, for each grid point $o = \gtpr(i, j)$.
$A(i,j)$ stores the partner points in quadrants $\oq{2}$ 
that minimize $\myink{Q, o(i, j), p}$, one for each horizontal grid line of $\gtpr$.
Similarly, $B(i, j)$ stores the partner points in quadrants $\oq{4}$
that minimize $\myink{Q, o(i, j), p}$, one for each vertical grid line of $\gtpr$.
We update $A(i,j)$ using $A(i-1, j)$ and the grid points 
on the $(i-1)$-th vertical grid line of $\gtpr$, 
and then update $B(i,j)$ using $B(i, j-1)$ and the grid points 
on the $(j-1)$-th horizontal grid line of $\gtpr$ in $O(n)$ time.
Then, we compute a partner point that leads to an optimal uni-rectangle partition, 
by comparing the partner points in $A(i,j)$ and $B(i,j)$ in $O(n)$ time.
See Figure~\ref{fig:2R_subpolygons}(d).
The base cases are the grid points $\gtpr(i,j)$ such that 
$\gtpr(i, j)$ is contained in $\myint{P}$ and 
both $\gtpr(i-1,j)$ and $\gtpr(i,j-1)$ are on $\mybd{P}$. 
See Figure~\ref{fig:quadratic_space}.

Since there are $O(1)$ cut directions 
and $O(n^2)$ 
grid points of $\gt$, 
we have Lemma~\ref{lem:2R_result}.
\begin{lemma}
    \label{lem:2R_result}
    We can obtain $\myink{\cdot}$ for 
    all subpolygons of type \twor satisfying rule 2 in Lemma~\ref{lem:minink_rules}, by checking 
    $O(n^3)$ uni-rectangle partitions.
\end{lemma}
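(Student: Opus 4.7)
The plan is to bound the uni-rectangle partitions checked for subpolygons of type \twor in three stages, one per regime defined by rule 2 of Lemma~\ref{lem:minink_rules}, and then sum the three $O(n^3)$ bounds.

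First I will handle candidate triplets satisfying condition (1) of rule 2, that is, those in which a side of $R_{op}$ incident to $p$ intersects $\mybd{Q}$ in a nondegenerate segment. As in the parallel argument for subtype \twoc in Section~\ref{subsec:2C}, such partner points $p$ lie on a staircase chain of $O(n)$ grid points of $\gt$, and there are $O(n^2)$ subpolygons of type \twor, so this subcase contributes $O(n^3)$ triplets. Next, by Lemma~\ref{lem:2Ri_result}, the candidate triplets satisfying condition (2) with $p$ cut-invisible (subtype \twori) already account for $O(n^3)$ uni-rectangle partitions, so no extra work is needed there.

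The main obstacle is the cut-visible subcase \tworv, since a naive count of partner points per origin gives $\Theta(n^4)$. Here I will invoke Lemma~\ref{lem:2R_observation}: for fixed column index $i'$, the optimal partner point of $o=\gtpr(i,j)$ among grid points in $S=\{(i',j')\mid j'<j\}$ is also optimal for $o'=\gtpr(i,j+1)$, and the analogous statement holds across rows for $\oq{4}$. This justifies propagating, along each column and each row of $\gtpr$, the arrays $A(i,j)$ and $B(i,j)$ defined in the text, each of size $O(n)$, storing one candidate optimal partner point per grid line. I will update $A(i,j)$ from $A(i-1,j)$ by merging in the $O(n)$ grid points on the $(i-1)$-th vertical grid line of $\gtpr$, and symmetrically update $B(i,j)$ from $B(i,j-1)$, each in $O(n)$ time; then one more $O(n)$ scan over $A(i,j)\cup B(i,j)$ selects the overall optimal partner point.

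Thus at each grid point of $\gtpr$ the algorithm evaluates $\myink{Q,o,\cdot}$ on $O(n)$ uni-rectangle partitions, and $\gtpr$ has $O(n^2)$ grid points, giving $O(n^3)$ partitions checked in the \tworv regime. To close the argument I will verify the base cases, namely the grid points $\gtpr(i,j)\in\myint{P}$ whose left and lower neighbors in $\gtpr$ lie on $\mybd{P}$ (where $A$ and $B$ are initialized from scratch), and I will note that the $O(1)$ possible cut directions only multiply the bound by a constant. Summing the contributions from condition~(1), \twori, and \tworv yields the claimed $O(n^3)$ total.
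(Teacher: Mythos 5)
Your proposal follows the paper's argument essentially verbatim: the staircase bound for triplets satisfying condition (1) of rule 2, Lemma~\ref{lem:2Ri_result} for the cut-invisible subcase, and the coherence of Lemma~\ref{lem:2R_observation} driving the $A(i,j)$/$B(i,j)$ array propagation over $\gtpr$ (with the stated base cases and the constant factor for cut directions) for the cut-visible subcase, summing to $O(n^3)$. The only slip is cosmetic: the row-wise analogue of Lemma~\ref{lem:2R_observation} governs partner points in \oq{2}, not \oq{4} --- the latter is exactly what the column-wise statement you quoted already covers --- and your subsequent description of the arrays handles both quadrants correctly anyway.
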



\begin{figure}[ht]
    \begin{center}
      \includegraphics[width=.4\textwidth]{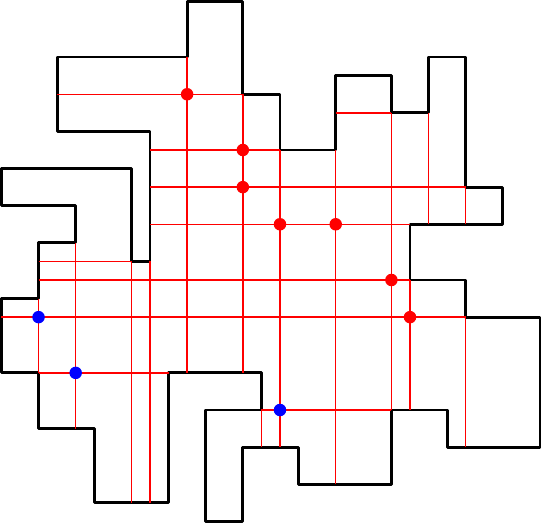}
    \end{center}
    \caption{Red grid points are on a staircase chain of grid points. Blue grid points represent base cases. 
    }
    \label{fig:quadratic_space}
\end{figure}
With the bounds on the numbers of candidate triplets of types \onec, \oner, \twoc and \twor, 
to be considered by the algorithm to obtain an \mip of $P$,
we have a main result.

\begin{theorem}
    We can compute an \mip of a rectilinear polygon with $n$ vertices and no holes in the plane in 
    $O(n^3)$ time using $O(n^2)$ space.
\end{theorem}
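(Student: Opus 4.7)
The plan is to assemble the theorem from the ingredients already established in Sections~\ref{subsec:2C}--\ref{subsec:2R} together with Lemmas~\ref{lem:uni} and~\ref{lem:vertex_cut_minink}. The algorithm is a dynamic program whose subproblems are the $\le$2-cut subpolygons of $P$, keyed by the at most two boundary cuts (each described by $O(1)$ information). By the 2-cut property maintained throughout, the recursion stays inside this family, and since boundary cuts are \vcuts lying on the canonical grid, there are only $O(n^2)$ distinct subproblems, each storing a single value $\myink{Q}$; this gives the $O(n^2)$ space bound for the DP table.

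First I would invoke Lemma~\ref{lem:uni} to justify that $\myink{Q}$ is obtained by taking, for a fixed origin point $o$ of $Q$ chosen as described in the ``Choosing origin points'' paragraph, the minimum of $\myink{Q,R_{op},\lt}$ over all candidate $(o,p)$ and cutsets $\lt$. Lemma~\ref{lem:vertex_cut_minink} then restricts attention to \vcuts, and Lemma~\ref{lem:minink_rules} restricts attention to partner points $p$ obeying the three rules while preserving the 2-cut property, so that recursion on each resulting subpolygon is well-defined.

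Next I would account for the total work by summing the candidate triplets over all types. By Lemmas~\ref{lem:2C_result},~\ref{lem:1C_result}, and~\ref{lem:1R_result}, candidate triplets of types \twoc, \onec, and \oner number $O(n^3)$; and by Lemma~\ref{lem:2R_result}, it suffices to evaluate $O(n^3)$ uni-rectangle partitions to obtain $\myink{\cdot}$ for all \twor subpolygons, using the coherence of Lemma~\ref{lem:2R_observation} encoded in the arrays $A(i,j)$ and $B(i,j)$ on $\gtpr$, each of size $O(n)$ and updatable in $O(n)$ amortized time per grid point. Because, as promised in Section~\ref{sec:made_clear}, each $\myink{Q,R_{op},\lt}$ value can be assembled in $O(1)$ time from table lookups after an $O(n^2)$-time preprocessing, the total running time is $O(n^3)$.

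The main obstacle, which is already absorbed by the earlier lemmas, is the \twor case: naively one has $\Theta(n^2)$ partner points per subpolygon and $\Theta(n^2)$ subpolygons, giving $\Theta(n^4)$; the gain comes from Lemma~\ref{lem:2R_observation}, which lets the $A$ and $B$ arrays be propagated along rows and columns of $\gtpr$ so that each grid point is touched a constant number of times per cut direction. Once this is in hand, the final accounting is routine: $O(n^2)$ preprocessing plus $O(n^3)$ triplet evaluations plus $O(n^2)$ DP table entries gives $O(n^3)$ time and $O(n^2)$ space, completing the proof.
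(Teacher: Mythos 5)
Your overall route is the same as the paper's: correctness from Lemmas~\ref{lem:uni}, \ref{lem:vertex_cut_minink}, and~\ref{lem:minink_rules}, the time bound from summing the $O(n^3)$ candidate-triplet counts of Lemmas~\ref{lem:2C_result}, \ref{lem:1C_result}, \ref{lem:1R_result}, and~\ref{lem:2R_result} together with the $O(1)$-time $\myink{\cdot}$ evaluation after $O(n^2)$ preprocessing. That part of your argument is fine.

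There is, however, a concrete gap in your space accounting for the type-\twor machinery. You correctly note that the coherence of Lemma~\ref{lem:2R_observation} is realized by arrays $A(i,j)$ and $B(i,j)$ of length $O(n)$, but these arrays are indexed by the grid points of $\gtpr$, of which there are $O(n^2)$; keeping all of them alive simultaneously costs $O(n^3)$ space, which breaks the claimed $O(n^2)$ bound. Your final tally (``$O(n^2)$ preprocessing plus $O(n^3)$ triplet evaluations plus $O(n^2)$ DP table entries'') simply omits this contribution. The missing idea, which the paper supplies, is that $A(i,j)$ depends only on $A(i-1,j)$ and $B(i,j)$ only on $B(i,j-1)$, so the computation can be scheduled so that at any moment only the arrays attached to the grid points on a single staircase frontier of $\gtpr$ need to be retained; such a frontier contains $O(n)$ grid points, giving $O(n)\cdot O(n)=O(n^2)$ space for this part. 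A second, more minor omission: you should also say how each chosen uni-rectangle partition is stored in the $O(n^2)$-entry table --- explicitly listing its $O(n)$ subpolygons per entry would again be too much, and the paper instead records only the origin/partner pair plus an $O(1)$ identifier for which of the constantly many \vcut cutsets is used, so that each table entry takes $O(1)$ space and the final \mip can be reconstructed in $O(n)$ time from the $O(n)$ cuts guaranteed by Lemma~\ref{lem:vertex_cut_minink}.
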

\begin{proof}
We analyze the time complexity of our algorithm.
The preprocessing step (which will be given in Section~\ref{sec:made_clear}) takes $O(n^2)$ time.
By Lemmas~\ref{lem:2C_result},~\ref{lem:1C_result},~\ref{lem:1R_result}, and~\ref{lem:2R_result}, 
there are $O(n^3)$ uni-rectangle partitions to be considered in order to obtain an \mip.
Since $\myink{\cdot}$ value for each uni-rectangle partition can be obtain in $O(1)$ time
after preprocessing, 
all the subpolygons encountered can be handled in $O(n^3)$ time in total.
After computing the cuts inducing an \mip of $P$,
we can construct the \mip in $O(n)$ time
as the number of cuts is $O(n)$ by Lemma~\ref{lem:vertex_cut_minink}.
So our algorithm runs in $O(n^3)$ time.

We analyze the space complexity of our algorithm. The preprocessing step for the algorithm (which will be given in Section~\ref{sec:made_clear}) uses $O(n^2)$ space.
The algorithm uses $O(1)$ space for each subpolygon of type \twoc, \onec, or \oner, 
since it stores the best partner point found so far for each origin point.
Since there are $O(n^2)$ such subpolygons, it uses $O(n^2)$ space for them in total.
Then, we analyze the space that the algorithm uses for subpolygons of type \twor.
For each origin point $\gtpr(i,j)$ of a subpolygon of type \twor, 
our algorithm uses two arrays, one for the origin point $\gtpr(i-1,j)$ and one for origin point $\gtpr(i,j-1)$.
Thus, it suffices to store the arrays corresponding to certain grid points of $\gtpr$
which are on a staircase chain 
of line segments on the grid lines of $\gtpr$. 
Since the size of each array is $O(n)$ and a staircase chain consists of 
$O(n)$ grid points at the same time,
the algorithm uses $O(n^2)$ space for subpolygons of type \twor.

When handling each subpolygon and its origin point, the algorithm stores not only $\myink{\cdot}$ value of a 
uni-rectangle partition
but also the uni-rectangle partition itself.
Because each uni-rectangle partition consists of $O(n)$ subpolygons 
with $O(n)$ vertices in total on their boundaries, 
the algorithm does not store it explicitly using a list of vertices.
Instead, it stores each uni-rectangle partition using
a pair of grid points corresponding to the origin and partner point, 
and a single number corresponding to the shape of cutsets. 
There are $O(1)$ distinct cutsets (consisting of \vcuts), 
each of which induces a uni-rectangle partition, for a given candidate triplet.
Thus, we use $O(1)$ space for storing each uni-rectangle partition.
Since the algorithm stores an optimal uni-rectangle partition for each $\le$2-cut subpolygon, 
it requires $O(n^2)$ space in total for storing them in total. 
Finally, it uses $O(n)$ space to store an \mip for $P$,
because the total number of cuts in the \mip is $O(n)$ by Lemma~\ref{lem:vertex_cut_minink}.
\end{proof}

Gonzalez and Zheng~\cite{gonzalezA} showed that the \mip algorithm by Lingas et al.~\cite{lingas}
can be extended to yield an approximation algorithm for 
the \mip partition problem on a rectangle containing $m$ point holes.
Their approximation algorithm first transforms the rectangle with point holes 
into a weakly-simple rectilinear polygon $P'$ without point holes in $O(m^2)$ time 
by connecting the point holes with layered staircase chains to the boundary of the rectangle.
Then, they apply the \mip algorithm by Lingas~et~al. to $P'$ and compute an \mip
in $O(m^4)$ time using $O(m^2)$ space. 
By applying our \mip algorithm instead of the one by Lingas~et~al., the time
complexity gets improved.

\begin{corollary}
    Given a rectangle $R$ with $m$ point holes, 
    we can compute a rectangular partition of $R$ in $O(m^3)$ time using $O(m^2)$ space
such that 
    no rectangle in the partition contains a point hole in its interior and 
    the total length of the line segments used for the partition is within three times the optimal.
\end{corollary}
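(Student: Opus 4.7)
The plan is to follow the reduction of Gonzalez and Zheng~\cite{gonzalezA} essentially verbatim, substituting our faster \mip subroutine for the one of Lingas et al. in the final step. First I would recall their transformation: given a rectangle $R$ containing $m$ point holes, one constructs in $O(m^2)$ time a weakly-simple rectilinear polygon $P'$ with $O(m)$ vertices and no holes, by attaching each point hole to $\mybd{R}$ (or to an already-attached hole) via a layered staircase chain of zero-width corridors. This transformation is shown in~\cite{gonzalezA} to guarantee that an \mip of $P'$, when interpreted as a partition of $R$, yields a hole-avoiding rectangular partition of $R$ whose total cut length is at most three times the optimal value for $R$ with holes. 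This $3$-approximation guarantee is entirely a property of the transformation, so I would simply cite it rather than reprove it.

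Next I would run our \mip algorithm on $P'$. Since $P'$ has $O(m)$ vertices, the main theorem gives an \mip in $O(m^3)$ time using $O(m^2)$ space. Strictly speaking, our algorithm was stated for simple rectilinear polygons, whereas $P'$ is only weakly-simple; I would note that this does not cause any trouble, because one may perturb $P'$ by an arbitrarily small $\varepsilon>0$ to obtain a genuinely simple rectilinear polygon $P''$ whose canonical grid, type classification of $\le 2$-cut subpolygons, and set of candidate triplets are all combinatorially identical to those of $P'$, so the algorithm proceeds verbatim and the computed cutset carries back to $P'$ (and thence to $R$) without change. Alternatively, as observed in~\cite{gonzalezA}, the algorithm can be run directly on $P'$ since each recursive subpolygon is itself weakly-simple and the dynamic program never depends on simplicity.

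Finally I would output the set of cuts obtained as the rectangular partition of $R$, discard the zero-width corridor edges that belonged to the transformation, and verify that each resulting rectangle in $R$ avoids all $m$ point holes in its interior (a direct consequence of the construction of $P'$). Combining the $O(m^2)$-time transformation with the $O(m^3)$-time, $O(m^2)$-space \mip computation yields the claimed $O(m^3)$ time and $O(m^2)$ space bounds, while the $3$-approximation guarantee is inherited from~\cite{gonzalezA}. The only mildly delicate step is handling the weak simplicity of $P'$, but since this is already addressed in the reference, the corollary follows almost immediately from plugging in our main theorem.
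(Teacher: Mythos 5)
Your proposal matches the paper's argument exactly: apply the Gonzalez--Zheng $O(m^2)$-time transformation to a weakly-simple rectilinear polygon with $O(m)$ vertices and no holes, then run our $O(n^3)$-time, $O(n^2)$-space \mip algorithm in place of the $O(n^4)$-time algorithm of Lingas et al., inheriting the $3$-approximation guarantee from the transformation. The extra care you take with weak simplicity (via perturbation) is a reasonable addition that the paper leaves implicit, but it does not change the approach.
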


\subsection{\mip algorithm made clear}
\label{sec:made_clear}
Lingas et al.~\cite{lingas} claimed that their \mip algorithm takes $O(n^4)$ time. 
The algorithm considers all candidate triplets 
for each subpolygon $Q$ of $P$. There are $O(1)$ uni-rectangle partitions 
$(Q, R_{op}, \lt)$ for a candidate triplet. 
For example, any triplet $(Q, o, p)$ of type \twoc can have at most three different cutsets,
each forming a uni-rectangle partition induced by the triplet. See Figure~\ref{fig:validcutset}
for an illustration.

\begin{figure}[ht]
    \begin{center}
      \includegraphics[width=.8\textwidth]{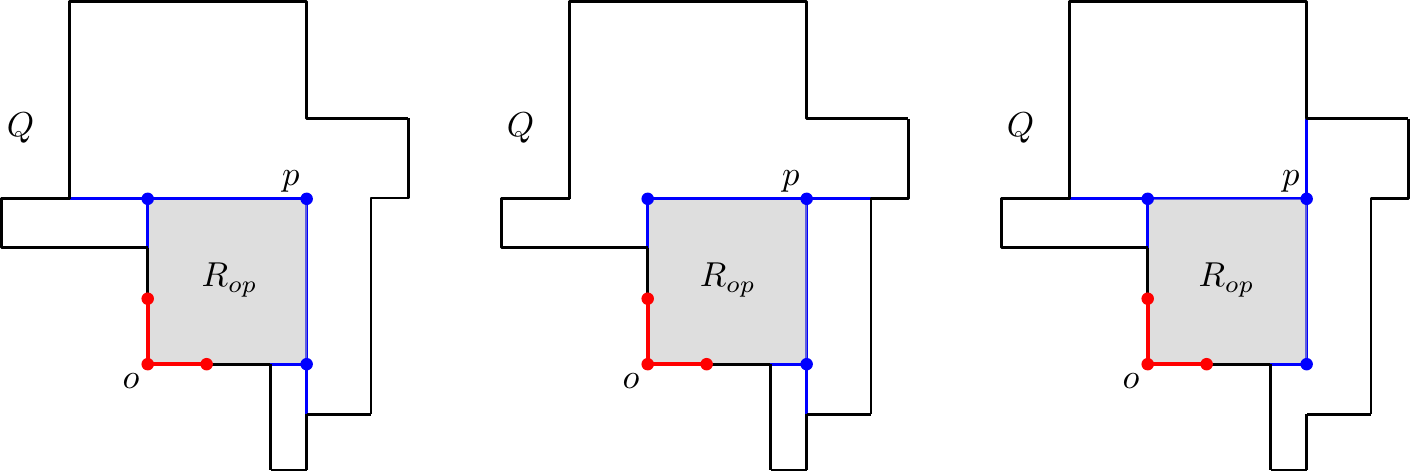}
    \end{center}
    \caption{Subpolygon $Q$ of type 2C and 
    uni-rectangle partitions induced by the triplet $(Q, o, p)$. 
    There are three distinct cutsets 
    each of which induces a uni-rectangle partition. 
    }
    \label{fig:validcutset}
\end{figure}

They claimed that given a triplet $(Q, o, p)$, they can check 
if $R_{op} \subset Q$ and compute $\myink{\ut}$ for
each of the uni-rectangle partitions $\ut = (Q, R_{op}, \lt)$ induced by the triplet
in $O(1)$ time. 
However, they did not provide any details on how to do this in $O(1)$ time.

We show that this can be done $O(1)$ time with some preprocessing.
We do not compute the uni-rectangle partitions explicitly,
because each partition may have $O(n)$ subpolygons.
Instead, we observe that there are $O(1)$ subpolygons in partition $(Q, R_{op}, \lt)$, 
each sharing a corner with $R_{op}$, and $O(n)$ subpolygons, each sharing 
a side (but sharing no corner) with $R_{op}$.
We compute $\myink{\cdot}$ for the subpolygons sharing a corner with $R_{op}$ 
in a brute-force way, and 
use range-sum queries to compute the sum of $\myink{\cdot}$ values 
for the subpolygons sharing a side with $R_{op}$.
This takes $O(1)$ time in total.

\myparagraph{Preprocessing.}
We preprocess $P$ such that given two grid points $o, p \in \gt$, 
one can determine if $R_{op}\subseteq P$ or not in $O(1)$ time. 
If $R_{op}\subseteq P$, $\myink{\ut}$ for $\ut = (Q, R_{op}, \lt)$ can also be obtained in $O(1)$ time.
We use a line sweep algorithm~\cite{vanKreveld} to construct a query structure (two matrices of size $O(n^2)$ and $O(n)$ arrays of length $O(n)$) in the preprocessing.

A $\le$2-cut subpolygon of $P$ can be specified by 
a grid point $g \in \gt$, the cut directions $d$ from $g$, and a truth value $t$. 
See Figure~\ref{fig:urp_query}(a,b).
For a 1-cut subpolygon with boundary cut $\cut$, $g$ is the right endpoint (if $\cut$ is horizontal) or the upper endpoint (if $\cut$ is vertical), and 
$d$ indicates whether $P$ lies to the right of the ray emanating from $b$ 
and containing $\cut$.
For a 2-cut subpolygon, $g$ is the endpoint of shared by the two boundary cuts, and $d$ indicates whether $g$ is a convex vertex or a reflex vertex of the subpolygon. 

\begin{figure}[ht]
    \begin{center}
      \includegraphics[width=.65\textwidth]{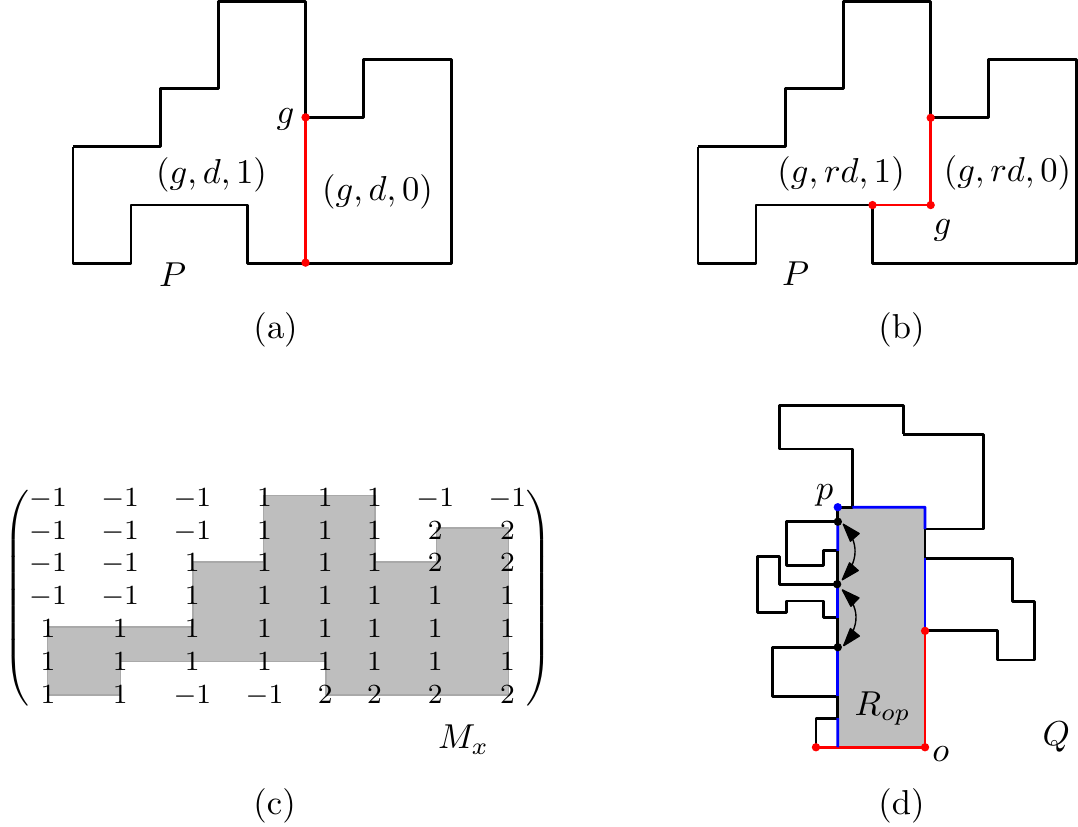}
    \end{center}
    \caption{(a) A rectilinear polygon $P$ and two 1-cut subpolygons induced by a cut incident to $g$. 
    (b) Two 2-cut subpolygons of $P$ induced by two cuts incident to $g$.
    (c) Each element of $M_x$ corresponds to a grid point of $\gt$. The gray region represents $P$.
    (d) For subpolygons of type \onec with boundary cuts lying on the same grid line of 
    $\gt$, we store pointers between two consecutive subpolygons along the grid line.
    }
    \label{fig:urp_query}
\end{figure}

With a line sweep, we construct a matrix $M_x$ ($M_y$) such that
each element corresponds to a grid point 
and two elements in the same row (column) have the same value if and only if 
the line segment connecting the grid points of $\gt$ 
corresponding to the two elements is contained in $P$.
See Figure~\ref{fig:urp_query}(c) for an illustration. 
The line sweep takes $O(n^2)$ time and space. 
Because $P$ has no holes, $R_{op}$ is contained in $P$ if and only if 
its four sides are all contained in $P$. 
By comparing the elements in $M_x$ ($M_y$) 
corresponding to the two horizontal (vertical) sides of $R_{op}$, 
we can check if $R_{op}\subset P$ in $O(1)$ time.

To compute $\myink{\rt_{\ut}}$ for an optimal refinement $\rt_{\ut}$ for 
a given uni-rectangle partition $\ut = (Q, R_{op}, \lt)$ in $O(1)$ time, 
we apply another line sweep that takes $O(n^2)$ time and space. 
During the line sweep, we consider the subpolygons of type \onec 
with boundary cuts lying on the grid line of $\gt$ containing the boundary cut of $Q$, 
and store pointers between two consecutive subpolygons along the grid line.
When a subpolygon $Q$ of type \onec is handled during the algorithm, 
the algorithm computes $\myink{\cdot}$ value of $Q$ and
$\myink{\cdot}$ values of the subpolygons linked by the pointers.
See Figure~\ref{fig:urp_query}(d). 
Then, we store $\myink{\cdot}$ values of such subpolygons in an array for each grid line of $\gt$.
Also, we store the lengths of collinear edges of $P$ in an array for each grid line of $\gt$.

Once we stored the lengths and $\myink{\cdot}$ values in arrays,
we compute $\myink{\cdot}$ values of the subpolygons
in a uni-rectangle partition $\ut = (Q, R_{op}, \lt)$ as follows.
There are $O(n)$ subpolygons that share a side (sharing no corner) with 
$R_{op}$. We compute $\myink{\cdot}$ values for these subpolygons
by performing four range-sum queries, each for a side of $R_{op}$. 
There are $O(1)$ subpolygons that share a corner with $R_{op}$.
We compute their $\myink{\cdot}$ values in a brute-force way.
Finally, we compute the total length of the line segments in $\lt$
by performing four range-sum queries to obtain the length of the 
intersection between $\mybd{P}$ and the sides of $R_{op}$.
All these queries and computations take $O(1)$ time in total. 

Observe that $\sum_{\cut \in \lt} \len{\cut}$ is the perimeter of $R_{op}$ minus $\sum_{\cut \in \mybd{P} \cap R_{op}} \len{\cut}$ plus the lengths of at most four line segments, each sharing a corner with $R_{op}$, which
can be computed in $O(1)$ time.
Because $\myink{\ut} := \sum_{\cut \in \lt} \len{\cut} + \sum_{Q' \in \ut} \myink{Q'}$ by definition,
we can compute $\myink{\ut}$ in $O(1)$ time.

\section{\tp algorithm}
\label{sec:thick_partition}

\subsection{Rectilinear polygon with no holes}
We give a \tp algorithm for a rectilinear polygon $P$ with $n$ 
vertices and no holes under the vertex incidence.
Our algorithm is based on the \tpv with $O(n^5)$ time and $O(n^4)$ space 
by O'Rourke and Tewari~\cite{oRourke}.
But with some modifications in a way similar to the one in Section~\ref{sec:minink_mine},
our algorithm returns a \vtp in $O(n^3\log^2n)$ time using $O(n^3)$ space.

Our algorithm classifies subpolygons into four types and enumerates 
the uni-rectangle partitions of each type without duplicates.
We define $\mywidth{\cdot}$ functions as follows.
\begin{itemize}
\item $\mywidth{R} := \min (a, b)$, for a rectangle $R$ with side lengths $a$ and $b$.
\item $\mywidth{\rt} := \min_{R \in \rt} \mywidth{R}$, for a rectangular partition $\rt$.
\item $\mywidth{Q}$ is the maximum of $\mywidth{\rt}$ over all rectangular partitions $\rt$
of a rectilinear polygon $Q$. 
\item $\mywidth{Q, R_{ab}, \lt} := \min_{Q' \in \ut} \mywidth{Q'}$ 
for a uni-rectangle partition $\ut = (Q, R_{ab}, \lt)$ of a rectilinear polygon $Q$.
\end{itemize}


The running time of our algorithm is proportional to the number of uni-rectangle partitions
for which we compute $\mywidth{\cdot}$ values.
The rules in Lemma~\ref{lem:minink_rules}
still hold for $\le$2-cut subpolygons, 
even if we seek for a \vtp. 
Hence, the number of candidate triplets considered by our algorithm for each $\le$2-cut subpolygon 
remains the same for each subtype.

\begin{lemma}
\label{lem:thick_subproblem_num}
    There are $O(n^3)$ candidate triplets $(Q,o,p)$ of types \onec, \oner, \twoc,
    and \twori. 
\end{lemma}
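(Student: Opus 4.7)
The plan is to observe that Lemma~\ref{lem:thick_subproblem_num} is essentially a direct corollary of the four counting lemmas (Lemmas~\ref{lem:2C_result},~\ref{lem:1C_result},~\ref{lem:1R_result}, and~\ref{lem:2Ri_result}) already established in Section~\ref{sec:minink_mine} for the \mip algorithm. The key point, which the paragraph preceding the lemma explicitly flags, is that the structural rules in Lemma~\ref{lem:minink_rules} governing which $(Q,o,p)$ triplets need to be considered were derived purely from the 2-cut property and the geometry of \vcuts, without reference to the cost being minimized. Hence they remain valid when one replaces $\myink{\cdot}$ by $\mywidth{\cdot}$ as the optimization criterion.

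First I would state that, since the classification of subpolygons into the four types (\onec, \oner, \twoc, \twor, and then \twor into \tworv and \twori) and the origin-point rules from Section~\ref{sec:minink_mine} are geometric in nature, they apply verbatim under the vertex incidence for computing a \vtp. Consequently, the set of candidate triplets considered for each subpolygon type is identical to the one used by the \mip algorithm.

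Next I would invoke each of the four previous lemmas in turn: Lemma~\ref{lem:2C_result} yields $O(n^3)$ candidate triplets of type \twoc, Lemma~\ref{lem:1C_result} yields $O(n^3)$ of type \onec, Lemma~\ref{lem:1R_result} yields $O(n^3)$ of type \oner, and Lemma~\ref{lem:2Ri_result} yields $O(n^3)$ of type \twori. Summing these four $O(n^3)$ bounds gives the claimed $O(n^3)$ total.

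There is no real obstacle here, as the work has already been done; the only subtlety to articulate is why the structural rules are criterion-independent. In particular, conditions such as ``$p \notin \myint{R_{o\kc}}$'' in rule~1 and the quadrant restriction in rule~2 were proved using only the requirement that subpolygons remain $\le 2$-cut subpolygons after partitioning, not using the fact that we are minimizing ink. The same is true for the pruning arguments in Lemmas~\ref{lem:2Ch_bound},~\ref{lem:2Co_bound}, and~\ref{lem:origin_second}, which are based on geometric containment: if $u \in \myint{R_{op}}$ for a reflex vertex $u$ of $Q$, then $R_{op} \not\subset Q$, independent of what we are optimizing. Thus the same counting arguments transfer immediately to the \vtp setting, completing the proof.
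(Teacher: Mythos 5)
Your proposal is correct and takes essentially the same route as the paper: the paper gives no separate proof of this lemma, justifying it only by the preceding remark that the rules of Lemma~\ref{lem:minink_rules} still hold when seeking a \vtp, so the bounds of Lemmas~\ref{lem:2C_result}, \ref{lem:1C_result}, \ref{lem:1R_result}, and~\ref{lem:2Ri_result} carry over and sum to $O(n^3)$. Your explicit observation that the pruning arguments (e.g., a reflex vertex $u\in\myint{R_{op}}$ forcing $R_{op}\not\subset Q$) are purely geometric and hence independent of the optimization criterion is exactly the needed justification, stated more carefully than in the paper itself.
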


We can preprocess $P$ so that $\mywidth{\cdot}$ value and the size of a given uni-rectangle partition can be computed efficiently, as in the following lemma.
Recall that $\card{\pt}$ denotes the number of subpolygons in a partition $\pt$ of a rectilinear polygon.

\begin{lemma}
    \label{lem:fattest_query}
    Given a uni-rectangle partition $\ut = (Q, R_{op}, \lt)$, 
    we can compute $\mywidth{\ut}$ and $\card{\ut}$ in $O(\log{n})$ time,
    after preprocessing of $P$ in $O(n^2 \log{n})$ time using $O(n^2)$ space.
\end{lemma}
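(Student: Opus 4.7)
The plan is to adapt the preprocessing framework of Section~\ref{sec:made_clear} for the \mip problem, replacing the range-sum structures used for $\myink{\cdot}$ with range-min structures for $\mywidth{\cdot}$ and range-sum structures for $\card{\cdot}$. The key structural observation, inherited from that section, is that for any uni-rectangle partition $\ut=(Q,R_{op},\lt)$ the subpolygons in $\ut\setminus\{R_{op}\}$ split into $O(1)$ corner-sharing subpolygons (each meeting $R_{op}$ at a corner) and $O(n)$ side-sharing subpolygons; moreover, for each side $s$ of $R_{op}$, the side-sharing subpolygons incident to $s$ are all of type \onec, with boundary cuts collinear to $s$ and ordered along the grid line of $\gt$ through $s$.

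First I would run the two line sweeps of Section~\ref{sec:made_clear}, producing the matrices $M_x,M_y$ that certify $R_{op}\subseteq Q$ in $O(1)$ time and enumerating, for every grid line $\ell$ of $\gt$, the ordered list of subpolygons of type \onec whose boundary cuts lie on $\ell$. Both sweeps together use $O(n^2)$ time and space. Next, for each grid line $\ell$ I build one segment tree whose leaves correspond (in order) to those \onec-subpolygons along $\ell$; each internal node stores the minimum of the $\mywidth{\cdot}$ values and the sum of the $\card{\cdot}$ values of the leaves in its range. The trees are initialized with $+\infty$ and $0$, and support $O(\log n)$-time point updates and range-min/range-sum queries. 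Summed over the $O(n)$ grid lines with $O(n)$ leaves each, the trees use $O(n^2)$ space and $O(n^2\log n)$ initialization time, matching the bounds claimed by the lemma.

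To answer a query on $\ut=(Q,R_{op},\lt)$, I first test $R_{op}\subseteq Q$ in $O(1)$ via $M_x,M_y$. Then for each of the four sides $s$ of $R_{op}$, I issue one range-min and one range-sum query on the segment tree attached to the grid line through $s$, restricted to the positional interval spanned by $s$; this returns the minimum of $\mywidth{\cdot}$ and the sum of $\card{\cdot}$ over the side-sharing subpolygons along $s$, in $O(\log n)$ time. The $O(1)$ corner-sharing subpolygons, together with $R_{op}$ itself, contribute a constant number of values that I fold in directly. Taking the overall minimum (including $\mywidth{R_{op}}$) gives $\mywidth{\ut}$, and taking the overall sum (plus $1$ for $R_{op}$) gives $\card{\ut}$, for a total query cost of $O(\log n)$.

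The main obstacle is guaranteeing that whenever a query on $\ut$ is issued, every side-sharing subpolygon of $\ut$ has already been inserted into the segment tree of its grid line with its correct $\mywidth{\cdot}$ and $\card{\cdot}$ values. This is arranged by the natural DP evaluation order of the \tp algorithm: each such subpolygon $Q'$ is a $\le 2$-cut subpolygon strictly contained in $Q$, so the DP resolves $Q'$ before $Q$, and at that moment the algorithm performs one $O(\log n)$-time point update in the segment tree of $Q'$'s grid line. Across the whole algorithm there are $O(n^2)$ such updates, contributing $O(n^2\log n)$ total time and no additional asymptotic space, which is absorbed by the preprocessing budget; each individual query then costs the stated $O(\log n)$.
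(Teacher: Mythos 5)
Your proposal is correct and follows essentially the same route as the paper: both replace the range-sum arrays of Section~\ref{sec:made_clear} with per-grid-line segment trees supporting range-minimum queries over the $\mywidth{\cdot}$ values of the type-\onec side-sharing subpolygons, handle the $O(1)$ corner-sharing subpolygons by brute force, and charge $O(n\log n)$ per grid line to reach the stated $O(n^2\log n)$ time and $O(n^2)$ space. The only cosmetic difference is that the paper keeps a separate array for the $\card{\cdot}$ sums while you fold them into the segment tree, which does not affect the bounds.
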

\begin{proof}
    We preprocess $P$ in a way similar to the one in Section~\ref{sec:made_clear}.
    But when we store $\mywidth{\cdot}$ values of the subpolygons of type \onec, 
    we use segment trees instead of arrays.
    Using segment trees (Section 9.2.2 of \cite{laaksonen20}),
    we can answer a range minimum query of $n$ numbers in $O(\log{n})$ time.
    Thus, we can obtain $\mywidth{\ut}$ in $O(\log{n})$ time by 
    applying range minimum queries on the four sides of $R_{op}$ and 
    compare the result with $\mywidth{\cdot}$ values 
of $O(1)$ subpolygons in $\ut$ sharing a corner with $R_{op}$
    in a brute-force manner.

    For each segment tree storing $\mywidth{\cdot}$ values of subpolygons,
    we maintain an array storing the number of rectangles in a \vtp of each subpolygon.
    With this, we can carry out sum queries to compute $\card{\ut}$ in constant time.

    Each segment tree can be constructed in $O(n \log{n})$ time 
    using $O(n)$ space, as $O(1)$ values are stored in each of the $O(n)$ nodes.
    For each array, we use $O(n)$ time and space.
    Since there are $O(n)$ grid lines of $\gt$, the preprocessing takes 
    $O(n^2 \log{n})$ time and $O(n^2)$ space.
\end{proof}

By Lemmas~\ref{lem:thick_subproblem_num} and~\ref{lem:fattest_query}, we have the following corollary.

\begin{corollary}
    We can compute $\mywidth{\cdot}$ values of all uni-rectangle partitions of types \onec, \oner, \twoc, and 
    \twori in $O(n^3 \log n)$ time using $O(n^2)$ space.
    \label{corollary:thick_others_result}
\end{corollary}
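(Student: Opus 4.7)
The plan is to combine the two preceding lemmas in a straightforward manner: Lemma~\ref{lem:fattest_query} provides an $O(\log n)$-time query for $\mywidth{\ut}$ (and $\card{\ut}$) after $O(n^2\log n)$-time, $O(n^2)$-space preprocessing, while Lemma~\ref{lem:thick_subproblem_num} bounds at $O(n^3)$ the number of candidate triplets $(Q,o,p)$ of the four types in question. So the proof reduces to (i) running the preprocessing once, (ii) enumerating all relevant uni-rectangle partitions in the order dictated by the dynamic program, and (iii) invoking one $\mywidth{\cdot}$ query per partition.

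First I would run the preprocessing of Lemma~\ref{lem:fattest_query}, paying $O(n^2\log n)$ time and $O(n^2)$ space once and for all. Next, I would argue that passing from candidate triplets to uni-rectangle partitions costs only a constant factor: for a fixed triplet $(Q,o,p)$ there are $O(1)$ distinct cutsets $\lt$ of \vcuts determining $R_{op}$ from $Q$ (the same $O(1)$ bound used in Section~\ref{sec:made_clear} and illustrated in Figure~\ref{fig:validcutset}). Hence the total number of uni-rectangle partitions of types \onec, \oner, \twoc, and \twori is $O(n^3)$ by Lemma~\ref{lem:thick_subproblem_num}. For each such partition I would call the query of Lemma~\ref{lem:fattest_query} to obtain $\mywidth{\ut}$ (and $\card{\ut}$, which is needed for the tie-breaking rule) in $O(\log n)$ time, and store only the best partition found so far per subproblem in constant space.

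Adding up, the total running time is $O(n^2\log n)+O(n^3)\cdot O(\log n)=O(n^3\log n)$ and the total space is $O(n^2)$, matching the claimed bounds.

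There is essentially no obstacle here, since both ingredients are already established; the only point worth a line of care is to confirm that the $O(1)$-cutsets-per-triplet observation, originally invoked for \vcuts in the \mip setting, transfers verbatim to the \tp setting under the vertex incidence. This is immediate because the classification of triplets and the shape of their cutsets depend only on the local geometry of $Q$ around $R_{op}$, not on the optimality criterion.
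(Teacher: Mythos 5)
Your proposal is correct and matches the paper's (implicit) argument exactly: the paper derives this corollary directly from Lemma~\ref{lem:thick_subproblem_num} and Lemma~\ref{lem:fattest_query} without further elaboration, and your accounting of the preprocessing cost, the $O(n^3)$ triplet bound, the $O(1)$ cutsets per triplet, and the $O(\log n)$ per-partition query is precisely the intended combination. The extra remark about the cutset bound transferring from the \mip setting to the \tp setting is a reasonable sanity check but not something the paper dwells on.
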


Our algorithm handles the uni-rectangle partitions of subtype \tworv 
using the coherence between them as in Section~\ref{subsec:2R}.
Since each candidate triplet of type \tworv induces only one uni-rectangle partition, 
we abuse the notation for a candidate triplet to represent its corresponding 
uni-rectangle partition. 
Given a subpolygon $Q$ of type \tworv, its origin point $o$, and a set S of grid points of $\gt$, 
we call an element $p^{*} \in S$ an \emph{optimal partner point} of $o$ in $S$
if $p^{*} \in \arg\max_{p \in S} \mywidth{Q, o, p}$ and 
$\card{Q, R_{op^{*}}, \lt} \le \card{Q, R_{op'}, \lt}$ for any $p' \in \arg\max_{p \in S} \mywidth{Q, o, p}$.

We show that for the subpolygons of type \tworv with up-right cut direction
and origin points lying on a same vertical grid line of $\gt$,
we can compute an optimal partner point for each origin point 
among the partner points lying on another vertical grid line of $\gt$ efficiently.

Let $Q$ be a subpolygon of type \twor with up-right cut direction and origin point $o= \gt(i, j_1)$. 
By the vertex incidence, $p = \gt(i', j')$ for $i' > i$ and $j' < j_1$ can be a partner point of 
$o$ only if there is a reflex vertex of $P$ lying on the $j_1$-th line of $\gt$ 
and there is a horizontal projection from the reflex vertex onto the left side of $R_{op}$.
Hence, in the following lemma, we let $\mywidth{Q, o, p}:= 0$
if $P$ does not contain such a reflex vertex
lying on the same horizontal grid line of $\gt$ with $p$.

\begin{lemma}
    \label{lem:thick_2R_observation}
    Let $Q$ and $Q'$ be subpolygons of type \twor with up-right cut direction and origin points $o= \gt(i, j_1)$ and $o'= \gt(i, j_2)$, respectively.
    Let $S = \{\gt(i', j')\mid j' < j_1\}$ and 
    $S' = \{\gt(i', j')\mid j' < j_2\}$ for any fixed integer $i'>i$.
    We can preprocess the uni-rectangle partition $(Q, R_{op}, \lt)$ for an optimal partner point $p$ 
    of $o$ in $S$ such that an optimal partner point of $o'$ in $S'$ can be computed in $O(k \log^2 n)$ time, 
where $k$ is the number of reflex vertices of $Q'$ that lie in 
$\oq{2}\cap\oq{3}'$, each having a horizontal projection onto the left side of $R_{o'p}$. 
\end{lemma}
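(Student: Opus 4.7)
The plan is to mimic the coherence of Lemma~\ref{lem:2R_observation} but replace the constant-offset trick (which worked only for the additive $\myink{\cdot}$ objective) with a dynamic segment-tree argument suited to the max--min $\mywidth{\cdot}$ objective. The first step will be to isolate the portion of $\ut' = (Q', R_{o'p}, \lt')$ that is independent of whether the origin is $o$ or $o'$. Because $o$ and $o'$ share the column $x=i$ and every $p \in S' \subseteq S$ lies in the common quadrant $\oq{4}\cap\oq{4}'$, the subpolygons of $\ut$ and $\ut'$ that are bounded by cuts lying entirely outside the horizontal strip between $y(o)$ and $y(o')$ coincide. Writing $\ut_0$ for this common part and $Q'_1$ for the subpolygon above the horizontal boundary cut that replaces the corresponding $Q_1$ of $\ut$, we will obtain
\[
\mywidth{\ut'} = \min\bigl\{\mywidth{\ut_0},\, \mywidth{R_{o'p}},\, \mywidth{Q'_1}\bigr\},
\]
with an analogous identity for $\card{\cdot}$, while $\mywidth{R_{o'p}}$ is a known piecewise-linear function of the row of $p$.

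Second, the preprocessing builds a segment tree $T$ indexed by rows $j' < j_2$ whose leaves initially store the pairs $(\mywidth{Q, o, \gt(i', j')}, \card{Q, R_{o\gt(i',j')}, \lt})$ inherited from the preprocessing of $(Q, R_{op}, \lt)$, and whose internal nodes hold the lexicographically best child, maximizing $\mywidth{\cdot}$ with ties broken by minimum $\card{\cdot}$. To go from $o$ to $o'$, I will argue that the rows where $(\mywidth{Q', o', p}, \card{Q', R_{o'p}, \lt'})$ differs from its inherited value partition into $O(k)$ contiguous intervals, one per reflex vertex of $Q'$ in $\oq{2}\cap\oq{3}'$ whose horizontal projection lands on the left side of $R_{o'p}$. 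On each interval the shape of $Q'_1$ is constant, so the corresponding $\mywidth{Q'_1}$ and $\card{Q'_1}$ are retrieved in $O(\log n)$ time by the query structures of Lemma~\ref{lem:fattest_query}, and the ensuing range update on $T$ costs another $O(\log n)$, for $O(\log^2 n)$ per interval. One $O(\log n)$ query at the root of $T$ then yields an optimal partner point of $o'$ in $S'$, giving the claimed $O(k\log^2 n)$ bound.

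The main obstacle will be proving that $S'$ really splits into exactly $O(k)$ atoms on which $(\mywidth{Q'_1}, \card{Q'_1})$ and the correction to $\mywidth{R_{o'p}}$ are all constant, and that each atom can be emitted in constant time once its defining reflex vertex is known. This requires a careful geometric case analysis of how a reflex vertex in the strip $\oq{2}\cap\oq{3}'$ modifies the cutset $\lt'$ as $p$ slides vertically, together with a verification that the tie-breaking on $\card{\cdot}$ propagates correctly through the max--min aggregation inside $T$. I plan to scan the reflex vertices in increasing $y$-coordinate order and maintain a short auxiliary list of currently ``active'' horizontal edges of $\mybd{Q'}$ to identify the interval endpoints; nailing down this bookkeeping is the delicate part of the argument.
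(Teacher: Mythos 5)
Your decomposition of $\mywidth{\ut'}$ into a shared part, the rectangle $R_{o'p}$, and the subpolygon $Q'_1$ matches the paper's, and a segment tree over rows with lexicographic (maximum $\mywidth{\cdot}$, then minimum $\card{\cdot}$) aggregation is also what the paper builds. The gap is in your update step. You store the full objective $\mywidth{Q',o',p}$ at the leaves and claim that only $O(k)$ contiguous row intervals change value when the origin moves from $o$ to $o'$. That claim fails: the height term $|y_{j_2}-y_{j'}|$ of $R_{o'p}$ changes at \emph{every} row $j'$ when $j_1$ is replaced by $j_2$, and because the objective is a minimum rather than a sum, this change cannot be absorbed as a constant offset (the additive trick of Lemma~\ref{lem:2R_observation} for $\myink{\cdot}$ does not transfer to $\mywidth{\cdot}$). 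The stored value changes at every leaf where the height term is, or becomes, the bottleneck; that set can have size $\Theta(n)$ and is unrelated to the $k$ reflex vertices in $\oq{2}\cap\oq{3}'$. Supporting a range update of the form ``take the minimum with a linearly varying function'' together with range arg-max queries is beyond an ordinary segment tree, and you do not supply a structure that does it.

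The paper sidesteps exactly this difficulty: it keeps only the origin-independent quantity $\minrest$ (the minimum width over the subpolygons common to $\ut$ and $\ut'$) in the segment tree and, redundantly, as a polygonal chain $\mathcal{L}_A$; only the $O(k)$ entries for newly admissible rows genuinely change there. The origin-dependent terms --- the two side lengths of $R_{o'p}$ and $\mywidth{Q'_1}$ --- form a lower envelope $\mathcal{L}_B$ consisting of two constants and one decreasing linear function, hence with at most one bending point, so the optimal partner point, which maximizes $\min(\mathcal{L}_A,\mathcal{L}_B)$, is recovered at query time by one range-max query on the tree plus one ray-shooting query from the bending point along the linear piece into $\mathcal{L}_A$, using a dynamic ray-shooting structure at $O(\log^2 n)$ per operation. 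To repair your argument you would need to import this query-time combination (or an equivalent device for ``min with a line'' range updates); as written, the $O(k\log^2 n)$ bound is not established.
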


\begin{figure}[ht]
    \begin{center}
    \includegraphics[width=\textwidth]{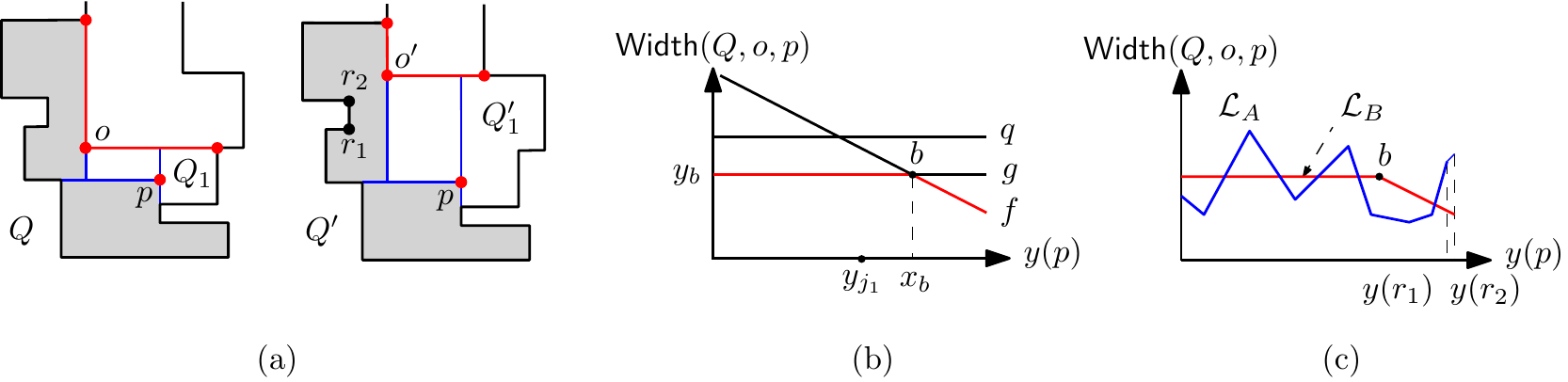}
    \caption{(a) Uni-rectangle partitions $(Q, R_{op}, \lt)$ with $o=(i,j_1)$ and 
    $(Q', R_{o'p}, \lt')$ with $o'=(i,j_2)$. The gray regions are $\qrest$. 
$r_1, r_2$ are reflex vertices contained in
$\oq{2}\cap\oq{3}'$ that have a horizontal projection onto the left side of $R_{o'p}$.
(b) Since $g$ and $q$ are constant for $y(p)$ and $f$ is decreasing linearly for $y(p)$, 
the lower envelope of the three functions has only one bending point $b$ at which its slope changes.
(c) As the origin point changes from $o$ to $o'$, 
    one can update the polygonal chain (blue graph) in $O(k \log^2 n)$ time 
    using a dynamic ray shooting data structure.
    }
    \label{fig:fattest_partition}
    \end{center}
\end{figure}

\begin{proof}
    Let $\ut = (Q, R_{op}, \lt)$ and $\ut' = (Q', R_{o'p}, \lt')$
be the uni-rectangle partitions induced by the triplets $(Q, o, p)$ and $(Q', o', p)$, respectively. 
We define $Q_{1}$ and $Q'_{1}$ as in the proof of Lemma~\ref{lem:2R_observation}.
See Figure~\ref{fig:fattest_partition}(a). 
Let $\minrest:=\min_{P' \in \qrest}\mywidth{P'}$ for $\qrest = \ut \setminus \{R_{op}, Q_1\} = \ut' \setminus \{R_{o'p}, Q'_1\}$. 
We use $x_a$ to denote the $x$-coordinate of the $a$-th vertical grid line of $\gt$,
$y_a$ to denote the $y$-coordinate of the $a$-th horizontal grid line of $\gt$, 
and $|t|$ to denote the absolute value of a real number $t$.
Then we have the following equations.
\begin{equation}
\begin{gathered}
\mywidth{Q, o, p} = 
\min_{P' \in \ut} \mywidth{P'} \notag
= \min\{|y_{j_1} - y_{j'}|, |x_{i} - x_{i'}|, \mywidth{Q_{1}}, 
\minrest\}.
\\
   \mywidth{Q', o', p} = 
\min_{P' \in \ut'} \mywidth{P'} 
= \min\{|y_{j_2} - y_{j'}|, |x_{i} - x_{i'}|, \mywidth{Q_{1}'}, \minrest\} 
\end{gathered}
\end{equation}
Observe that $\minrest$ is used in both equations. 
Thus, once we compute $\minrest$ for grid points $p\in S$, 
we use them to obtain an optimal partner point of $o'$ among grid points in $S'$.

We store $\minrest$ 
for grid points $p \in S$ in two different ways; 
as a segment tree~\cite{choi21} and as a polygonal chain. 
The segment tree is a balanced binary search tree with keys $\{1, ..., n\}$ 
at the leaf nodes. For $j' \in \{1, ..., j-1\}$, the $j'$-th leaf node stores 
$\minrest$ for $p = \gt(i', j')$. 
The leaf node also stores $\card{\qrest}$ for $p = \gt(i', j')$.
For $j' \in \{j, ..., n\}$, the $j'$-th leaf node stores the value $-\infty$.
The segment tree supports range maximum queries
$\max(a, b)$ for integers $a, b$ with $1 \le a \le b \le n$, 
which returns the index of the leaf node minimizing $\card{\qrest}$ 
among those maximizing $\minrest$ for $p=\gt(i',j')$ with $j'\in\{a,\ldots, b\}$.
If there are more than one leaf node minimizing $\card{\qrest}$ and
maximizing $\minrest$, 
the segment tree returns the index of any such leaf node.

Let $\mathcal{L}_A$ denote the graph constructed by connecting 
every two consecutive points $(y(p),\minrest)$ along the $y(p)$-axis 
by a line segment for grid points $p\in S$. See the blue graph in Figure~\ref{fig:fattest_partition}(c).

Before computing an optimal partner point of $o'$ in $S'$, 
we update $\mathcal{L}_A$ and the segment tree.
We update $\mathcal{L}_A$ by adding points $(y(p),\minrest)$
corresponding to the partner points $p \in S'\setminus S$.
We add a line segment connecting every two consecutive points.
We update the segment tree by updating leaf nodes corresponding to 
the partner points $p \in S'\setminus S$,
and by updating some internal nodes. 

Then, we obtain an optimal partner point of $o'$ in $S'$.
Because such an optimal partner $p = \gt(i', j')$ maximizes \mywidth{Q', o', p},
we can obtain it by computing the highest point on the lower envelope of 
$\mathcal{L}_A$ and $\mathcal{L}_B$,
where $\mathcal{L}_B$ denotes the lower envelope of 
the discrete functions $|y_{j_2} - y_{j'}|$, $|x_{i} - x_{i'}|$, and $\mywidth{Q_{1}'}$.
To make the computation easier, we again interpolate each of the three functions by adding edges between 
two neighboring vertices, and call them $f$, $g$, and $q$ in order. 
For example, $f$ is the interpolation of $|y_{j_2} - y_{j'}|$.

Because $g$ and $q$ are constant functions of $y(p)$ and $f$ is 
a linearly decreasing function of $y(p)$, 
The lower envelope of $f$, $g$, and $q$, denoted $\mathcal{L}_B$, has at most one \textit{bending point}, at which its slope changes.
If $\mathcal{L}_B$ has no bending point, 
then the partner point $\gt(i', j')$ with $j' =  \max(1, j_2)$ is an optimal partner point of $o'$ in $S'$.
So, suppose $\mathcal{L}_B$ has a bending point $b$.
    Denote by $j_1$ the index of the bottommost horizontal grid line of $\gt$
    among the horizontal grid lines satisfying $y_{j_1} \le x_b$. 
See Figure~\ref{fig:fattest_partition}(b).
    We first query $\max(1, j_b)$ on the segment tree.
    If  $y_b \le \minrest$    
    for the partner point $\gt(i', j')$ where $j' = \max(1, j_b)$, 
then the partner point $\gt(i', j')$ is an optimal partner point of $o'$ in $S'$.
    Otherwise, $y_b > \minrest$, 
    and we carry out a ray shooting query onto the polygonal chain 
    where the ray emanates from $b$ and proceeds along $f'$, 
    to find out the edge $e$ of the polygonal chain which is hit first by the ray.
If there is such an edge $e$, the partner point $\gt(i', j')$ with $j' = \max(1, j_{e})$ is an optimal partner point of $o'$ in $S'$, where $j_{e}$ is the index of the bottommost horizontal grid line of $\gt$
    among the horizontal grid lines whose $y$-coordinates are
    at most that of the right endpoint of $e$ on the polygonal chain.
Otherwise, there is no edge on the polygonal chain hit by the ray, 
and the partner point $\gt(i', j')$ with $j' = \max(1, j_2)$ is an optimal partner point of $o'$ in $S'$.

We now analyze the time complexity. Let $k$ denote the number of reflex vertices of $Q'$ 
that lie in  $\oq{2}\cap\oq{3}'$, each having a horizontal projection onto the left side of 
$R_{o'p}$.
It takes $O(\log^2 n)$ time to add a vertex or an edge on the polygonal chain,
using a dynamic ray-shooting data structure for connected planar subdivisions~\cite{goodrich}.
The algorithm performs $O(k)$ such operations in total, 
so the total time for updating the polygonal chain is $O(k \log^2 n)$.
We give at most one ray shooting query on the polygonal chain, which takes $O(\log^2 n)$ time.

It takes $O(\log n)$ time to update a node \ccheck{in} the segment tree~\cite{choi21}.
We update $O(k)$ nodes in total, so the total time for updating the segment tree is $O(k \log n)$.
We give at most two $\max(\cdot, \cdot)$ queries on the segment tree,
    which take $O(\log n)$ time in total. 

    Thus, in total, it takes $O(k \log^2 n)$ time to compute an optimal partner point of $o'$ in $S'$.
\end{proof}

Using coherence among the candidate triplets, 
we can compare them efficiently and obtain the following lemma. 

\begin{lemma}
    \label{lem:thick_2R_result}
    We can compute optimal partner points of $o$ for all candidate triplets $(Q, o, p)$ of type \tworv, one for each, in $O(n^3 \log^2 n)$ time using $O(n^3)$ space.
\end{lemma}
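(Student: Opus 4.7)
The plan is to exploit the coherence provided by Lemma~\ref{lem:thick_2R_observation} so that the per-triplet cost amortizes across a batch. By the vertex-incidence condition and condition (2) of rule~2 in Lemma~\ref{lem:minink_rules}, any partner $p$ in a type-\tworv candidate triplet with, say, up-right cut direction lies in $\oq{2}$ or $\oq{4}$ of its origin $o$; the four cut directions and the two quadrant cases are symmetric, so I describe only the up-right direction with partners in $\oq{4}$ and dispatch the rest by analogous sweeps.

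I would fix a pair of integers $i<i'$, where the $i$-th vertical grid line of $\gt$ carries the origins and the $i'$-th vertical grid line carries the partners, and enumerate the origins of type-\tworv subpolygons lying on the $i$-th line in increasing order of $y$-coordinate. During the sweep I maintain the segment tree and polygonal chain from Lemma~\ref{lem:thick_2R_observation}, augmented with a dynamic ray-shooting data structure on the chain. When advancing from an origin $o$ to the next origin $o'$ above it, I invoke Lemma~\ref{lem:thick_2R_observation} to update these structures in $O(k\log^2 n)$ time, where $k$ counts the reflex vertices of $P$ in the newly exposed horizontal slab between $o$ and $o'$ lying to the left of the $i$-th vertical grid line; immediately afterward I answer one optimal-partner query on the $i'$-th vertical grid line in $O(\log^2 n)$ time. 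Since every reflex vertex of $P$ belongs to at most one such slab during a sweep along the $i$-th line, the total $k$ over the sweep is $O(n)$, giving $O(n\log^2 n)$ time per batch. Summing over the $O(n^2)$ pairs $(i,i')$, the $O(1)$ cut directions, and the symmetric $\oq{2}$ case gives the claimed $O(n^3\log^2 n)$ running time.

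For space, each sweep uses only $O(n)$ working memory, but for the downstream dynamic program I record, for every (type-\tworv subpolygon $Q$ with its fixed origin $o$, partner grid line indexed by $i'$), the optimal partner on that grid line together with its $\mywidth{\cdot}$ and $\card{\cdot}$ values; since there are $O(n^2)$ pairs $(Q,o)$ and $O(n)$ partner grid lines, the total storage is $O(n^3)$. The main obstacle I anticipate is combining the amortized update analysis with the correctness of the incremental maintenance: one must argue that after advancing the origin from $o$ to $o'$ the stored chain and segment tree indeed encode the correct $\minrest$ values for $o'$ using only the $O(k)$ insertions sanctioned by Lemma~\ref{lem:thick_2R_observation}, and that the dynamic ray-shooting structure supports each insertion and query in $O(\log^2 n)$ worst-case time. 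Degenerate coincidences, such as corners of $R_{op}$ coinciding with an endpoint of a boundary cut of $Q$ or reflex vertices sharing a grid line, must be handled as in the proof of Lemma~\ref{lem:thick_2R_observation} but do not affect the asymptotic bounds.
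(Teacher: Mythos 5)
Your proposal is correct and follows essentially the same route as the paper's proof: fix a pair of vertical grid lines (one carrying origins, one carrying partners), sweep the origins in $y$-order while incrementally maintaining the segment tree and polygonal chain via Lemma~\ref{lem:thick_2R_observation}, amortize the total update cost over the $O(n)$ reflex vertices per sweep, and sum over the $O(n^2)$ grid-line pairs and $O(1)$ cut directions to get $O(n^3\log^2 n)$ time and $O(n^3)$ space. The only cosmetic difference is that the paper charges the $O(n^3)$ space to keeping one $O(n)$-size structure per grid-line pair rather than to storing the query answers, and it spells out the segment-tree initialization and the reflex-edge degeneracy in slightly more detail.
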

\begin{proof}
We consider only the candidate triplets of type \tworv with up-right cut direction. 
For the candidate triplets of type \tworv with other cut directions, we handle them analogously.

By Lemma~\ref{lem:thick_2R_observation}, 
for origin points lying on the same vertical grid line of $\gt$,
we can compute their optimal partner points 
among the grid points of $\gt$ on the same vertical grid line in $O(n \log^2 n)$ time, 
because there are $O(n)$ reflex vertices of $P$. 
We do this for $O(n^2)$ pairs of vertical grid lines on which origin and partner point lie,
which takes $O(n^3 \log^2 n)$ time in total.
Before computing the optimal partner points, we initialize each of the $O(n^2)$ 
segment trees such that
all leaf nodes store the value $-\infty$. It takes $O(n \log n)$ time to initialize a segment tree, thus $O(n^3 \log n)$ time in total.

A degenerate case may occur when a \emph{reflex} edge (whose both endpoints are reflex vertices of $P$) appears on the $i$-th or $i'$-th vertical grid line of $\gt$.
    In this case, we update the relevant portion of the segment tree and the polygonal chain. 
    It takes $O(n \log^2 n)$ time to update each of them, as we change $O(n)$ vertices 
    and entries in the polygonal chain and the segment tree, respectively.
    Since there are $O(n)$ reflex edges of $P$ and each reflex edge induces 
    $O(n)$ degenerate cases, we can handle all such degenerate cases
    in $O(n^3 \log^2 n)$ time in total.
    
    We maintain a polygonal chain and a segment tree for each pair of vertical grid lines
    of $\gt$ on which origin point and partner point lie.
    Because the size of each polygonal chain or segment tree is $O(n)$,
    the total space complexity is $O(n^3)$.
\end{proof}

By Corollary~\ref{corollary:thick_others_result} and Lemma~\ref{lem:thick_2R_result}, 
it takes $O(n^3 \log^2 n)$ time and $O(n^3)$ space to check all uni-rectangle partitions
and to obtain a \vtp.
After computing the cuts inducing a \vtp,
we can compute the \vtp in $O(n)$ time and space
as there are $O(n)$ cuts by Lemma~\ref{lem:vertex_cut_minink}.
So we have another main result.

\begin{theorem}
     We can compute a \vtp of a rectilinear polygon with $n$ vertices and no holes in the plane in $O(n^3 \log^2{n})$ time using $O(n^3)$ space.
\end{theorem}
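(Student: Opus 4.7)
The plan is to follow the same dynamic programming skeleton used for the \mip algorithm in Section~\ref{sec:minink_mine}, replacing $\myink{\cdot}$ by $\mywidth{\cdot}$ (with $\card{\cdot}$ for tie-breaking) and using Lemma~\ref{lem:minink_rules} to enumerate, for each $\le$2-cut subpolygon $Q$ and each origin point, only those candidate triplets that can appear in an optimal \vtp. The algorithm fills a table indexed by $\le$2-cut subpolygons; for each such $Q$ it stores $\mywidth{Q}$ together with the number of rectangles in a partition achieving it. The value stored for $Q$ is obtained by taking the maximum of $\mywidth{Q,R_{op},\lt}$ over all enumerated triplets (and their induced cutsets), breaking ties in favor of fewer rectangles, so by Lemma~\ref{lem:uni}-style reasoning the value computed at $P$ equals $\mywidth{P}$.

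The running-time analysis then amounts to aggregating the contributions of the four subpolygon types. By Lemma~\ref{lem:thick_subproblem_num} and Corollary~\ref{corollary:thick_others_result}, the $O(n^3)$ candidate triplets of types \onec, \oner, \twoc, and \twori are handled in $O(n^3 \log n)$ time using $O(n^2)$ space. By Lemma~\ref{lem:thick_2R_result}, the candidate triplets of type \tworv are handled in $O(n^3 \log^2 n)$ time using $O(n^3)$ space, which dominates. The $O(n^2 \log n)$ preprocessing provided by Lemma~\ref{lem:fattest_query}, which enables $O(\log n)$-time evaluation of $\mywidth{\ut}$ and $\card{\ut}$ for triplets of the first four types, does not affect the bound. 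Together these yield $O(n^3 \log^2 n)$ time and $O(n^3)$ space to evaluate all table entries.

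Finally, I would reconstruct an actual \vtp of $P$ by backtracking through the stored optimal uni-rectangle partitions. Since an optimal \vtp uses $O(n)$ \vcuts (each reflex vertex of $P$ is incident to at most a constant number of cuts, an analog of Lemma~\ref{lem:vertex_cut_minink} in the vertex-incidence setting), the reconstruction costs $O(n)$ additional time and space and is absorbed by the dominant bound. The step I expect to be the most delicate is ensuring that the lower-envelope and ray-shooting procedures underlying Lemma~\ref{lem:thick_2R_observation} interact correctly with the $\card{\cdot}$ tie-breaking: when two candidates yield the same value of $\mywidth{\cdot}$, the segment tree must consistently return the one minimizing $\card{\cdot}$, and this choice must be preserved as the origin shifts from $o$ to $o'$; I would verify this by storing $\card{\qrest}$ at the leaves and breaking ties accordingly during the $\max(\cdot,\cdot)$ queries, exactly as the proof of Lemma~\ref{lem:thick_2R_result} already does.
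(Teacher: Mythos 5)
Your proposal matches the paper's argument: the theorem is obtained exactly by combining Corollary~\ref{corollary:thick_others_result} (for types \onec, \oner, \twoc, \twori) with Lemma~\ref{lem:thick_2R_result} (for type \tworv), followed by an $O(n)$-time reconstruction of the partition from the $O(n)$ cuts. The tie-breaking concern you flag is indeed the delicate point, and it is handled in the paper just as you describe, by storing $\card{\qrest}$ at the segment-tree leaves.
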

 
We use \emph{\atp} to refer to a \tp whose cutset consists of \acuts.

\begin{lemma}[Theorem 6 of \cite{oRourke}]
    \label{lem:vertex_to_anchored} 
    There is an \atp of $P$ such that 
    every cut in the cutset lies on the canonical grid of $P$ 
    or at fractions $\{\frac{1}{3}, \frac{1}{2}, \frac{2}{3}\}$ 
    between two edges of $P$.
\end{lemma}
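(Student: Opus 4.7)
The plan is to begin with an arbitrary optimal \atp $\pt^*$ of $P$, meaning one whose width equals $\mywidth{P}$ and which uses the fewest rectangles among all width-optimal \atps, and to transform it by local perturbations into an \atp of the same width and rectangle count whose cuts all lie either on $\gt$ or at one of the three special fractions between two parallel edges of $P$. The key data structure for the argument is a \emph{parallel group}: a maximal family of collinear cuts supported by the strip between two parallel edges of $P$, together with any canonical grid lines that cross that strip. Within each group I would examine the signed distances of the cuts from one bounding edge.

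First I would handle the sliding step. If a cut $\ell$ in the group can be slid perpendicular to itself onto a canonical grid line while keeping the width equal to $\mywidth{P}$ and without creating anchor conflicts (an endpoint falling off $\mybd{P}$ or off another supporting cut), I slide it and repeat. This reduces the number of off-grid cuts by one without changing the width or the rectangle count, so the process terminates. After termination every off-grid cut $\ell$ is \emph{pinned}: a perturbation in either direction would drop some adjacent rectangle strictly below $w := \mywidth{P}$. Being pinned means the distance from $\ell$ to the nearest obstacle on each side (another cut of the group, a bounding edge, or a canonical grid line crossing the strip) is exactly $w$.

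The structural consequence is that the off-grid cuts of a group form chains of consecutive equally spaced cuts, each chain bounded at both ends by either the strip boundary or by a canonical grid line. If a chain contains $k$ off-grid cuts sitting inside a subinterval of length $(k+1)w$, then its cuts lie at the fractions $1/(k+1),\, 2/(k+1),\, \ldots,\, k/(k+1)$ of that subinterval. To match the allowed fractions I must show $k \le 2$, which yields exactly $1/2$ when $k=1$ and $1/3, 2/3$ when $k=2$.

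The main obstacle is establishing $k \le 2$, and this is where the tie-breaking rule (fewest rectangles) does the work. Suppose $k \ge 3$. Then the chain contains at least three interior cuts at common spacing $w$. I would delete a carefully chosen interior cut of the chain and merge the two adjacent rectangles; any perpendicular cuts that formerly terminated on the deleted segment are either extended to the next chain neighbor (distance $w$, so no rectangle drops below $w$) or are themselves removed by a cascading deletion that still leaves a valid rectangular partition. The resulting \atp still has width at least $w$ but strictly fewer rectangles than $\pt^*$, contradicting the tie-breaking minimality of $\pt^*$. Hence $k \in \{0,1,2\}$, and the positions of locked cuts are precisely those listed. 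Since this statement appears as Theorem~6 of~\cite{oRourke}, the detailed bookkeeping for the merge cascade and the anchor maintenance during the slides is given there.
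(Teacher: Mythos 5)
The paper does not prove this lemma at all: it is imported verbatim as Theorem~6 of O'Rourke and Tewari, so there is no internal proof to compare against. Your reconstruction is a reasonable sketch of the kind of ``slide until pinned, then analyze the pinned configurations'' argument one would expect, but as a standalone proof it has two genuine gaps, and the closing sentence that defers ``the detailed bookkeeping'' to the cited theorem makes the argument circular, since that bookkeeping is exactly where the difficulty lies.

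First, the $k\le 2$ step does not go through as written. When you delete an interior cut $\ell$ of a chain and extend a perpendicular cut $c$ that terminated on $\ell$ down to the next chain neighbor $\ell'$, the extension of $c$ passes through the rectangle that lay between $\ell$ and $\ell'$ and splits it into two pieces. Those pieces have thickness $w$ in the direction perpendicular to $\ell$, but their extent in the direction \emph{parallel} to $\ell$ is whatever the horizontal position of $c$ dictates, and nothing prevents it from being smaller than $w$. So the modified partition can have $\mywidth{\cdot}$ strictly below $w$, and the claimed contradiction with the tie-breaking rule is not established. The alternative branch (``cascading deletion'') is not described at all; one must argue that the cascade terminates in a valid rectangular partition with no new thin rectangle, which is the heart of the matter. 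Second, your pinned-cut analysis only accounts for width-pinning. A cut in an \atp can be unable to slide because an endpoint would leave $\mybd{P}$ or slide off the end of a perpendicular supporting cut; you exclude such moves in the sliding phase but then silently assume every surviving off-grid cut is at distance exactly $w$ from its neighbors on both sides. Anchor-pinned cuts whose position is inherited from the endpoint of a perpendicular cut create two-dimensional dependency chains that your one-dimensional ``parallel group'' framework does not capture, and it is precisely the bounded depth of those chains that forces the restriction to the fractions $\frac13,\frac12,\frac23$. Both points need an argument before the lemma can be considered proved independently of the citation.
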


By Lemma~\ref{lem:vertex_to_anchored}, there are $O(n^2)$ points on the boundary of $P$ 
where cuts can be incident.
By treating those points as vertices, O'Rourke and Tewari obtained an $O(n^{10})$-time algorithm that computes an \atp.
By using our \vtp algorithm, we have the following result.

\begin{corollary}
    \label{corollary:anchored}
     We can compute an \atp of a rectilinear polygon with no holes in the plane in  
    $O(n^6 \log^2{n})$ time using $O(n^6)$ space.
\end{corollary}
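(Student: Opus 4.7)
The plan is to reduce computing an \atp of $P$ to computing a \vtp of an enriched polygon, and then invoke the theorem that immediately precedes the corollary. By Lemma~\ref{lem:vertex_to_anchored}, there exists an optimal \atp of $P$ whose cuts are all incident to points in a fixed set $V' \subset \mybd{P}$ of size $O(n^2)$, consisting of the grid points of $\gt$ on $\mybd{P}$ together with the $\{\tfrac{1}{3}, \tfrac{1}{2}, \tfrac{2}{3}\}$-fractional anchor points between pairs of collinear edges. I would therefore mark every point of $V'$ as a (possibly flat, i.e.\ angle-$\pi$) vertex of $P$, producing a rectilinear polygon $P'$ with $N = O(n^2)$ vertices whose geometric realization equals $P$.

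Next, I would run the \vtp algorithm from the theorem on $P'$. Under this construction, every \acut of $P$ incident only to points of $V'$ is a \vcut of $P'$, and conversely every \vcut of $P'$ is an \acut of $P$; hence an optimal \vtp of $P'$ produces a cutset realizing the \atp guaranteed by Lemma~\ref{lem:vertex_to_anchored}, which is an optimal \atp of $P$. Plugging $N = O(n^2)$ into the $O(N^3 \log^2 N)$ time and $O(N^3)$ space bounds of the theorem yields the claimed $O(n^6 \log^2 n)$ time and $O(n^6)$ space.

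The main obstacle is verifying that the \vtp algorithm behaves correctly on $P'$ despite its many angle-$\pi$ ``vertices.'' One needs to check that the four-type classification (\onec, \oner, \twoc, \twor) of $\le$2-cut subpolygons and the candidate-triplet enumeration of Section~\ref{sec:thick_partition} are not disrupted when origin points, partner points, or endpoints of boundary cuts happen to lie at flat vertices. Fortunately, the \oner type was introduced precisely to handle degenerate 1-cut subpolygons whose boundary cut connects two collinear edges, which is exactly the situation produced by the fractional anchor points in $V'$; analogous checks for types \onec, \twoc, and \twor show that flat vertices of $P'$ only serve as additional candidate anchors and do not create new combinatorial cases. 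Once this is verified, Lemmas~\ref{lem:thick_subproblem_num}, \ref{lem:fattest_query}, \ref{lem:thick_2R_observation}, and \ref{lem:thick_2R_result} apply verbatim to $P'$, and the complexity bounds follow by substitution.
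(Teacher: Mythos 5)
Your proposal matches the paper's proof: the paper likewise invokes Lemma~\ref{lem:vertex_to_anchored} to restrict cut anchors to $O(n^2)$ boundary points, treats those points as vertices, and runs the \vtp algorithm on the resulting polygon with $N=O(n^2)$ vertices to obtain the $O(n^6\log^2 n)$ time and $O(n^6)$ space bounds. Your additional care about angle-$\pi$ vertices and the \oner degenerate case is a reasonable elaboration of a point the paper leaves implicit, but it does not change the argument.
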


\subsection{Rectilinear polygon with holes}
We show that the decision version of the \vtp problem for 
rectilinear polygons with holes is NP-complete. 

\myparagraph{Problem statement.} 
$\myth(P, \delta, k) :=$ Given a rectilinear polygon $P$ with $n$ vertices, including the vertices of holes, 
and a positive real value $\delta$ and a positive integer $k$,
decide whether there exists a rectangular partition $\pt$ of $P$ under the vertex incidence such that 
(1) $\mywidth{\pt} \ge \delta$ and (2) $\card{\pt} \le k$.
\medskip

We first show that \myth is in NP. For a problem instance $(P, \delta, k)$,
suppose that we are given a rectangular partition $\pt$ of $P$ as a solution set.
Because $\card{\pt} = O(n)$, we can check if $\card{\pt} \le k$ and $\mywidth{\pt} \ge \delta$ in $O(n)$ time.

To prove that \myth is NP-hard, we use a reduction from \plsat.
In \plsat, a 3-SAT formula $F$ with a variable set $X=\{x_1, ..., x_n\}$ 
and a clause set $C=\{c_1, ..., c_m\}$ is given
such that the graph $G(F)=(V,E)$ with 
$V=X\cup C$ and $E=\{(x_i, c_j)\mid x_i \text{ or } \ol{x_i} \text{ is a literal in } c_j\}$ is planar.
Here, each variable in $X$ has value either 1 (\true) or 0 (\false), and
each clause in $C$ consists of exactly three variables in $X$ (\emph{3-CNF}).

Our reduction algorithm follows the reduction scheme from \plsat to the \mip problem 
for rectilinear polygons with holes, by Lingas et al.~\cite{lingas}. 
It takes a 3-SAT formula as an input and constructs a rectilinear polygon 
such that the formula is satisfiable if and only if 
there exists a \vtp of the rectilinear polygon satisfying the conditions (1) and (2) 
in the problem statement of \myth. 
The rectilinear polygon is represented as the union of several \emph{gadgets}, each corresponding to a variable (\emph{variable} gadget, \vgadget in short), a clause (\emph{clause} gadget, \cgadget in short), or a connection between them (\emph{connection} gadget of types \emph{turn}, \emph{split}, \emph{inverter}, and \emph{phase shifter}).

In our reduction algorithm, we use all five gadget types (\vgadget, \cgadget, split, inverter, and phase shifter) used in the reduction algorithm by Lingas et al.~\cite{lingas},
and introduce one more gadget type, \emph{turn}.
Some of our gadgets have holes while those by Lingas et al. have no holes.
The holes in our gadgets force certain partitions in order to satisfy the condition (1) in the problem 
statement of \myth. 
See Figure~\ref{fig:windmill}.
Each hole is either a square of side length $\frac{\delta}{2}$ or a rectangle
with smaller side length $\frac{\delta}{2}$, 
where $\delta$ is from the problem instance $(P, \delta, k)$.
Note that all corners of a hole in $P$ are reflex vertices of $P$, 
so there is exactly one cut incident to each corner of a hole in any partition of $P$ 
induced by \vcuts. 
Thus, there are only two ways of partitioning $P$ around each square hole, each consisting
of four \vcuts: a \vwind as shown in Figure~\ref{fig:windmill}(a) or a \hwind as shown in Figure~\ref{fig:windmill}(b). The top-left corner of a square hole is incident to a vertical cut in the \vwind
while it is incident to a horizontal cut in a \hwind.
If there are two parallel cuts, each incident to a corner of the same side of a square hole, 
every resulting partition has $\mywidth{\cdot}$ value at most $\frac{\delta}{2}$
and does not satisfy the conditions (1) in the problem statement of \myth. See Figure~\ref{fig:windmill}(c).

\begin{figure}[ht]
    \begin{center}
    \includegraphics[width=.9\textwidth]{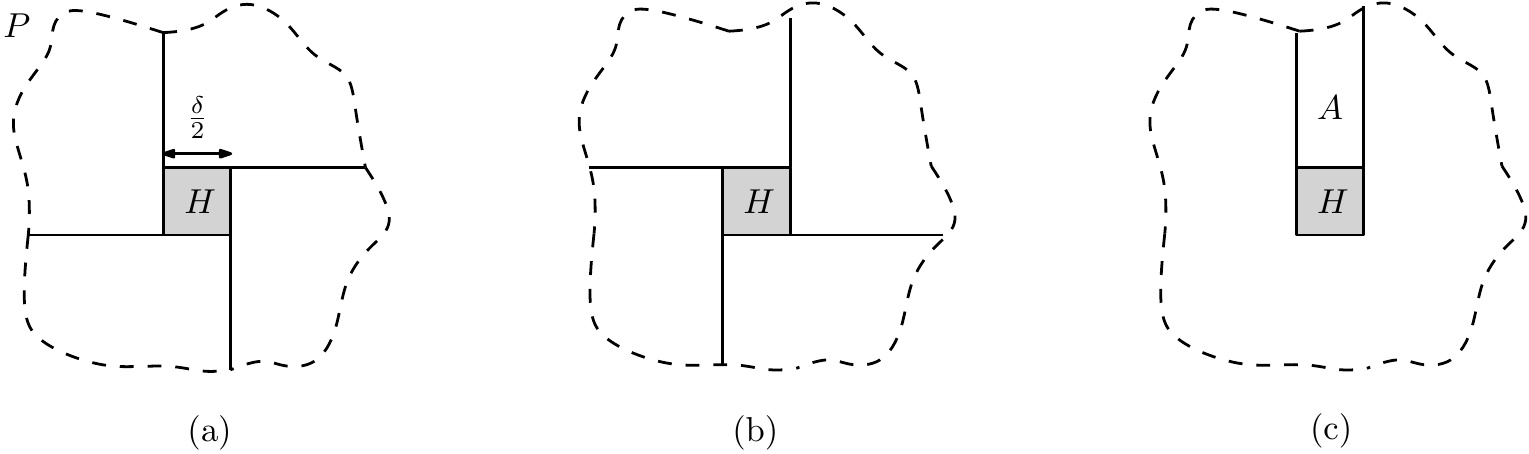}
    \caption{A rectilinear polygon $P$ and a hole $H$. $H$ is a square with side length $\frac{\delta}{2}$.
        (a) \vwind. (b) \hwind.
        (c) Two vertical cuts incident to the top corners of $H$. 
        Every resulting rectangular partition has $\mywidth{\cdot}$ value at most $\mywidth{A} \le \frac{\delta}{2}.$
    }
\label{fig:windmill}
\end{center}
\end{figure}



We now describe the shapes of our gadgets: \vgadget, four connection gadgets (turn, split, inverter, and phase shifter), and \cgadget in order.
In any gadget, every edge length is a multiple of $\frac{\delta}{2}$.

\myparagraph{\vgadget.}
A \vgadget is a polygonal chain, which is the concatenation of two staircase chains of the same direction 
with edge length $\delta$, except the two edges incident to one common corner(vertex) of the staircases
are of length $2\delta$. See Figure~\ref{fig:plsat_variable}(a).
In every polygon constructed by our reduction algorithm, a \vgadget is connected with another gadget by sharing the two endpoints of the \vgadget.

Consider the closed polygonal region bounded by a \vgadget and a horizontal line segment of length $2\delta$ which has an endpoint of the \vgadget as one of its endpoint and a (non-endpoint) vertex of the \vgadget as the other endpoint. See Figure~\ref{fig:plsat_variable}(b). There is only one way to partition the polygonal region into minimum number of rectangles, and every cut inside such a partition is a horizontal cut. 
If there is a vertical cut, the number of rectangles in the resulting partition is not the minimum. See Figure~\ref{fig:plsat_variable}(c).
Similarly, we can show that the polygonal region bounded by a \vgadget and a vertical line segment of length $2\delta$ can be partitioned into minimum number of rectangles only if the cuts are all vertical. 
See Figure~\ref{fig:plsat_variable}(d,e).
The minimum partition of the region bounded by a \vgadget and a horizontal line segment corresponds to \false, and the minimum partition of the region bounded by a \vgadget and a vertical line segment corresponds to \true.

 \begin{figure}[ht]
    \begin{center}
    \includegraphics[width=\textwidth]{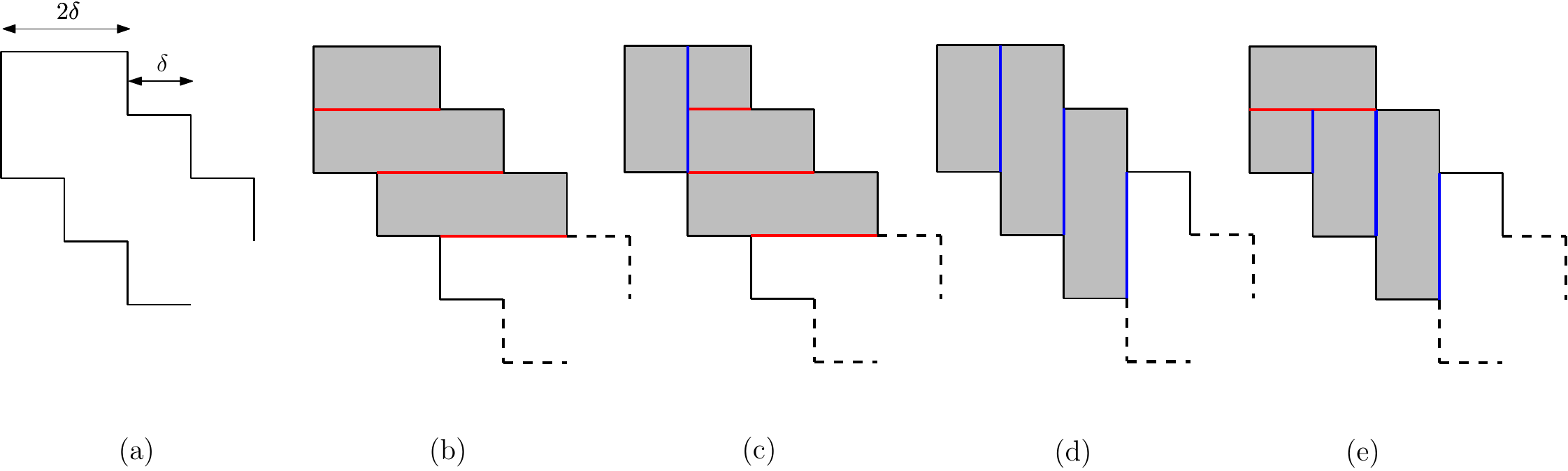}
    \caption{
(a) A \vgadget. (b) A \vgadget connected by another gadget (dashed lines). The (gray) region is bounded by a \vgadget and a horizontal line segment of length $2\delta$, and its minimum partition corresponds to \false. 
(c) If there is a vertical cut, the number of rectangles in the resulting partition is not the minimum.
(d) The (gray) region is bounded by a \vgadget and a vertical line segment of length $2\delta$, and its minimum partition corresponds to \true. 
(e) If there is a  horizontal cut, the number of rectangles in the resulting partition is not the minimum.}
    \label{fig:plsat_variable}
    \end{center}
\end{figure}

\myparagraph{Connection gadgets.}
Each connection gadget consists of at most three staircase chains and at most one square hole.
A turn gadget changes the direction (among the four diagonal directions) 
in which the truth value propagates.
A split gadget receives a \vgadget and outputs two copies of it. 
Each copied \vgadget consists of two staircase chains of the same direction, and there is no common corner of the staircases. 
An inverter gadget inverts the truth value transmitted from \true to \false, 
and vice versa.
A phase shifter gadget connects
a \vgadget and a \cgadget by matching the two endpoints of the \vgadget with two endpoints of the \cgadget (will be defined shortly), 
by using a pair of parallel edges of the same length,
which is determined by the distance between the \vgadget and the \cgadget.
See Figure~\ref{fig:gadgets} for an illustration of these gadgets.

 \begin{figure}[ht]
    \begin{center}
    \includegraphics[width=.9\textwidth]{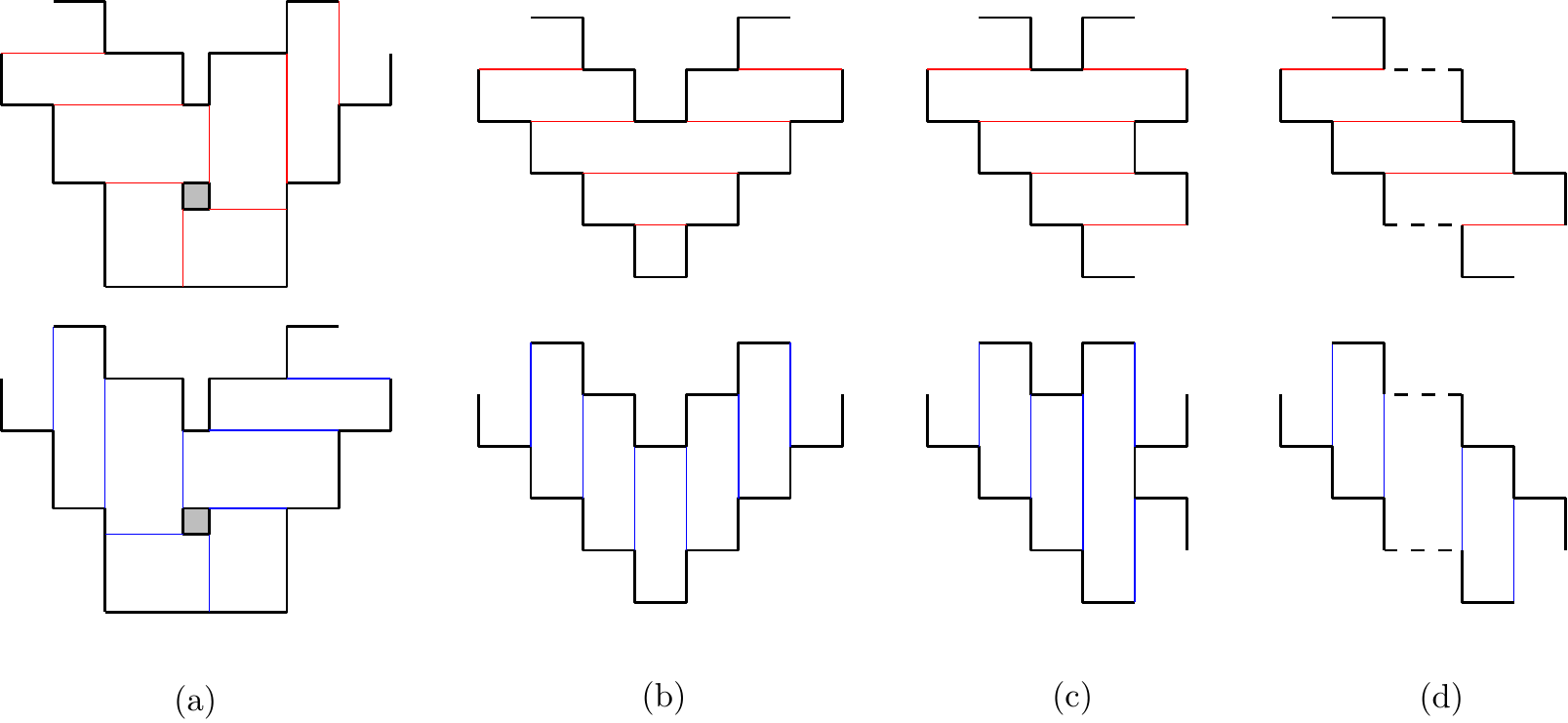}
    \caption{
    (a) An inverter gadget. The size of the square hole (gray) is $\frac{\delta}{2}$.
    (b) A turn gadget. (c) A split gadget. 
    (d) A phase shifter gadget. 
The phase shifter gadget connects a \vgadget and a \cgadget by matching the two endpoints of the \vgadget with two endpoints of the \cgadget. Dashed edges have the same length, and the length is determined by the distance between the \vgadget and the \cgadget.
}
    \label{fig:gadgets}
\end{center}
\end{figure}

\myparagraph{\cgadget.}
A \cgadget consists of three staircase chains and four holes.
A \cgadget can be connected to three \vgadgets. 
See Figure~\ref{fig:plsat_3clauses_001}(a) for an illustration of a \cgadget.
In the figure, the \cgadget is connected to \vgadgets, 
corresponding to three variables $x_1, x_2,$ and $x_3$ in the 3-SAT formula.
There are three square holes ($H_1, H_2, H_3$) with side length $\frac{\delta}{2}$ 
and a rectangle hole ($H_4$) whose smaller side length is $\frac{\delta}{2}$.
Then every partition with the $\mywidth{\cdot}$ value at least $\delta$ has a windmill pattern around each square hole.
Observe that there is no two horizontal cuts in such a partition, 
each incident to an endpoint of the same shorter side of the rectangle hole.
The distance between any point on the boundary of the \cgadget and 
any point on the holes is larger than or equal to $\delta$.
There is an inverter (including a square hole) in the top-left part of the figure, 
and it enforces that the \cgadget is not partitioned into minimum number of rectangles
if and only if $x_1 = x_2 = x_3 = 0$.
Figure~\ref{fig:plsat_3clauses_001}(b) 
shows a partition of a \cgadget when the truth values of the variables are given as $x_1=x_2=x_3=1$. 
The partition has $\mywidth{\cdot}$ value at least $\delta$ and consists of the minimum number of rectangles.
Figure~\ref{fig:plsat_3clauses_001}(c,d) 
shows two different partitions of a \cgadget when the truth values of the variables are given as $x_1=x_2=x_3=0$.
Every resulting partition either has $\mywidth{\cdot}$ value at most $\frac{\delta}{2}$ or
does not consist of the minimum number of rectangles.

\begin{figure}[ht]
    \begin{center}
    \includegraphics[width=.7\textwidth]{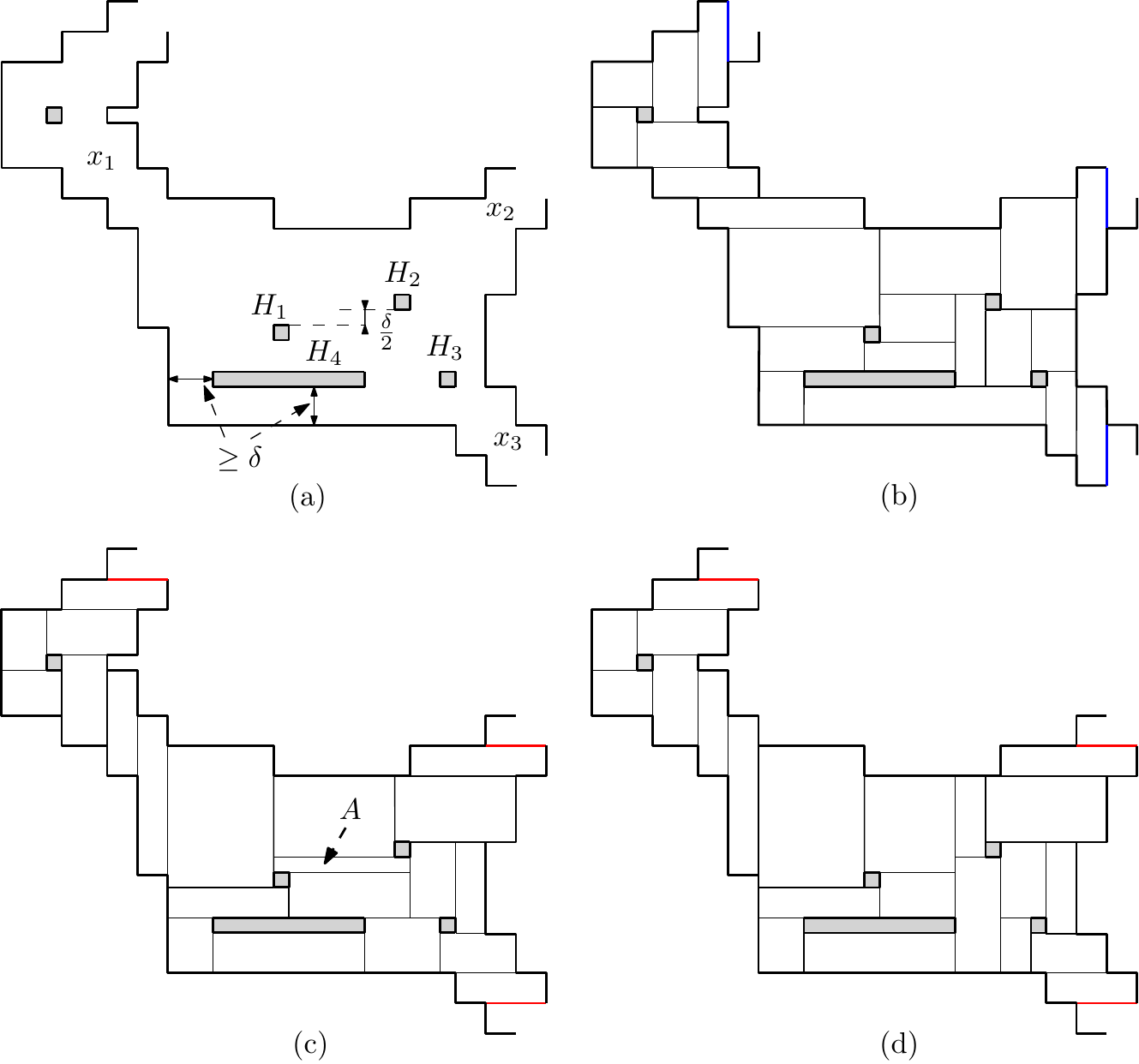}
    \caption{(a) A \cgadget corresponding to clause $(x_1\vee x_2\vee x_3)$.
    (b) A \vtp of the \cgadget with $x_1=x_2=x_3=1$. 
    (c) A partition of the \cgadget with $x_1=x_2=x_3=0$. 
    The rectangle $A$ in the middle has smaller side length $\frac{\delta}{2}$. 
    (d) Another partition of the \cgadget with $x_1=x_2=x_3=0$. 
    Every rectangle in the partition has smaller side length less than $\delta$, 
    but there are more rectangles in the partition than the minimum among such partitions.
    }
    \label{fig:plsat_3clauses_001}
    \end{center}
\end{figure}

\begin{figure}[ht]
    \begin{center}
    \includegraphics[width=.9\textwidth]{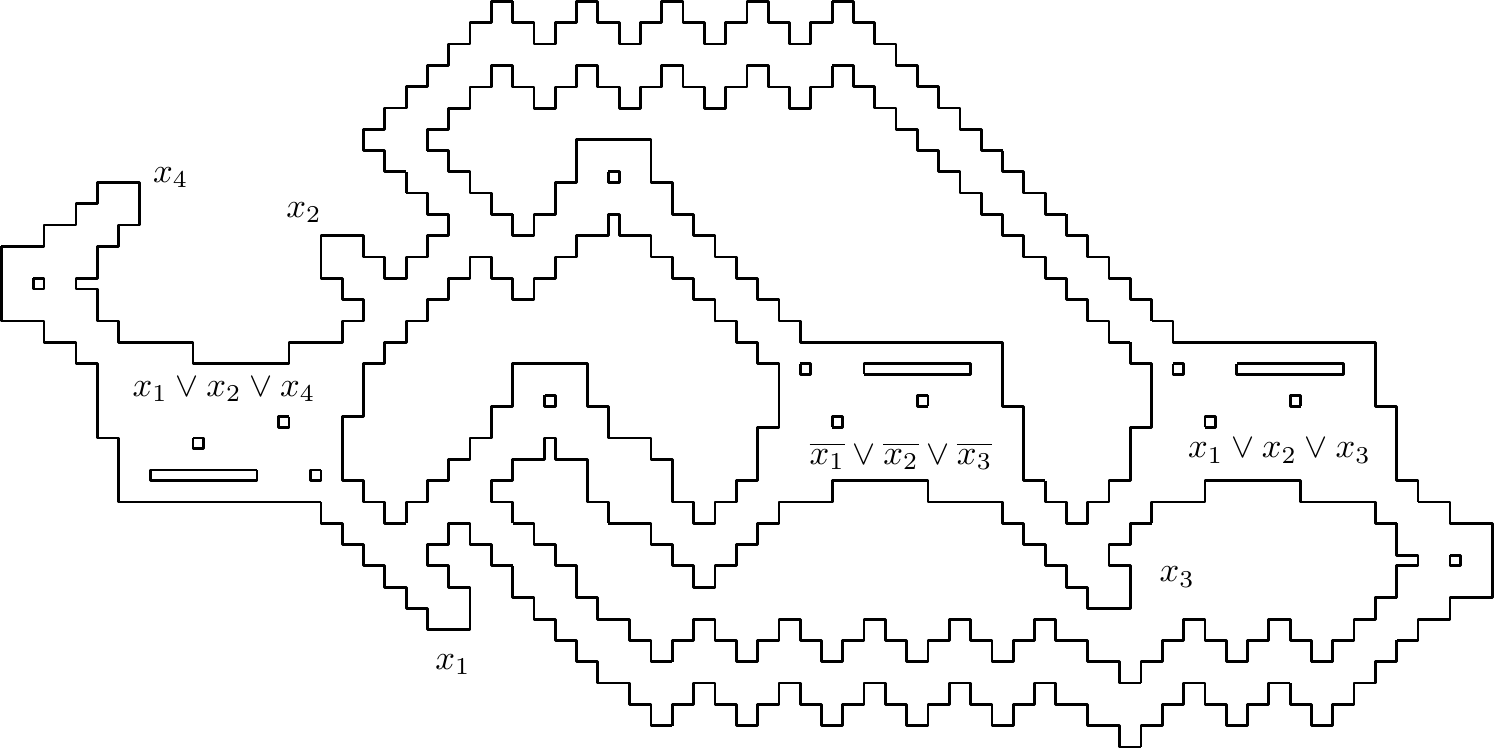}
    \caption{An example of the rectilinear polygon generated by the 3-SAT formula $(x_1\lor x_2\lor x_4)\land (\overline{x_1}\lor 
    \overline{x_2}\lor \overline{x_3})\land (x_1\lor x_2\lor x_3$).
    }
    \label{fig:plsat_ex}
    \end{center}
\end{figure}

Given a 3-SAT formula $F$, we can construct the corresponding rectilinear polygon $P$ 
in polynomial time and space,
by transforming the planar graph $G(F)$ corresponding to $F$ into a grid graph~\cite{de1990draw}.
While constructing $P$, we pick a positive real value $\delta$ such that $\mywidth{P} = \delta$.
For each gadget in $P$, we compute its \vtp 
using any polynomial time algorithm.
Let $k$ be the total number of rectangles from \vtp of each gadget. 
Then, $F$ is satisfiable if and only if the answer to $\myth(P,\delta, k)$ is \true.
Thus, \myth is NP-hard.

So we conclude with the following theorem.

 \begin{theorem}
 \label{thm:np-hardness}
    \myth is NP-complete.
 \end{theorem}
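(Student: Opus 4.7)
The plan is to prove both membership in NP and NP-hardness for \myth. Membership is straightforward: any rectangular partition $\pt$ of $P$ under the vertex incidence has $\card{\pt}=O(n)$ since each cut is anchored at a reflex vertex of $P$, so a verifier can confirm $\card{\pt}\le k$ and $\mywidth{\pt}\ge\delta$ in polynomial time from a guessed $\pt$.

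For NP-hardness I would reduce from \plsat. Given an instance $F$ with variable set $X$ and clause set $C$, I would first embed the planar incidence graph $G(F)$ as an orthogonal grid drawing of polynomial size, then place a \vgadget at each variable, a \cgadget at each clause, and a chain of connection gadgets (turn, split, inverter, phase shifter) along each edge of $G(F)$. This yields a rectilinear polygon $P$ with holes, of polynomial size, in which every hole has shorter side $\delta/2$ for a globally fixed $\delta$. I would then set $\delta := \mywidth{P}$ by construction and choose $k$ to be the total rectangle count obtained by taking a \vtp of each individual gadget under a (locally) satisfying configuration; this is well-defined because all non-clause gadgets admit the same rectangle count under either truth value.

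The heart of the argument is that any rectangular partition $\pt$ of $P$ with $\mywidth{\pt}\ge\delta$ must place a windmill (\vwind or \hwind) around every square hole, since any two parallel cuts at the endpoints of one hole side would yield a rectangle of smaller side at most $\delta/2$ (Figure~\ref{fig:windmill}(c)). I would then verify, gadget by gadget, that the \vgadget admits exactly two minimum-rectangle completions encoding \true and \false (Figure~\ref{fig:plsat_variable}); that each connection gadget correctly transmits, turns, splits, inverts, or phase-shifts the signal via the forced windmill orientation; and that the \cgadget admits a partition of width at least $\delta$ with the minimum rectangle count if and only if at least one of its three incident variables is \true (Figure~\ref{fig:plsat_3clauses_001}). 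The main obstacle will be the case analysis for the \cgadget: one must rule out every partition of its four-hole configuration under the all-false assignment, showing that any such partition either contains a rectangle of smaller side at most $\delta/2$ or uses strictly more than the minimum rectangle count, as exemplified by the two failing partitions in Figure~\ref{fig:plsat_3clauses_001}(c,d). Combining these gadget lemmas, $F$ is satisfiable if and only if $\myth(P,\delta,k)$ answers \true, which together with NP membership proves the theorem.
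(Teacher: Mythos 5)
Your proposal follows essentially the same route as the paper: NP membership via the $O(n)$ bound on $\card{\pt}$, and NP-hardness by reducing from \plsat using an orthogonal embedding of $G(F)$, the same five gadget types plus turn gadgets, square holes of side $\frac{\delta}{2}$ forcing windmill patterns, $\delta := \mywidth{P}$, and $k$ set to the sum of per-gadget \vtp rectangle counts, with satisfiability equivalent to a \true answer for $\myth(P,\delta,k)$. The gadget-by-gadget verification you flag as the remaining work (in particular the all-false case analysis for the \cgadget) is exactly the content the paper supplies via Figures~\ref{fig:plsat_variable}--\ref{fig:plsat_3clauses_001}, so there is no substantive divergence.
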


\section{Conclusion and open problems}
\label{sec:conclusion}
We gave an $O(n^3)$-time algorithm that computes an \mip of a given rectilinear polygon without holes.
We also gave an $O(n^3 \log^2 n)$-time algorithm that computes a \vtp of a given rectilinear polygon without holes.
Finally, we showed that the \vtp problem for rectilinear polygons with holes is NP-complete.
Two major open problems remain:
\begin{enumerate}
\item[A.] Does there exist a subcubic time algorithm that computes an \mip of a rectilinear polygon without holes?
\item[B.] If condition (2) is removed from \myth, is the problem still NP-hard? 
\end{enumerate}

Using coherence between uni-rectangle partitions, we could reduce the number of uni-rectangle partitions
to be checked to compute an \mip from $O(n^4)$ to $O(n^3)$, but not any further.
We conjecture that any algorithm that iterates over all uni-rectangle partitions to be checked runs in $\Omega(n^3)$ time in the worst case,
hence problem A is of great interest to us.

Our NP-hardness proof of \myth relies on both conditions (1) and (2) of \myth. 
There can be several partitions satisfying condition (1) for a given rectilinear polygon, 
but only the one correctly simulating the 3-SAT formula satisfies both (1) and (2).
So, one may need a more involved way of designing gadgets to achieve a reduction such as ours.
It would be surprising if problem B turns out to be polynomial-time solvable.

\bibliography{references.bib} 
\bibliographystyle{abbrv}
\end{document}